\DeclareMathOperator\erf{erf}
\newcommand{\mycomment}[1]{}
\newcommand{\lr}[1]{\left(#1\right)}
\newcommand{\slr}[1]{\left[#1\right]}
\newcommand{\expect}[1]{\mathbb E_{#1} }
\newcommand{\avg}[1]{\langle #1 \rangle}
\newcommand{\pder}[2][]{\ensuremath{\frac{\partial#1}{\partial#2}}} 
\newcommand{\der}[2][]{\ensuremath{\frac{d#1}{d#2}}} 
\newcommand{\bb}[1]{\boldsymbol{#1}}
\newtheorem{remark}{Remark}
\newtheorem{theorem}{Theorem}
\newtheorem{proposition}{Proposition}
\newtheorem{corollary}{Corollary}
\newtheorem{definition}{Definition}
\title{Guerra interpolation for place cells}
\author{Martino Salomone Centonze\footnote{Dipartimento di Matematica, Universit\`a di Bologna, Italy.}, Alessandro Treves\footnote{Cognitive Neuroscience, SISSA, Trieste, Italy.}, Elena Agliari\footnote{Dipartimento di Matematica, Sapienza Universit\`a di Roma, Italy}, Adriano Barra\footnote{Dipartimento di Scienze di Base ed Applicate per l'Ingegneria, Sapienza Universit\`a di Roma, Italy.}}
\date{October 2023}
\begin{document}

\maketitle

\begin{abstract}
Pyramidal cells that emit spikes when the animal is at specific locations of the environment are known as {\em place cells}: these neurons are thought to provide an internal representation of space via {\em cognitive maps}. Here, we consider the Battaglia-Treves neural network model for cognitive map storage and reconstruction, instantiated with McCulloch $\&$ Pitts binary neurons.
\newline
To quantify the information processing capabilities of these networks, we exploit spin-glass techniques based on Guerra's interpolation: in the low-storage regime (i.e., when the number of stored maps scales sub-linearly with the network size and the order parameters self-average around their means) we obtain an exact phase diagram in the  noise vs inhibition strength plane (in agreement with previous findings) by adapting the Hamilton-Jacobi PDE-approach. Conversely, in the high-storage regime, we find that -for mild inhibition and not too high noise- memorization and retrieval of an extensive number of spatial maps is indeed possible, since the maximal storage capacity is shown to be strictly positive. These results, holding under the replica-symmetry assumption, are obtained by adapting the standard interpolation based on stochastic stability and are further corroborated by Monte Carlo simulations (and replica-trick outcomes for the sake of completeness).  
\newline 
Finally, by relying upon an interpretation in terms of hidden units, in the last part of the work, we adapt the Battaglia-Treves model to cope with more general frameworks, such as bats flying in long tunnels. 

\end{abstract}

\section{Introduction}\label{sec:intro}
The hippocampus contributes in the brain to spatial recognition and particularly spatial memory. In particular, in rodents, hippocampal cells recorded during free foraging and other spatial tasks have turned out to be mainly correlated with the animal's position; this was first observed for pyramidal neurons in the CA1 area a long time ago \cite{Keefe}: each neuron fires when the animal is in one or more regions of the current environment. These neurons were therefore called \emph{place cells}, and the regions of space in which they get active were called \emph{place fields}. In other words, place cells provide an internal representation of the animal position and additional evidence shows that these neural representations play a crucial role in spatial memory \cite{PC-review1,PC-review2}.
 
Continuous-attractor neural networks (CANNs) (see e.g., \cite{Coolen}) offer a natural tool for simulating such complex systems \cite{PlaceCellOriginal,MonassonPlaceCellsPRL,MonassonPlaceCellsLong}.
In general, in a recurrent attractor neural network, some information (e.g., different locations in a certain environment) are encoded in the firing patterns of neurons and, for a suitable setting of interaction strengths among neurons, these patterns correspond to stationary states (attractors) of network dynamics. Thus, a stimulus which elicits the retrieval of previously-stored information (e.g., a detail of an experienced location) is expected to yield a stationary (or approximatively stationary) pattern for neuronal activity that codifies for further information (e.g., the stored location in relation to others) allowing for its use in behaviour and possibly for its consolidation in long-term storage. However, unlike simple attractor models, such as the Hopfield network (which can operate with very distributed patterns of activity in which e.g. half of the binary units are active), in a CANN the interaction between two units necessarily includes along with an excitatory term -- that depends on the similarity between their preferred stimuli (e.g., the proximity of their place fields) -- also a (long range) inhibition term -- that prevents all cells being active together. This arrangement enables a CANN to hold a continuous family of stationary states, rather than isolated ones. 
In the case of place cells, stationary states occur in the form of localized bump of activity (also referred to as \emph{coherent states}), peaked at a certain retrieved location. Ideally, retrievable locations span a continuous set of nearby values, although in practice finite size effects are known to impose a discretization or a roughening of the theoretically continuous manifold of attractor states \cite{Tsodyks1995,MonassonTreves2}.
This way a CANN is able to update its states (internal representations of stimuli) smoothly under the drive of an external input: several mathematical formulations for the generation of place-cells have been introduced in the past, as well as others describing the CANN that could be realized with populations of place cells \cite{Tsodyks1995,PlaceCellOriginal,battaglia1998attractor}. Much work has focused on the deviations expected in real networks from the idealized models \cite{Tsodyks1995,MonassonTreves2}.  Despite the intensive investigations that have been (and are still being\footnote{See e.g. \cite{Ale1,AABO-JPA2020,Contucci1,Contucci2,Jean1,Jean2,AuroBea1,AuroBea2,Fede1,Fede2,Alessia1,Alessia2,Dani1,Dani2,CarloMarc,Krotov,Kanter,Lenka,Huang,Taro1,Taro2} for a glance on recent findings (with a weak bias toward rigorous approaches).}) carried on by the statistical-mechanics community on Hebbian architectures for neural networks \cite{Amit,Rolls,Coolen},  no rigorous results have been obtained dealing with place cells and one of the goals of the current work is to fill this gap by rigorously analysing a model for a CANN with hippocampal place cells. 
Specifically, we focus on the model introduced by Battaglia and Treves \cite{battaglia1998attractor} but here, instead of considering real-valued activities for model neurons as in the original work,  units are binary variables, taking value $0$ (when resting) or $1$ (when firing), namely McCulloch-Pitts neurons.
\newline
Having in mind a rodent freely moving within a given (limited) environment (e.g., a rat in a box), the $i$-th place cell is active, $s_i = +1$, when the animal is at a location of such an  environment within the place field of that unit, while, away from that location, the place cell stays silent, $s_i =0$.
\newline
Unlike the analysis in \cite{MonassonTreves2}, each unit is taken to have one and only one place field, of standard size $a$ (as a fraction of the size of the environment). Now, the closer two adjacent place fields centers, the stronger the correlation between the corresponding place cell activities. This means that, in modelling the related neural dynamics, excitatory interactions among place cells have to be a function of the proximity between their fields, while inhibition is assumed to operate globally, to ensure the low firing rates suitable for the required coding level, parametrized by the place field size. 
Moreover, multiple maps are assumed to be stored in the network (and this can be obtained by combining different place fields in a Hebbian-like fashion) so that the same neuron can play an active role in the reconstruction of an extensive number of these stored maps.
\newline
We investigate the emerging computational capabilities collectively shown by such a network, by adapting mathematical techniques based on Guerra's interpolation (originally developed for spin glasses \cite{guerra_broken}) to these CANNs. The ultimate goal is to prove that such a model shows distinct phases, in one of which the neural network, for a relatively wide range of inhibition levels, is able to retrieve coherent states that provide an abstract representation of the outside space. 
In particular, we derive results for both the {\em low storage} (in which the number of maps stored in the network is sub-linear in the network size $N$) and the {\em high storage} (where a number of maps of order $N$ is stored) regimes. The former is tackled by the {\em mechanical analogy} \cite{HJ-Barra2010Aldo,HJ-Barra2013Gino,Barra-JSP2008}, namely an interpolation that maps the free energy of the model into an effective action to be addressed with tools pertaining to analytical, rather than statistical, mechanics (i.e. the Hamilton-Jacobi PDE).  The latter is investigated via the one-body interpolation, that is the classical Guerra's approach based on stochastic stability \cite{GuerraNN,Longo,glassy}, under the classical assumption of replica symmetry (namely under the assumption that the order parameters of the theory self-average around their means in the large network-size limit), and, in this regime, we also estimate the maximum storage capacity. Note that, for completeness, in the Appendix we also provide a replica-trick calculation \cite{Amit,MPV}  and we obtain the same results derived in the main text via Guerra's techniques.

Finally, we also inspect the network's capabilities in different scenarios than those which originally inspired the model, following recent experimental findings.
In fact, while in the past experimental protocols involved rather confined environments (see e.g. \cite{Leutgeb,ThompsonRat}), recently recording from e.g. rats running on long tracks (see e.g. \cite{Fenton2008,Rich2014}) and from  bats flying in long tunnels (see e.g. \cite{Batman1,MonassonTreves2}) show cells with multiple place fields of highly variable width and peak rate. Both parameters appear to approximately follow a log-normal distribution. In the last part of this work we show computationally that a minimal adaptation of the Battaglia-Treves model captures much of this new scenario too. In particular, driven by mathematical modelling rather than neuro-scientific evidence, we explore a duality of representation of the Battaglia-Treves model in terms of a bipartite network (known as {\em restricted Boltzmann machine} in Machine Learning jargon \cite{Hinton1985}) equipped with hidden {\em chart cells} (one per stored chart) and we show how -while the animal moves in this long environment- the different chart cells of this dual representation are active sequentially one after the other, tiling the  environment the animal is exploring.
\newline
\newline
The work is structured as follows. In Sec.~\ref{sec:model} we introduce the model and the major findings that we obtained in the present study. In particular, in Sec.~\ref{Sec:2.1} we discuss the Battaglia-Treves model on the circular manifold we will study, in Sec.~\ref{Sec:2.2} we introduce the observables we need to build a theory for this model (i.e. order and control parameters), in Sec.~\ref{Sec:2.3} we show the output of the theory, summarized via phase diagrams in the space of the control parameters and, finally, in Sec.~\ref{Citua} we prove the equivalence of such a model to a restricted Boltzmann machine equipped with  single-map selectively firing hidden neurons.
\newline
Sec.~\ref{sec:trois} is entirely dedicated to the underlying methodological aspects of such investigation, treating separately the low-load (Sec.~\ref{LowStorage}) and the high-load (Sec.~\ref{Guerra}) regimes. Next, in Sec.~\ref{sec:Tunnel}, we generalize the model and show its robustness w.r.t. the way the maps are coded and, thus, stored. Conclusions and outlooks are presented in Sec.~\ref{Conclusions}. Finally, App.~\ref{Replicas} contains the replica-trick analysis for the sake of completeness while App.~\ref{GuerraAPP} contains technical details of the analytical results presented in the main text. 


\section{The model: from definitions to computational capabilities}\label{sec:model}

The Battaglia-Treves model was introduced to describe the behavior of pyramidal neurons in a rodent hippocampus. In the original model \cite{battaglia1998attractor,PlaceCellOriginal} neuronal activity is represented by  continuous variables  linearly activated by their input and thresholded at zero, while here we consider $N$ McCulloch-Pitts neurons, whose activity is denoted as $s_i \in  \{0,1\}$ for $i \in (1,...,N)$, in such a way that, when $s_i=+1$ ($s_i=0$), the $i$-th neuron is spiking (quiescent).
As anticipated in Sec.~\ref{sec:intro}, the model is designed in such a way that a neuron fires when the rodent is in a certain region of the environment. 
\newline
The latter is described by a manifold equipped with a metric and uniformly partitioned into $N$ regions of (roughly) equal size, centered in $\vec{r}_i$, for $i \in (1,...,N)$, which are the cores of their place fields. When the agent modelling the rodent happens to be in the $i$-th core place field, the corresponding $i$-th neuron is expected to fire along with other neurons with their core place fields close by. Note that the behavior of different neurons is not independent: in fact, when the distance $|\vec r_i- \vec r_j|$ between $\vec{r}_i$ and $\vec{r}_j$ is small enough, the reciprocal influence between neurons $i$ and $j$ is strong, and they are  likely to be simultaneously active.
This is captured by the definition of the interaction matrix (i.e., the synaptic coupling in a neural network jargon) $\bb J$, whose element $J_{ij}$ represents the interaction strength between the neurons $(i,j)$ and is taken proportional to a given kernel function $\mathcal K(|\vec r_i- \vec r_j|)$ depending on the distance between the related place fields:
\begin{equation} \label{eq:single}
    J_{ij} \propto \mathcal K (|\vec r_i- \vec r_j|).
\end{equation}
Notice that the core place fields in any given environment are assumed to have been already learnt or assigned, thus their coordinates do not vary with time (i.e., they are {\em quenched} variables in a statistical mechanics jargon), while the state $\bb s = (s_1, s_2, ..., s_N)$ of the neurons (i.e. their neural activity) is a dynamical quantity. 

Following \cite{battaglia1998attractor}, and as standard for Hebbian neural networks \cite{Hopfield,amit1985storing,AABO-JPA2020,bovier2012mathematical}, we consider multiple arrangements of place fields, also referred to as {\em charts} or {\em maps} corresponding to distinct environments, e.g., different experimental room in which a rodent has been left to forage \cite{TrevesPNAS}, such that each neuron participates in each chart, and thus can contribute to the recognition of several charts. The synaptic matrix, accounting for $K$ charts, is obtained by summing up many terms like the one in \eqref{eq:single}, that is   
\begin{equation}\label{eq:J}
    J_{ij} \propto \sum_{\mu=1}^K \mathcal K (|\vec r^\mu_i- \vec r^\mu_j|),
\end{equation}
where $\vec{r}_i^{\mu}$ represents the core of the $i$-th place field in the $\mu$-th map.
The kernel function $\mathcal K$ is taken the same for each chart (i.e., $\mathcal K_{\mu} \equiv \mathcal K$) and the variation from one chart to another is only given by a different arrangement of core place fields  $\{\vec r^\mu_i\}_{i=1,...,N}^{\mu=1,...,K}$, in practice a reshuffling. Remarkably, here, different {\em charts} play a role similar to that played by different \emph{patterns} in the standard Hopfield model \cite{Hopfield,AGS,amit1985storing} and, along the same lines, we assume that the stored maps are statistically independent \cite{TrevesPNAS}: in other words, the position of the core place field associated to a certain neuron in one map, say $\vec r_i^{\mu}$, is uncorrelated to the core place field associated to the same neuron in any other map, say $\vec r_i^{\nu}$, at least when the network storage load $K/N$ is small enough.\\
\begin{figure}[tb] 
    \centering
    \includegraphics[width=10cm]{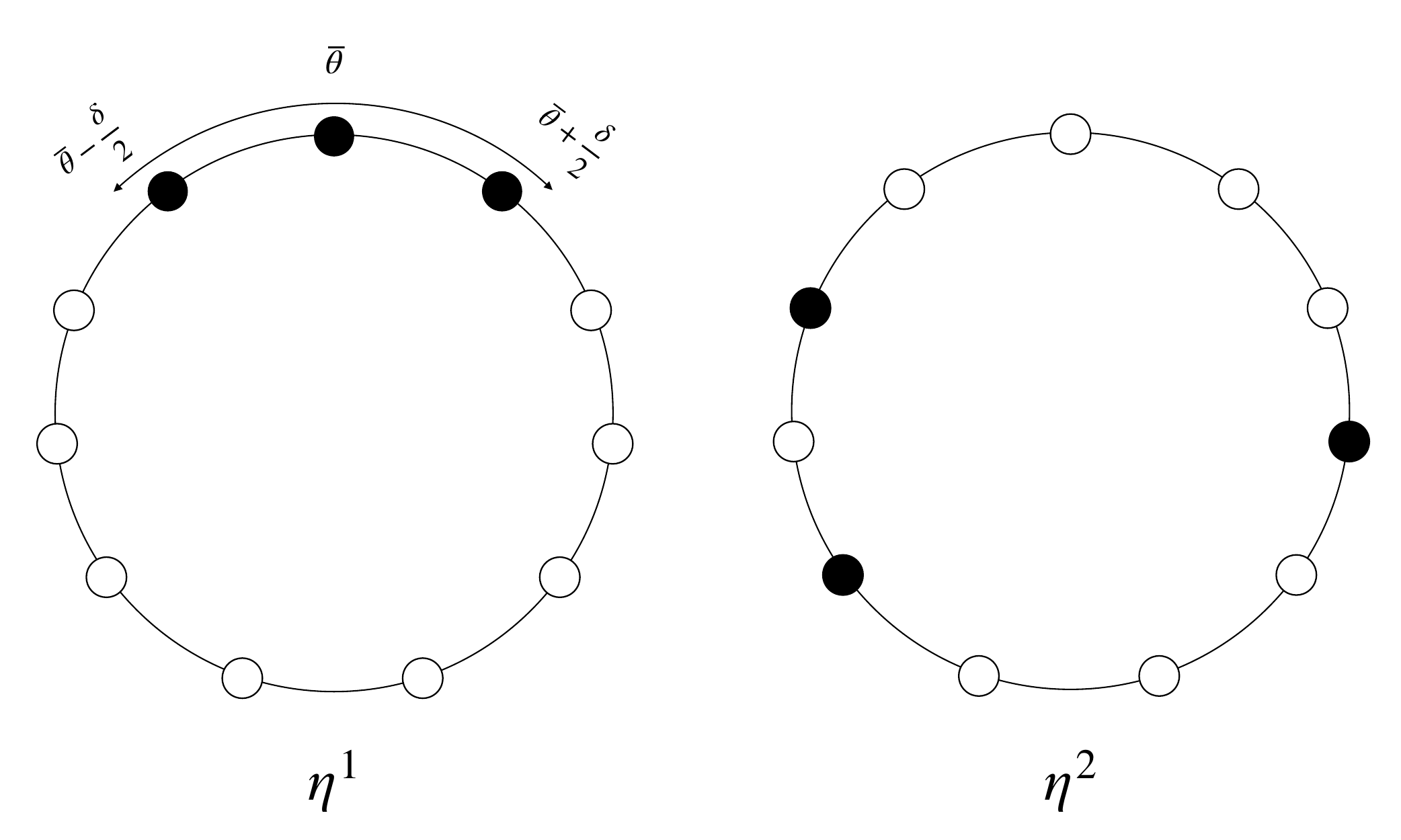}
    \caption{Two examples of maps $\eta^1$ and $\eta^2$. A coherent state  is shown in $\eta^1$ as all neurons that lie within the $[\overline \theta - \delta/2,\overline \theta + \delta/2]$ interval are activated (here displayed as black dots), while the others are quiescent (in white dots). The same firing pattern of neurons, that looks coherent in the first map $\eta^1$, looks disordered in the other map $\eta^2$. Note that the centers of the place fields are scattered roughly uniformly along the unitary circle $S^1$ and that the width of all the place fields is roughy the same in this first scenario.}
    \label{fig:placemaps}  
\end{figure}

\subsection{Coherent states, cost function, and other basic definitions}\label{Sec:2.1}
Let us now describe more explicitly our framework: for illustrative purposes, here we restrict ourselves to a one-dimensional manifold given by the unit circle $S^1$ (as often done previously, see e.g., \cite{MonassonPlaceCellsPRL,MonassonPlaceCellsLong,MonassonTreves19,MonassonTreves2}), in such a way that the position of the $i$-th place field in the $\mu$-th chart is specified by the angle $\theta^\mu_i\in [-\pi,\pi]$ w.r.t. a reference axes and it can thus be expressed by the unit vector $\vec \eta^\mu_i$ defined as
\begin{align}\label{eq:etas}
    \eta^\mu_i = \lr{\cos \theta_i^\mu, \sin \theta_i^\mu},
\end{align}
where we dropped the arrow on top to lighten the notation.  
\newline
In this coding, the trivial states $\bb s=(1,..,1)$ and $\bb s=(0,..,0)$, beyond being biologically unrealistic, do not carry information as they lead to a degenerate activation in such a way that the physical location of the animal cannot be represented by the network.
\newline
As opposed to these trivial states, we define
\begin{definition}\label{CoherentState}(Coherent state) 
Given a set of $N$ McCulloch-Pitts neurons constituting a CANN, their state $\boldsymbol s = (s_1, ..., s_N) \in \{0,1\}^N$ is said to be a coherent state centered around $\overline{\theta}\in[0,2\pi]$ with width $\delta$ if, for each $i=1,..,N$:
    \begin{align}
        s_i = \begin{cases}
            1,\:\:\: |\theta^1_i - \overline \theta|\leq \delta/2,\\
            0,\:\:\: |\theta^1_i - \overline \theta|>\delta/2,
        \end{cases}
    \end{align}
where $\theta^1_i$ is the coordinate of the $i$-th place cell in the first map $\mathbf \eta^1$ that we used as an example. In other words, a coherent state in a CANN plays the same role of a retrieval state in an ANN \cite{AGS}.
\end{definition}
Notice that the network is able to store several place fields for multiple maps and to retrieve a single place field, by relaxing on the related coherent state of neural activities: as shown in Figs.~\ref{fig:placemaps}-\ref{fig:bump}, such a coherent state for $\mu=1$ looks random in others maps (e.g. $\mu=2$ in the picture), provided that these are independent and that their number in the memory is not too large.  
Therefore, the application of an external stimulus in the $\mu-$th map corresponds to inputting the Cauchy condition for neural dynamics ({\em vide infra}) and by setting the initial neuronal configuration such that $s_i=+1$ if and only if the associated coordinate in the $\mu-$th map is in the interval $\theta_i^\mu \in [\overline{\theta}-\delta/2, \overline{\theta}+\delta/2]$, as sketched in Fig.~\ref{fig:placemaps}. 

Following \cite{battaglia1998attractor}, keeping in mind that the kernel has to be a function of a distance among place field cores on the manifold and that the latter is the unitary circle, we now make a specific choice for the interaction kernel that will be implemented in the network:  to take advantage from the {\em Hebbian experience} \cite{Amit}, the interacting strength between neurons will be written as\footnote{We can assume that the place field cores cover roughly uniformly the embedding space, namely the angles $\bar{\theta}^{\mu}$ are uniformly distributed along unitary circle (sampled from $\mathcal{U}_{[-\pi,\pi]}$), yet, there is no need to introduce a prior for them as, given the rotational invariance of the CANN kernel, a uniform dislocation of place fields is automatically fulfilled.}
\begin{align}\label{eq:kernel1}
    J_{ij} =  \frac{1}{N}\sum_{\mu=1}^K (\eta^\mu_i \cdot \eta^\mu_j).
\end{align}
\begin{figure}[!t] 
    \centering
    \includegraphics[width=16cm]{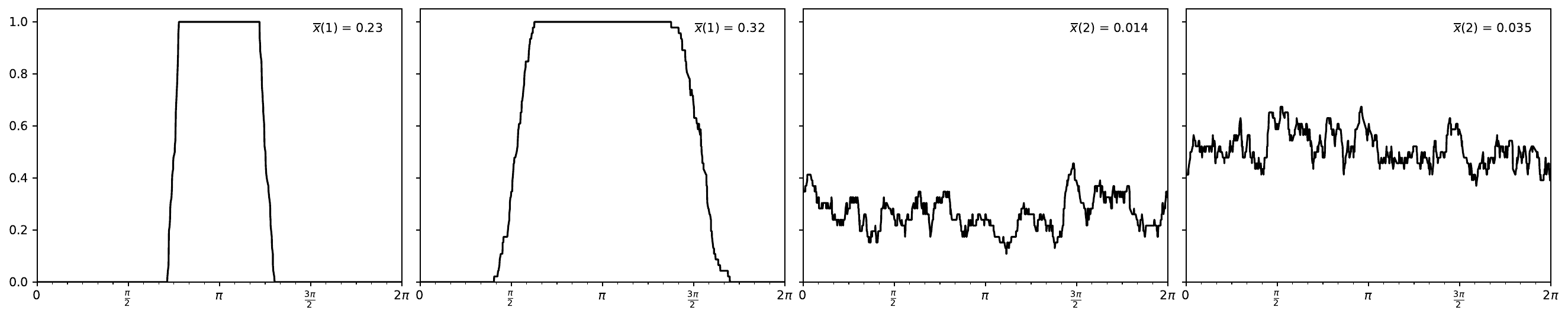}
    \caption{Neuronal activity in the first two maps $\mu=1,2$.
    The neurons coordinates $\theta_i^\mu\in[0,2\pi]$ in the given map are displayed along the x-axis, while the neuronal activity is shown on the y-axis  and computed as the spatial average of the neuronal states in a spatial window of fixed length).\\
    First panel from the left: we provide the model with the initial state $\mathbf s_0$ (a {\em bump} in the map $\mu=1$ centered around $\theta=\pi$): note that in other maps (e.g. $\mu=2$, third panel) this input appears as a random state. This Cauchy condition for the neural dynamics (i.e. the stochastic process \eqref{dinamical} driven by the Hamiltonian \eqref{eq:H_Hop}) allows the network to evolve toward a stationary state $\mathbf s_{out}$, that, in the map $\mu=1$, is the coherent state shown on the second panel from the \emph{left}, while it still appears random in other maps (as e.g. $\mu=2$ in the fourth panel). The map overlap $\overline x$ is also shown in the legend of each panel: coherent states have higher overlap than random states, as expected.}
    \label{fig:bump}  
\end{figure}
\\
Notice that the Hebbian kernel \eqref{eq:kernel1} is a function of the relative euclidean distance of the $i,j$ neuron's coordinates $\theta_i,\theta_j$ in each map $\mu$: to show this, one can simply compute the dot product as $\eta^\mu_i\cdot\eta^\mu_j=\cos(\theta_i^\mu)\cos(\theta_j^\mu)+\sin(\theta_i^\mu)\sin(\theta_j^\mu)=\cos(\theta_i^\mu-\theta_j^\mu)$.
\\
We can now define the  Cost function (or Hamiltonian, or energy in a physics jargon) of the model as given by the next
\begin{definition} \label{def:cost_function} (Cost function)
Given $N$ binary neurons $\bb s = (s_1, ..., s_N) \in \{ 0,1 \}^N$, $K$ charts $\bb \eta = (\eta^1,...,\eta^K)$ with  $\eta^{\mu} \in [-1, +1]^N$ for $\mu \in (1,...,K)$ encoded with the specific kernel \eqref{eq:kernel1}, and a free parameter $\lambda \in \mathbb{R}^+$ to tune the global inhibition within the network, the Hamiltonian for chart reconstruction reads as\footnote{The symbol `$\approx$' in eq. \eqref{eq:H_Hop} becomes an exact equality in the thermodynamic limit, $N\to \infty$, where, splitting the summation as $\sum_{i<j} = 1/2\sum_{i,j}^N + \sum_{i=1}^N$, the last term, being sub-linear in $N$, can be neglected.}
\begin{align}\label{eq:H_Hop}
    H_N(\boldsymbol s | \boldsymbol \eta) = - \sum_{i<j}^{N,N}J_{ij} s_i s_j + \frac{(\lambda-1)}{N}  \sum_{i<j}^{N,N} s_i s_j \approx -\frac{1}{2N} \sum_{\mu=1}^K \sum_{i,j=1}^{N,N} (\eta^\mu_i \cdot \eta^\mu_j) s_i s_j + \frac{(\lambda -1)}{2N}  \sum_{i,j=1}^{N,N} s_i s_j.
\end{align}
\end{definition}
Notice that the factor $N^{-1}$ in front of the sums ensures that the Hamiltonian is extensive in the thermodynamic limit $N\to \infty$ and the factor $1/2$ is inserted in order to count only once the contribution of each couple. 
Also, the hyper-parameter $\lambda$ tunes a source of inhibition acting homogeneously among all pairs of neurons and prevents the network from collapsing onto a fully firing state $\bb s = (1, ..., 1)$. In fact, for $\lambda \gg 1$ the last term at the r.h.s. of eq.~ \eqref{eq:H_Hop} prevails,  global inhibition dominates over local excitation and the most energetically-favorable configuration is the fully inhibited one (where all the neurons are quiescent $\bb s=(0,..,0)$); in the opposite limit, for $\lambda \ll 1$, the most energetically-favorable configuration is  the totally excitatory one (where all the neurons are firing $\bb s = (1, ...,1)$). 

\bigskip

Once a Hamiltonian is provided, it is possible to construct a Markov process for the neural dynamics  by introducing a source of noise $\beta \in \mathbb{R}^{+}$ (i.e., the temperature in a physics jargon) in the following master equation
\begin{equation}\label{dinamical}
\mathbf{P}_{t+1}(\boldsymbol{s}\vert \boldsymbol \eta) = \sum_{\boldsymbol{s'}} W_{\beta}(\boldsymbol{s},\boldsymbol{s'}\vert \boldsymbol \eta)\mathbf{P}_t(\boldsymbol{s'}\vert \boldsymbol \eta),
\end{equation}
where $\mathbf{P}_t(\boldsymbol{s} \vert \boldsymbol \eta)$ represents the probability of finding the system in a configuration of neural activities $\bb s$ at time $t$, being $\bb \eta$ the set of $K$ charts.  $W_{\beta}(\boldsymbol{s},\boldsymbol{s'}\vert \boldsymbol \eta)$ represents the transition rate, from a state $\boldsymbol{s'}$ to a state $\boldsymbol{s}$ and it is chosen in such a way that the system is likely to lower the energy \eqref{eq:H_Hop} along its evolution (see e.g. \cite{Coolen} for details): this likelihood is tuned by the parameter $\beta$ such that for $\beta \to 0^+$ the dynamics is a pure  random walk in the neural configuration space (and any configuration is equally likely to occur), while for $\beta \to +\infty$ the dynamics becomes a  
steepest descend toward the minima of the Hamiltonian and the latter (in this deterministic limit) plays as the Lyapounov function for the dynamical process \cite{Coolen}. Remarkably, the symmetry of the interaction matrix, i.e. $J_{ij}=J_{ji}$, is enough for detailed balance to hold 
such that the long-time limit of the stochastic process \eqref{dinamical} relaxes to the following Boltzmann-Gibbs distribution
\begin{equation}\label{BGmeasure}
\lim_{t \to \infty}\mathbf{P}_t(\boldsymbol{s} \vert \boldsymbol \eta) = \mathbf{P}(\boldsymbol{s} \vert \boldsymbol \eta) = \frac{e^{-\beta H_N(\boldsymbol s | \boldsymbol \eta)}}{Z_N(\beta, \boldsymbol \eta)}
\end{equation}
where $Z_N(\beta, \boldsymbol \eta)$ is the normalization factor, also referred to as {\em partition function}, as stated by the next 
\begin{definition}(Partition function)
Given the Hamiltonian $H_N(\boldsymbol s | \boldsymbol \eta)$ of the model \eqref{eq:H_Hop} and the control parameters $\beta$ and $\lambda$ ruling the neuronal dynamics, the associated partition function $Z_{N}(\beta, \lambda, \bb \eta)$ reads as
\begin{equation}
    Z_{N}(\beta, \lambda, \bb \eta) = \sum_{\{\bb s \}}^{2^N} e^{-\beta H(\bb s| \bb \eta)}.
\end{equation}
\end{definition}

\subsection{Order parameters, control parameters and other thermodynamical observables}\label{Sec:2.2}
We now proceed by introducing the macroscopic observables useful to describe the emerging behavior of the system under investigation.
\begin{definition}(Order parameters)
In order to assess the quality of the retrieval of a given chart $\boldsymbol \eta^{\mu}$, it is useful to introduce the two-dimensional \emph{order parameter} $x_\mu,\ \mu \in (1,...,K)$  defined as 
\begin{align}
\label{eq:one_order}
    x_\mu = \frac{1}{N} \sum_{i=1}^N \eta^\mu_i s_i \in [-1/\pi, +1/\pi]^2.
\end{align}
Another order parameter to consider, that does not depend on the place fields, is the mean activity of the whole population of neurons, defined as
\begin{align}
\label{eq:two_order}
    m = \frac{1}{N} \sum_{i=1}^N s_i \in [0, 1].
\end{align}
Finally, the last order parameter to introduce is the two-replica overlap  (whose usage is required solely when investigating the high-storage regime), defined as
\begin{align}
\label{eq:three_order}
    q_{ab}= \frac{1}{N} \sum_i s_i^a s_i^b \in [0, 1],
\end{align}
where $a$ and $b$ are the replica indices, namely they label different copies of the system, characterized by the same realization of maps (i.e., by the same quenched disorder in a spin-glass jargon). 
\end{definition}
Notice that, since we are working with Boolean neurons ($\bb s \in \{0,1\}^N$), $q_{11}$ precisely equals $m$ (the level of activity of the network) and the overlap $q_{12}$ can only take values in $0\leq m^2 \leq q_{12}\leq m\leq 1$. In particular, the limiting case $q_{12}=m$ denotes a frozen configuration (which can be a coherent state or not) while $q_{12}=m^2$ indicates that the two replicas are maximally uncorrelated. 
\newline
As a consistency check, we notice that the module of $x_\mu$, that is $|x_\mu| = \sqrt{x_\mu \cdot x_\mu}$, is strictly positive for the coherent states, while for the trivial states (i.e., the pure excitatory and the pure inhibitory configurations), it is vanishing.
%
When all the neurons are quiescent this holds straightforwardly, while, in the case where all neurons are firing, $x_\mu$ is just the sum of each coordinate vector $\eta^\mu_i$ in the given map and this quantity is close to zero (and exactly zero in the thermodynamic limit $N\to \infty$) when the coordinate vectors associated to each neuron in the given map are statistically independent and the angles $\theta^\mu_i$ sampled from a uniform distribution in the interval $[-\pi,\pi]$.
Conversely, for a coherent state, the vector $x_\mu$ is just the sum of the coordinates $\eta^\mu_i$ on the sites where the neurons are active, i.e., $x_\mu \equiv (x_\mu^{(1)}, x_\mu^{(2)}) = \frac{1}{N}\sum_{i|s_i = 1} (\cos \theta^\mu_i, \sin \theta^\mu_i) $. In the thermodynamic limit we can rewrite it as an integral, namely
\begin{align*}
    &x^{(1)}_\mu \underset{N \to \infty}{\sim} \frac{1}{2\pi} \int_{-\delta/2}^{\delta/2} d\theta \cos \theta = \frac{1}{\pi} \sin \delta/2,\\
    &x^{(2)}_\mu \underset{N \to \infty}{\sim} \frac{1}{2\pi} \int_{-\delta/2}^{\delta/2} d\theta \sin \theta = 0,
\end{align*}
hence we get $| x_\mu | = \frac{1}{\pi} \sin \delta/2$. The latter is positive for $0<\delta<2\pi$ and reaches its maximum at $\delta = \pi$, corresponding to $|x_\mu | \underset{\delta=\pi}{=} \frac{1}{\pi}\sim 0.32$. This situation represents a coherent state of size $\delta=\pi$. This simple result shows that $| x_\mu |$ is an informative order parameter for this network, as it reports the occurrence of coherent states in the place fields.
\begin{definition}(Boltzmann and quenched averages)
Given a function $f(\boldsymbol{s})$, depending on the neuronal configuration $\bb s$, 
the Boltzmann average, namely the average over the distribution \eqref{BGmeasure}, is denoted as $\omega( f(\boldsymbol{s}) )$ and defined as
$$
 \ \omega(f(\boldsymbol{s}))= \frac{\sum_{\{s \}}^{2^N} f(\boldsymbol{s}) e^{-\beta H_N(\boldsymbol{s}|\boldsymbol{\eta})}}{\sum_{\{s \}}^{2^N} e^{-\beta H_N(\boldsymbol{s}|\boldsymbol{\eta})}}.
$$
Further, given a function $g(\bb \eta)$ depending on the realization of the $K$ maps, we introduce the quenched average, namely the average over the realizations of the maps, that is denoted as $\mathbb E_\eta [ g(\bb \eta) ]$ or as $\langle g(\bb \eta) \rangle_{\eta}$ according to the context, and it is defined as
\begin{align}\label{eq:mapsexpectation}
    \mathbb E_\eta [ g(\bb \eta) ] \equiv \avg{g( \bb \eta)}_\eta = \int_{-\pi}^\pi\slr{\prod_{i,\mu=1}^{N,K}  \frac{d\theta^\mu_i}{2\pi}} g(\bb \eta(\bb \theta)).
\end{align}
In the last term we highlighted that maps are determined by the set of angles $\bb \theta$, see \eqref{eq:etas}.
This definition of the quenched average follows from the assumption that the maps are statistically independent and therefore the expectation over the place fields factorizes over the sites $i=1,..,N$ and over the maps $\mu=1,..,K$.
\newline
Finally, we denote with the brackets $\langle \cdot \rangle$ the average over both the Boltzmann-Gibbs distribution and the realization of the maps, that is
$$
\langle \cdot \rangle = \mathbb E_\eta [\omega(\cdot )].
$$
\end{definition}
It is worth deriving some relations that will become useful in the following, in particular, we evaluate the quenched average of a function $g(\bb \eta)$ whose dependence on $\bb \eta$ occurs via the scalar product $\eta^{\mu}_i \cdot a$, where $a$ is a two-dimensional vector with module $|a|$ and direction specified by the versor $\hat a$, that is, $a = |a| \:\hat a$. Then, dropping the scripts $\mu$ and $i$ in $\eta^{\mu}_i$, without loss of generality, we get\footnote{These relations can be easily derived by applying the change of variable $t=\cos \theta$ in the integrals, where $\theta$ is the angle between the two vectors involved in the scalar product, and using $|\eta| = 1$. Also notice that, because of the identity $\frac{1}{\pi}\int_{-1}^1 \frac{dt}{\sqrt{1-t^2}} = 1$, we have $\avg{ \exp(\eta \cdot a) }_\eta \sim 1 + \frac{|a|^2}{4} + \mathcal O(|a|^4)$, for $|a| \to 0$.}
\begin{align}
    &\avg{ g(\eta \cdot a) }_\eta = \int_{-\pi}^\pi \frac{d\theta}{2\pi} \:g(|a| \cos \theta) = \frac{1}{\pi}\int_{-1}^1 \frac{dt}{\sqrt{1-t^2}} \: g(|a| \:t),\label{eq:exp1}\\
    &\avg{ \eta ~ g(\eta \cdot a) }_\eta = \int_{-\pi}^\pi \frac{d\theta}{2\pi} \: \hat a\cos\theta \: g(|a| \cos \theta) = \frac{\hat a}{\pi}\int_{-1}^1 \frac{dt}{\sqrt{1-t^2}} \: t\: g(|a|\: t).\label{eq:exp2}
\end{align}

The relations provided in eq. \eqref{eq:etas}, \eqref{eq:mapsexpectation} and \eqref{eq:exp1}-\eqref{eq:exp2} are valid for the $2$-dimensional unitary circle but can be easily generalized in $d$-dimension. Then, the given map $\eta^\mu$ is a unit vector on the (hyper-)sphere $S^{d-1}$ as $\eta^\mu \in \mathbb R^d, \:\: |\eta^\mu|=1$,
and, in spherical coordinates, $\eta^\mu$ is a function of the angles $\mathbf \Omega = (\theta, \phi, ..)$: $\mathbf \eta^\mu_i = \mathbf \eta^\mu_i(\mathbf \Omega^\mu_i)$. Notice that the dot product of two maps, \emph{i.e.} $\eta^\mu_i \cdot \eta^\mu_j = \cos\gamma$, is still a function of the relative angle $\gamma$ between the two unit vectors, hence our requirement for the kernel function of the model is respected also in $d$-dimension.\\
Then, we introduce the volume form $d\omega_d$ on $S^{d-1}$, which in spherical coordinates can be written as
\begin{align}\label{eq:vform}
    d\omega_d = (\sin\theta)^{d-2} \:d\theta d\omega_{d-1}, \:\:\theta\in[0,\pi].
\end{align}
The expectation over the maps \eqref{eq:mapsexpectation} can therefore be generalized in $d-$dimensions as follows
\begin{align}
    \avg{g( \bb \eta)}_{\eta \in S^{d-1}} = \int \slr{\prod_{i,\mu=1}^{N,K}  \frac{d\omega^\mu_i}{|S^{d-1}|}} g(\bb \eta(\bb \omega)),
\end{align}
where the normalization factor $|S^{d-1}|$ is the volume of the sphere, which is computed by integrating \eqref{eq:vform}, and reads
\begin{align*}
    |S^{d-1}| =\int d\omega_d = \frac{2\pi^\frac{d}{2}}{\Gamma\lr{\frac{d}{2}}},
\end{align*}
where $\Gamma$ is the gamma function. Further, the relations \eqref{eq:exp1}-\eqref{eq:exp2} can be generalized as
\begin{align}
    &\avg{ g(\eta \cdot a) }_\eta = \frac{1}{|S^{d-1}|}\int d\omega_d \:g(\eta \cdot a) = \Omega_d \int_{-1}^1 dt\:\lr{1-t^2}^\frac{d-3}{2} \: g(|a| \:t),\label{eq:exp1_d}\\
    &\avg{ \eta ~ g(\eta \cdot a) }_\eta = \hat a\:\Omega_d \int_{-1}^1 dt\: t\lr{1-t^2}^\frac{d-3}{2} g(|a|\: t),\label{eq:exp2_d}
\end{align}
where we performed the change of variables $t=\cos\theta$ (with $\theta$ being the angle between $a$ and $\eta$), after which and the factor $\Omega_d$ emerges as:
\begin{align}
    \Omega_d = \frac{|S^{d-2}|}{|S^{d-1}|} = \frac{\Gamma \lr{\frac{d}{2}}}{\sqrt \pi \:\Gamma \lr{\frac{d-1}{2}}}.
\end{align}
Notice that for $d=2$ one has $\Omega_2 = \frac{1}{\pi}$, restoring the relations \eqref{eq:exp1}-\eqref{eq:exp2}.
Notice also that the series expansion of \eqref{eq:exp1_d} reads $\avg{ \exp(\eta \cdot a) }_{\eta \in S^{d-1}} \sim 1 + \frac{|a|^2}{2d} + \mathcal O(|a|^4)$.

\begin{definition}\label{selfaverage}(Replica symmetry)
We assume that, in the thermodynamic limit $N \to \infty$, all the order parameters self-average around their mean values, denoted by a bar, that is
\begin{eqnarray}
\lim_{N \to \infty} \mathbf P(x_\mu) &=& \delta\left(x_\mu  -\overline{x}_{\mu} \right), \:\:\forall \mu \in (1,...,K),\\
\lim_{N \to \infty}\mathbf P(m) &=& \delta\left(m -\overline{m} \right),\\
\lim_{N \to \infty} \mathbf  P(q_{12}) &=& \delta\left(q_{12} - \overline q_{2} \right),\\
\lim_{N \to \infty} \mathbf  P(q_{11}) &=& \delta\left(q_{11} - \overline q_{1} \right).
\end{eqnarray}
This assumption is the so-called {\em replica symmetric approximation} in spin-glass jargon, see e.g., \cite{Barra-JSP2008,GuerraNN} or {\em concentration of measure} in probabilistic vocabulary \cite{talagrand2003spin}.
\end{definition}


\begin{definition}(Storage capacity)
The storage capacity of the network is denoted as $\alpha$ and defined as
\begin{align}
    \alpha = \frac{K}{N}
\end{align}
The regime where the number of stored charts scales sub-linearly with the network size, i.e. where $\alpha=0$, is referred to as low-storage, while the regime where the number of stored charts scales linearly with the network size, i.e. where $\alpha>0$, is referred to as high-storage.
\end{definition}
The storage capacity $\alpha$, along with the noise level $\beta$ and the inhibition parameter $\lambda$, constitute the control parameters of the system under study and, by tuning their values, the behavior of the (mean values of the) order parameters $\overline{x}, \overline{m}, \overline{q}$ changes accordingly. 
\newline
To quantify their evolution in the $(\alpha,\beta,\lambda)$ space, we further introduce the main quantity for our investigation, that is
\begin{definition}(Free energy)
The intensive free-energy of the Battaglia-Treves model equipped with McCulloch $\&$ Pitts neurons is defined as
\begin{equation}\label{eq:Adefinition}
    A_{N,K}(\beta,\lambda)= \frac{1}{N}\mathbb{E}_{\eta}\ln  Z_{N}(\beta, \lambda, \bb \eta)=\frac{1}{N}\mathbb{E}_{\eta}\ln \sum_{\{s\}}^{2^N}\exp\left(-\beta H_N(\boldsymbol{s}|\boldsymbol{\eta})\right)  \end{equation}
and, in the thermodynamic limit, we write
\begin{equation}\label{eq:Adefinition}
    A(\alpha,\beta,\lambda)=\lim_{N\to \infty} A_{N,K}(\beta,\lambda).
\end{equation}
\end{definition}
The explicit knowledge of the free energy in terms of the control parameters $\alpha, \beta, \lambda$ and of the (expectation of the) order parameters $\overline{x}, \overline{m}, \overline{q}_1, \overline{q}_2$ is the main focus of the present investigation. In fact, once its explicit expression is obtained, we can extremize the free energy w.r.t. the order parameters to obtain a set of self-consistent equations for their evolution in the space of the control parameters: the study of their solutions allows us to paint the phase diagram of the model and thus to know {\em a priori} in which regions in the $(\alpha, \beta, \lambda)$ space, the charts can be successfully retrieved by the place cells as we now discuss.

\subsection{Phase diagram of the Battaglia-Treves model with McCulloch $\&$ Pitts neurons}\label{Sec:2.3} 

We recall that the quenched average of the free energy for the standard Battaglia-Treves model has been computed in the thermodynamic limit by using the replica trick at the replica symmetric level of approximation, see e.g.  \cite{battaglia1998attractor,MonassonPlaceCellsPRL}, and a purpose of the present paper is to recover the same expression of the free energy for the current version of the model (where neurons are binary) by adopting the (mathematically more-controllable) Guerra interpolation: in Sec.~\ref{LowStorage} we address the simpler low-storage regime, by adapting to the case the Hamilton-Jacobi approach \cite{HJ-Barra2010Aldo,HJ-Barra2013Gino}  (see also the works by Mourrat and Chen for a sharper mathematical control \cite{HJ-Chen22,HJ-Mourrat21,HJ-Mourrat22,HJ-Mourrat23book}) and in Sec.~ \ref{Guerra} we address the more challenging high-storage regime, by adapting to the case the standard interpolation technique based on stochastic stability \cite{glassy,GuerraNN,guerra_broken}; in App.~\ref{Replicas} we perform the evaluation of the quenched free energy also via the replica trick, finding overall agreement among the results obtained with the various approaches. 
\newline
Whatever the route, the main result  can be summarized by the next
\begin{theorem}
In the thermodynamic limit, the replica-symmetric quenched free-energy of the Battaglia-Treves model, equipped with McCulloch-Pitts neurons as defined in eq. \eqref{def:cost_function}, for the $S^{d-1}$ embedding space, can be expressed in terms of the (mean values of the) order parameters $\overline{m}, \overline{x}, \overline{q}_1, \overline{q}_2$ and of the control parameters $\alpha, \beta, \lambda$, as follows
\begin{eqnarray}\label{eq:ARS}
    A^{RS}(\alpha,\beta,\lambda) &=& - \frac{\beta}{2} (1-\lambda) \overline m^2 - \frac{\beta}{2} \overline x^2 - \frac{\alpha \beta}{2} \frac{\overline q_1 - \frac{\beta}{d}(\overline q_1-\overline q_2)^2}{\lr{1-\frac{\beta}{d} (\overline q_1-\overline q_2)}^2} - \frac{\alpha d}{2}\ln\lr{1-\frac{\beta}{d}(\overline q_1-\overline q_2)} +   \\  \nonumber
   &+&\frac{\alpha\beta}{2}\frac{\overline q_2}{1-\frac{\beta}{d}(\overline q_1-\overline q_2)}   + \mathbb E_\eta \int Dz \ln\lr{ 1 + \exp\lr{\beta(1-\lambda) \overline m +\beta \overline x \cdot \eta + \beta \frac{ \frac{\alpha}{2} + \sqrt{\frac{\alpha \overline q_2}{d}} \:z}{1-\frac{\beta}{d}(\overline q_1-\overline q_2)} }},
\end{eqnarray}
where the superscript $RS$ reminds us that $A^{RS}(\alpha,\beta,\lambda)$ is the replica symmetric approximation of the true free energy\footnote{The point here is that these type of neural networks are spin glasses in a statistical mechanical jargon, hence Parisi's replica symmetry breaking is expected to occur in the low noise and high storage limit \cite{Amit}. Yet, in this first investigation we confine ourselves in providing an exhaustive picture of the RS scenario, that is usually the first to be formalized.}.
The mean values $\overline m$, $\overline q_1$, $\overline q_2$, $\overline x$ appearing in the above expression have to extremize the free-energy, which implies that their values are obtained as solutions of the constraint  $\nabla_{\overline{x},\overline{m},\overline q_1,\overline q_2} A^{RS}(\alpha,\beta,\lambda)=0$, that yields the following self-consistency equations 
\begin{align} 
    \label{eq:tre}
    &\langle x \rangle = \overline x = \int d\mu(z) \: \avg{\eta \:\sigma\lr{ \beta h (z)}}_\eta,\\
    \label{eq:uno}
    &\langle q_{11}\rangle = \overline q_1 = \overline m = \int d\mu(z) \: \avg{\sigma\lr{ \beta h(z) }}_\eta,\\
    \label{eq:due}
    &\langle q_{12}\rangle = \overline q_2 = \int d\mu(z) \: \avg{\sigma^2\lr{ \beta h (z)}}_\eta,
\end{align}
where $\sigma(t) = \frac{1}{1+e^{-t}}$ is the sigmoid function, $d\mu(z)$ is the Gaussian measure for $z\sim \mathcal N(0,1)$, and $h(z)$ represents the internal field acting on neurons and reads as
\begin{align}
    h(z) = (1-\lambda) \overline m + \overline x \cdot \eta + \frac{ \frac{\alpha}{2} + \sqrt{\frac{\alpha \overline q_2}{d}} \:z}{1-\frac{\beta}{d}(\overline q_1-\overline q_2)} .
\end{align}
\end{theorem}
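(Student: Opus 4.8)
The plan is to establish \eqref{eq:ARS} via Guerra's interpolation built on stochastic stability, adapting to the present CANN the scheme used for the Hopfield model, the essential novelty being that the ``patterns'' $\eta^\mu$ are unit vectors uniformly distributed on $S^{d-1}$ rather than Rademacher spins. First I would rewrite the cost function \eqref{eq:H_Hop} in terms of the order parameters, so that $-\beta H_N(\bb s|\bb\eta) = \tfrac{\beta N}{2}\sum_{\mu=1}^{K}|x_\mu|^2 + \tfrac{\beta(1-\lambda)N}{2}m^2$ up to $O(1)$ corrections, and then single out the candidate condensed chart $\mu=1$ from the $K-1$ non-condensed ones. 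The point is that, conditionally on $\bb s$, each rescaled overlap $\sqrt N\,x_\mu$ with $\mu\geq 2$ has zero mean and covariance $\tfrac1d q_{11}\,\mathbb{I}_d$ up to finite-$N$ and non-Gaussian corrections (indeed $|\eta^\mu_i|=1$ with equidistributed components, so $\mathbb E_\eta(\eta^{\mu,(a)}_i\eta^{\mu,(b)}_i)=\delta_{ab}/d$, and $q_{11}=m$ for binary neurons): hence the glassy contribution $\tfrac{\beta N}{2}\sum_{\mu\geq 2}|x_\mu|^2$ is, to leading order, a sum over the $\mu$'s and over the $d$ components of squared centred Gaussians whose variance is fixed by the activity and the two-replica overlap, which is precisely the structure that Guerra's glassy interpolation linearizes.

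Next I would introduce the interpolating free energy $A_N(t)$, $t\in[0,1]$, with $A_N(1)=A_{N,K}(\beta,\lambda)$ and $A_N(0)$ an explicitly solvable one-body model, as follows: (i) the condensed term $\tfrac{\beta N}{2}|x_1|^2$ and the population term $\tfrac{\beta(1-\lambda)N}{2}m^2$ are interpolated by carrying the algebraic identities $\tfrac12|x_1|^2=\overline x\cdot x_1-\tfrac12|\overline x|^2+\tfrac12|x_1-\overline x|^2$ and $\tfrac12 m^2=\overline m\,m-\tfrac12\overline m^2+\tfrac12(m-\overline m)^2$, keeping the linear pieces at weight $1$ and the quadratic fluctuations at weight $t$; (ii) the glassy part is interpolated in the Guerra--Toninelli fashion by coupling each neuron at weight $\sqrt t$ to the (Gaussian surrogate of the) non-condensed fields reproducing $\sum_{\mu\geq2}|x_\mu|^2$, and at weight $\sqrt{1-t}$ to a single scalar Gaussian field per site of variance proportional to $\overline q_2/d$, together with a deterministic ``Onsager'' shift proportional to $(\overline q_1-\overline q_2)/d$. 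Computing $A_N(0)$ then factorizes over the $N$ sites: the sum over $s_i\in\{0,1\}$ produces the term $\mathbb E_\eta\int Dz\,\ln\bigl(1+\exp(\beta h(z))\bigr)$ with $h(z)$ as in the statement, plus explicit constants in $\overline m,\overline x,\overline q_1,\overline q_2$. The diagonal ($i=j$) contribution of the $K-1$ non-condensed charts, of order $\alpha N m$, is exactly what generates the $\tfrac\alpha2$ term inside the internal field, so it must be tracked with care rather than discarded as sub-extensive.

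Then I would differentiate: Gaussian integration by parts on the interpolating Gaussian fields (for the glassy part) and plain differentiation (for the condensed and population parts) yield $\tfrac{d}{dt}A_N(t)$ as a sum of $t$-independent constants plus a remainder made of the \emph{centred} fluctuations $\langle|x_1-\overline x|^2\rangle_t$, $\langle(m-\overline m)^2\rangle_t$, $\langle(q_{12}-\overline q_2)^2\rangle_t$, $\langle(q_{11}-\overline q_1)^2\rangle_t$, the various cross-terms organising so that no other combination survives. Invoking the replica-symmetry hypothesis of Definition \ref{selfaverage}, i.e.\ concentration of all order parameters on their means as $N\to\infty$, these remainders vanish; hence
\[
A^{RS}(\alpha,\beta,\lambda)=\lim_{N\to\infty}\Bigl(A_N(0)+\int_0^1 \tfrac{d}{dt}A_N(t)\,dt\Bigr),
\]
and collecting all the surviving constants (from $A_N(0)$ and from the $t$-independent part of $\tfrac{d}{dt}A_N(t)$) reproduces \eqref{eq:ARS}. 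The self-consistency equations \eqref{eq:tre}--\eqref{eq:due} then follow by imposing $\nabla_{\overline x,\overline m,\overline q_1,\overline q_2}A^{RS}=0$: differentiating \eqref{eq:ARS}, using $\sigma=\partial_t\ln(1+e^t)$, and performing one further Gaussian integration by parts in $z$ to trade the explicit $z$-dependence of $h$ for a factor $\sigma(\beta h)\bigl(1-\sigma(\beta h)\bigr)$, the explicit $\overline q_2$-dependence of the ``constant'' block cancels against the $\overline q_2$ hidden in the noise variance, leaving $\overline x=\int d\mu(z)\,\avg{\eta\,\sigma(\beta h)}_\eta$, $\overline q_1=\overline m=\int d\mu(z)\,\avg{\sigma(\beta h)}_\eta$ and $\overline q_2=\int d\mu(z)\,\avg{\sigma^2(\beta h)}_\eta$ (the last being $\sigma^2$, not $\sigma(1-\sigma)$, because in a replica-symmetric factorized state $q_{12}=\omega(s_i)^2$ while $q_{11}=\omega(s_i)$ for $s_i\in\{0,1\}$).

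I expect the main obstacle to be the control of the non-condensed charts in the glassy interpolation: unlike in the Hopfield model, the entries of $\eta^\mu$ are neither independent across the $d$ components (they live on a sphere) nor Gaussian (they are bounded), so one must justify that replacing $\sqrt N x_\mu$ by a Gaussian vector of covariance $\overline q_1\mathbb{I}_d/d$ costs only vanishing corrections in the thermodynamic limit, and that the diagonal self-interactions resum correctly into both the $\tfrac\alpha2$ field shift and the $\ln\bigl(1-\tfrac\beta d(\overline q_1-\overline q_2)\bigr)$ entropy term. A secondary, conceptual obstacle is that Guerra's interpolation by itself only delivers a one-sided bound on the free energy; the equality in \eqref{eq:ARS} therefore rests on the stated concentration (replica-symmetry) assumption, which is where the control of the fluctuation remainders is imported rather than derived, and relaxing it would require a genuine replica-symmetry-breaking analysis.
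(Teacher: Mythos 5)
Your proposal is correct and follows essentially the same route as the paper: a one-parameter stochastic-stability interpolation between the true model and a factorized one-body system, with Gaussian universality (Stein's lemma) to handle the spherical non-condensed charts, replica-symmetric concentration to kill the fluctuation remainders in $\frac{d}{dt}\mathcal A(t)$, and extremization to obtain the self-consistencies. The only organizational difference is that the paper makes the Gaussian surrogate explicit via a Hubbard--Stratonovich dualization into hidden chart cells $z_\mu$, introducing the conjugate overlaps $\overline p_1,\overline p_2$ (Theorem \ref{TzeTze}) which are eliminated at the saddle point, whereas you fold that bookkeeping directly into the interpolating fields; the resulting terms (the $\frac{\alpha}{2}$ shift, the log-determinant, the $\overline q_2/(1-\frac{\beta}{d}(\overline q_1-\overline q_2))$ contribution) are the same.
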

A proof of this theorem, based on Guerra's interpolation, can be found in Sec.~\ref{Guerra}, further, in App.~\ref{Replicas}, we report an independent derivation based on the usage of the replica trick. Finally, we note that  the low storage solution provided in Sec.~\ref{LowStorage} can be obtained simply by setting $\alpha=0$ in the above expression \eqref{eq:ARS}.
\mycomment{
\begin{figure}[tb]    
    \centering
    \includegraphics[width=9.5cm]{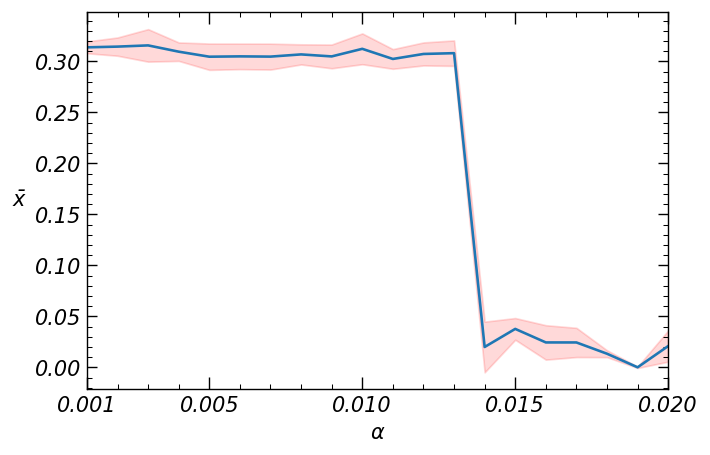}
    \caption{WORK IN PROGRESS\\
    The numerical solution of the self-consistency equation at $\beta = 100$ for the population vector $\overline x$ at various values of the load $\alpha$ is shown. The theoretical maximum of $\overline x = \frac{1}{\pi}$ is reached for lower values of $\alpha$ as expected, indicating the formation of a coherent state with a bump of size $\delta\sim \pi$ in the retrieved place field; the system undergoes a first order phase transition for $\alpha_c \sim 0.013 \div 0.014$ at $\beta = 100$. The pink bands show the uncertainty of the numerical solution.}
\end{figure}
}

\begin{corollary}
In order to numerically solve the self-consistent equations \eqref{eq:tre}-\eqref{eq:due}, it is convenient to introduce the quantity $C=\frac{\beta}{d}(\overline q_1 - \overline q_2)$, make the change of variables $(\overline x, \overline q_1, \overline q_2) \to (\overline x, C, \overline q_2)$ and write them as
\begin{align}
    \label{SelfHS3} 
    &\overline x = \Omega_d \int Dz \int_{-1}^1 dt\: t   \lr{1-t^2}^\frac{d-3}{2}\: \sigma\lr{ \beta h(z,t)},\\
    &\overline q_2 = \Omega_d\int Dz \int_{-1}^1dt\:\lr{1-t^2}^\frac{d-3}{2}\: \sigma^2\lr{ \beta h(z,t)}\\
    \label{SelfHS4}
    &C \equiv \frac{\beta}{d}(\overline q_1-\overline q_2) =  \Omega_d \frac{1-C}{\sqrt{\alpha \overline q_2 d}}\int Dz \:z\: \int_{-1}^1dt\:\lr{1-t^2}^\frac{d-3}{2}\: \sigma\lr{ \beta h(z,t)},
\end{align}
where we posed 
\begin{align}
    &h(z,t) \equiv (1-\lambda) \lr{\frac{d}{\beta} C + \overline q_2} + t \overline x  + \frac{ \frac{\alpha}{2} + \sqrt{\frac{\alpha \overline q_2}{d}} \:z}{1-C}.\label{eq:hfield}
   \end{align}
\end{corollary}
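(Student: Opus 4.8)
The plan is to start from the fixed-point equations \eqref{eq:tre}--\eqref{eq:due}, reduce the chart averages $\avg{\cdot}_\eta$ to one-dimensional integrals through the identities \eqref{eq:exp1_d}--\eqref{eq:exp2_d}, and then perform the change of variables $\overline q_1 \mapsto C = \tfrac{\beta}{d}(\overline q_1-\overline q_2)$, trading \eqref{eq:uno} for a fixed-point equation in $C$ by means of a single Gaussian integration by parts.

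\emph{Angular reduction.} The internal field $h(z)$ entering \eqref{eq:tre}--\eqref{eq:due} depends on $\eta$ only through the scalar product $\overline x \cdot \eta$, so \eqref{eq:exp1_d}--\eqref{eq:exp2_d} apply with $a=\overline x$. Choosing the reference versor $\hat a=\hat{\overline x}$ and setting $t=\cos\theta$, with $\theta$ the angle between $\eta$ and $\overline x$, every occurrence of $\overline x\cdot\eta$ becomes $|\overline x|\,t$; writing $\overline x$ for $|\overline x|$ (the usual abuse of notation) and using $\overline q_1=\overline m$ to rewrite $(1-\lambda)\overline m=(1-\lambda)\big(\tfrac{d}{\beta}C+\overline q_2\big)$, the field $h(z)$ turns exactly into $h(z,t)$ of \eqref{eq:hfield}. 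Applying \eqref{eq:exp2_d} to \eqref{eq:tre} gives $\overline x\,\hat{\overline x}=\hat{\overline x}\,\Omega_d\int Dz\int_{-1}^1 dt\, t\,(1-t^2)^{\frac{d-3}{2}}\sigma(\beta h(z,t))$, whose direction is automatically consistent and whose magnitude is precisely \eqref{SelfHS3}; applying \eqref{eq:exp1_d} to \eqref{eq:due} with $g=\sigma^2$ yields the middle equation of the Corollary verbatim.

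\emph{The $C$-equation.} This is the only non-mechanical step. I evaluate the right-hand side of \eqref{SelfHS4}: after the angular reduction it contains $\int Dz\, z\, F(z)$ with $F(z)=\int_{-1}^1 dt\,(1-t^2)^{\frac{d-3}{2}}\sigma(\beta h(z,t))$, which is smooth and bounded (since $\sigma\in(0,1)$), so Stein's lemma gives $\int Dz\, z\, F(z)=\int Dz\, F'(z)$. As $h(z,t)$ depends on $z$ only via $\sqrt{\alpha\overline q_2/d}\,z/(1-C)$, one has $\partial_z\sigma(\beta h)=\beta\frac{\sqrt{\alpha\overline q_2/d}}{1-C}\,\sigma(\beta h)\big(1-\sigma(\beta h)\big)$, and the prefactor $\Omega_d(1-C)/\sqrt{\alpha\overline q_2 d}$ in \eqref{SelfHS4} collapses the whole right-hand side to $\frac{\beta}{d}\,\Omega_d\int Dz\int_{-1}^1 dt\,(1-t^2)^{\frac{d-3}{2}}\big(\sigma(\beta h)-\sigma^2(\beta h)\big)$. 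Recognizing the two summands as the angular-reduced right-hand sides of \eqref{eq:uno} (recall $\overline q_1=\overline m$) and \eqref{eq:due}, this equals $\frac{\beta}{d}(\overline q_1-\overline q_2)=C$. Hence \eqref{SelfHS4} is equivalent to the defining relation of $C$ jointly with \eqref{eq:uno}--\eqref{eq:due}, and the map $(\overline x,\overline q_1,\overline q_2)\to(\overline x,C,\overline q_2)$ carries the system \eqref{eq:tre}--\eqref{eq:due} onto \eqref{SelfHS3}--\eqref{SelfHS4} with $h$ given by \eqref{eq:hfield}.

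I expect no serious obstacle here: the substance is careful bookkeeping of the constants $\Omega_d$, $\sqrt{\alpha\overline q_2 d}$ and the factor $1-C$, together with the one Gaussian integration by parts. The points that most deserve a clean statement are the rotational-invariance argument reducing the vector equation \eqref{eq:tre} to its scalar form (so that the fixed point may be sought with $\overline x$ aligned to a fixed axis), and the observation that substituting $\overline q_1=\tfrac{d}{\beta}C+\overline q_2$ inside the field is consistent with — rather than an extra constraint on top of — the $C$-equation \eqref{SelfHS4}.
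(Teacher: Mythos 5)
Your proposal is correct, and for the $\overline x$ and $\overline q_2$ equations it coincides with the paper's (one-line) proof: both reduce the chart average to a one-dimensional integral via \eqref{eq:exp1_d}--\eqref{eq:exp2_d} with $a=\overline x$, the rotational-invariance remark justifying the passage from the vector equation \eqref{eq:tre} to its scalar form. Where you genuinely go beyond the paper is the $C$-equation: the paper's proof says only ``apply \eqref{eq:exp1}--\eqref{eq:exp2}'', which produces explicit forms of \eqref{eq:tre}--\eqref{eq:due} but does not explain how the equation for $\overline q_1$ mutates into \eqref{SelfHS4}, whose right-hand side carries the extra factor $z$ under the Gaussian measure and the prefactor $(1-C)/\sqrt{\alpha\overline q_2 d}$. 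Your Stein-lemma computation supplies exactly the missing link: integrating by parts turns $\int Dz\,z\,\sigma(\beta h)$ into $\tfrac{\beta}{d}\tfrac{\sqrt{\alpha\overline q_2 d}}{1-C}\int Dz\,(\sigma-\sigma^2)$, so that \eqref{SelfHS4} is identified with $\tfrac{\beta}{d}(\overline q_1-\overline q_2)$ by the already-reduced forms of \eqref{eq:uno} and \eqref{eq:due}. This both verifies that \eqref{SelfHS4} is equivalent to (not an additional constraint beyond) the defining relation of $C$, and makes transparent why substituting $\overline m=\tfrac{d}{\beta}C+\overline q_2$ inside $h(z,t)$ is legitimate. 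The only cost of your route is the (easily checked) smoothness and boundedness of $F(z)$ needed for the integration by parts; the gain is that the corollary becomes a theorem-level equivalence rather than a formal rewriting.
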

\begin{proof}
The self-consistent equations \eqref{eq:tre}-\eqref{eq:due} can be made explicit simply by applying \eqref{eq:exp1}-\eqref{eq:exp2} to evaluate the expectation over the map realization. 
\end{proof}

The self-consistent equations \eqref{SelfHS3} - \eqref{SelfHS4}, with the internal field defined according to \eqref{eq:hfield}, are solved numerically to draw a phase diagram for the model as reported in Fig.~\ref{fig:phasediagram}. In particular, we find that the model exhibits three different phases. In the \emph{paramagnetic} phase (PM) the model does not retrieve any map, hence $\overline x_\mu = 0$, for any $\mu=1,..,K$, and the replicas are totally uncorrelated: 
in this regime noise prevails over signals and network's dynamics is ergodic. However, lowering the noise, the ergodic region breaks and the network may enter a spin glass region (if the amount of stored maps is too large) or a coherent state, namely a retrieval region where charts are spontaneously reinstated by the network.
\newline 
In order to check these analytical findings (in particular, to inspect the goodness of the RS approximation and possible finite-size effects for this model) we run extensive Monte Carlo (MC) simulations obtaining a very good agreement between theory and numerical simulations, as shown in Fig.~\ref{fig:MC}. 
\newline
Under the RS approximation, the maximum capacity of the model lies  in the $\beta \to \infty$ limit and its value can be estimated numerically by studying that limit of the self-consistency equations, which leads to the following
\begin{figure}[tb] 
    \centering
    \includegraphics[width=8.25cm]{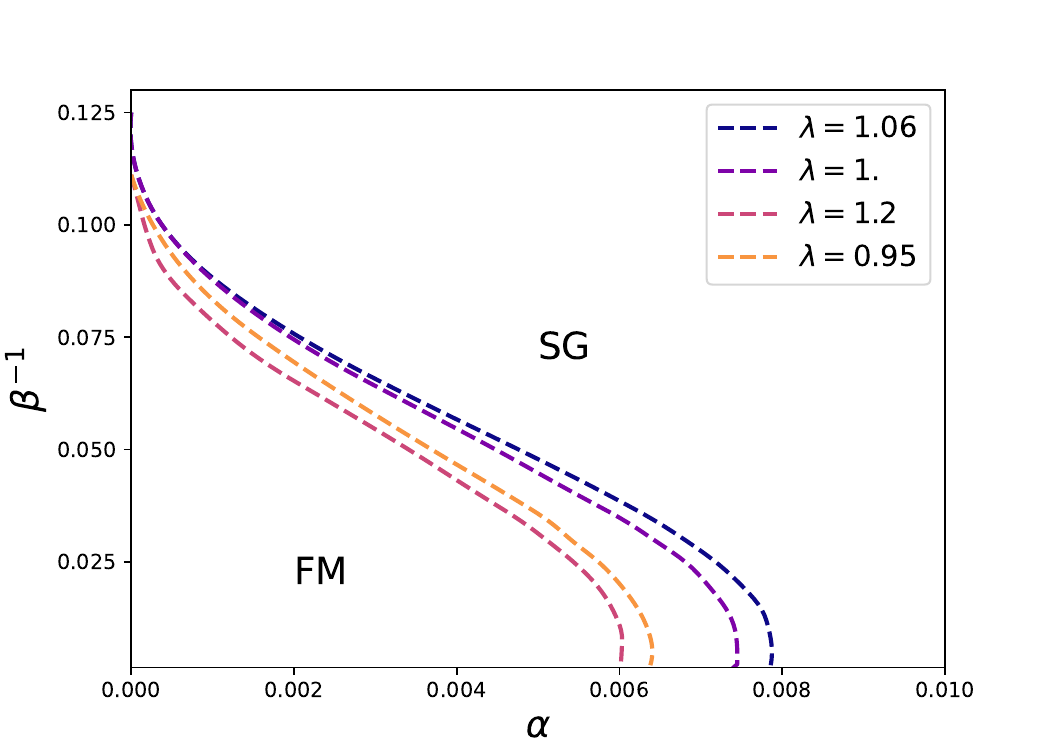}
    \includegraphics[width=8.25cm]{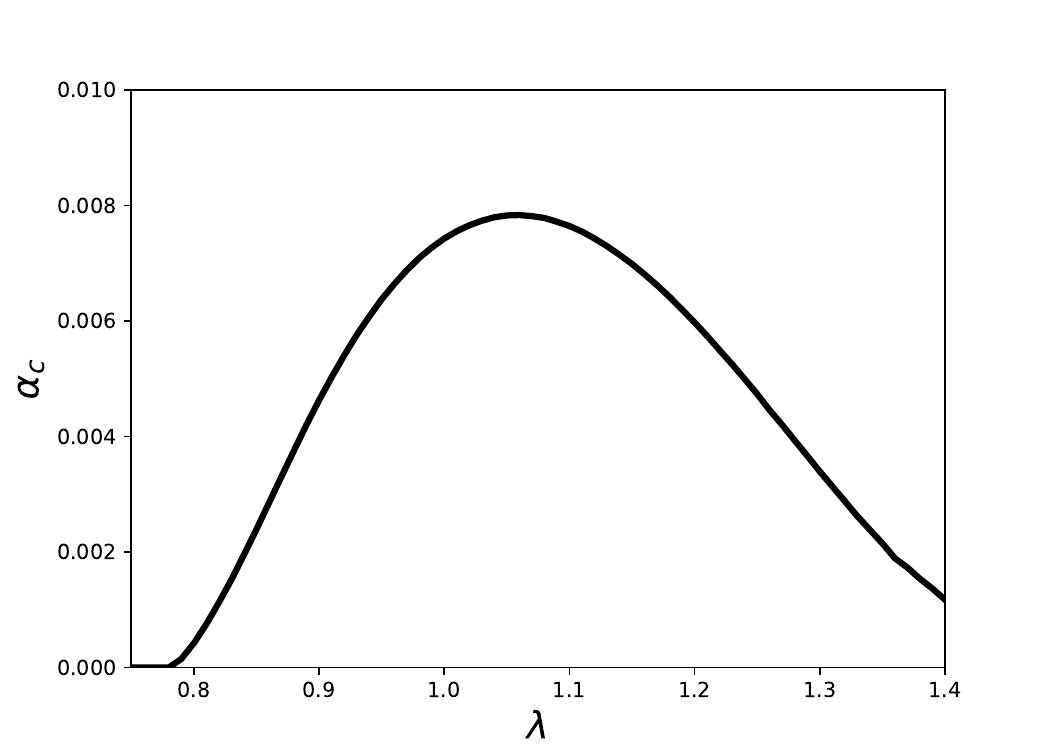}
    \caption{\emph{Left}: Lower region of the phase diagram of the model in $d=2$ dimensions in the load $\alpha$ and noise $\beta$ plane, at various inhibition strength $\lambda$, as shown in the legend. We recognize the presence of two phases, the Spin Glass phase (SG), where $\overline x=0$, $(\overline q_1)^2\ll \overline q_2\leq \overline q_1$, and the Ferromagnetic phase (FM), with $\overline x>0$, $\overline q_2= \overline q_1$ (i.e. the region where coherent states collectively appear). Note that the paramagnetic phase, which would be in the upper region of the diagram,  is not shown for the sake of clearness, to emphasize the boundary between the SG and FM phases, which differs from what is seen in  neural networks with linear threshold units \cite{Ale3,Ale4}).
    The {\em spin glass phase} (SG) is characterized by the absence of any coherent state in any map, hence one still has $\overline x = 0$, but one observe the onset of correlation among replicas, which, for a given activity level $\overline q_1=\overline m$, is characterized by a value of $\overline q_2$ (that lies between $\overline m^2$ and $\overline m$), which is significantly higher than $\overline m^2$ (namely $\overline m^2 \ll \overline q_2 \leq \overline m$), with the particular case $\overline q_2 = \overline m$ depicting a frozen configuration of the network (that however is not a coherent state in any map $\eta_\mu$).\\
    In the {\em ferromagnetic phase} (FM) the neural network is able to retrieve a coherent state in one of its maps (which one depends on the initial conditions of the dynamics), say $\eta^1$, hence, for a given activity level $\overline q_1=\overline m$, one measures $\overline x_1 > 0$. In this regime the configuration of the network is obviously frozen, therefore the overlap $\overline q_2$ takes its highest value: $\overline q_2 \sim \overline m$, with $\overline m - \overline q_2 = \frac{Cd}{\beta}\to0$ in the $\beta\to\infty$ limit.  The phase transition is discontinuous (first order).
    \emph{Right}: The critical load of the model as a function of the global inhibition strength $\lambda$; the maximum value of $\alpha_c \sim 0.0078$ is found at $\lambda_{\alpha_c} = 1.06$, {\em close to criticality}, namely $\lambda_{\alpha_c} \sim 1$. 
    \label{fig:phasediagram}}
\end{figure}
\begin{figure}[t]
\centering
    \begin{minipage}[t]{.32\textwidth}
        \centering
        \includegraphics[width=\textwidth]{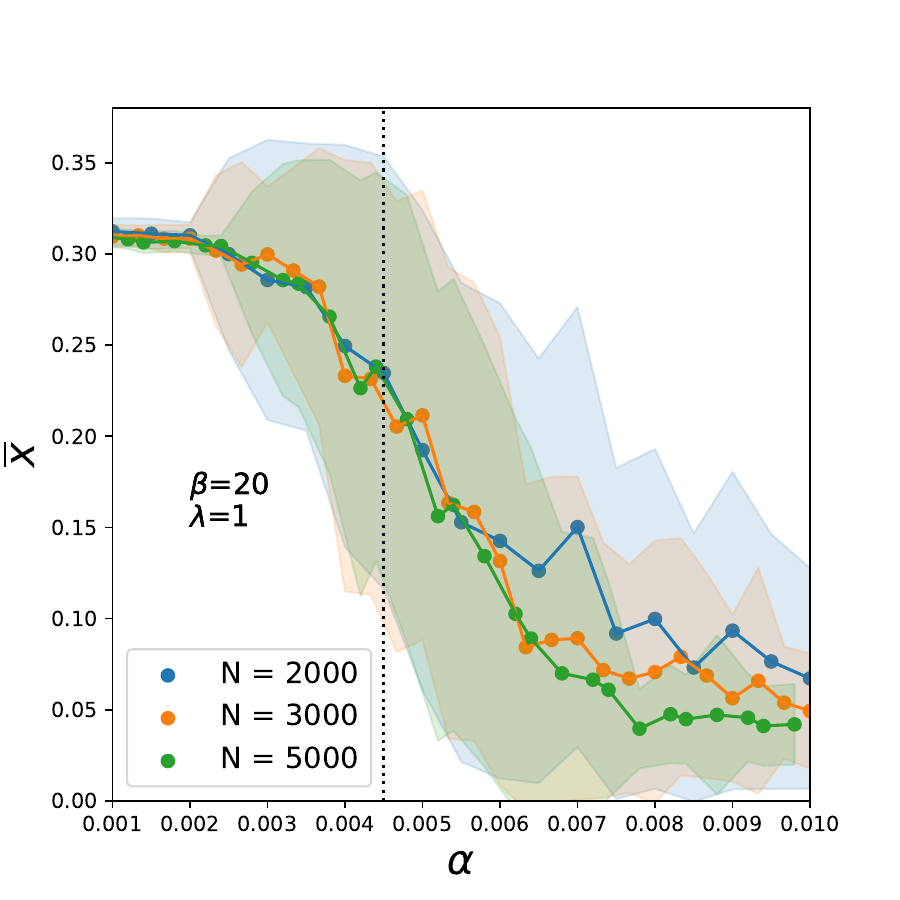}
        \subcaption{}
        \label{fig:MC_a} 
    \end{minipage}
    \begin{minipage}[t]{.32\textwidth}
        \centering
        \includegraphics[width=\textwidth]{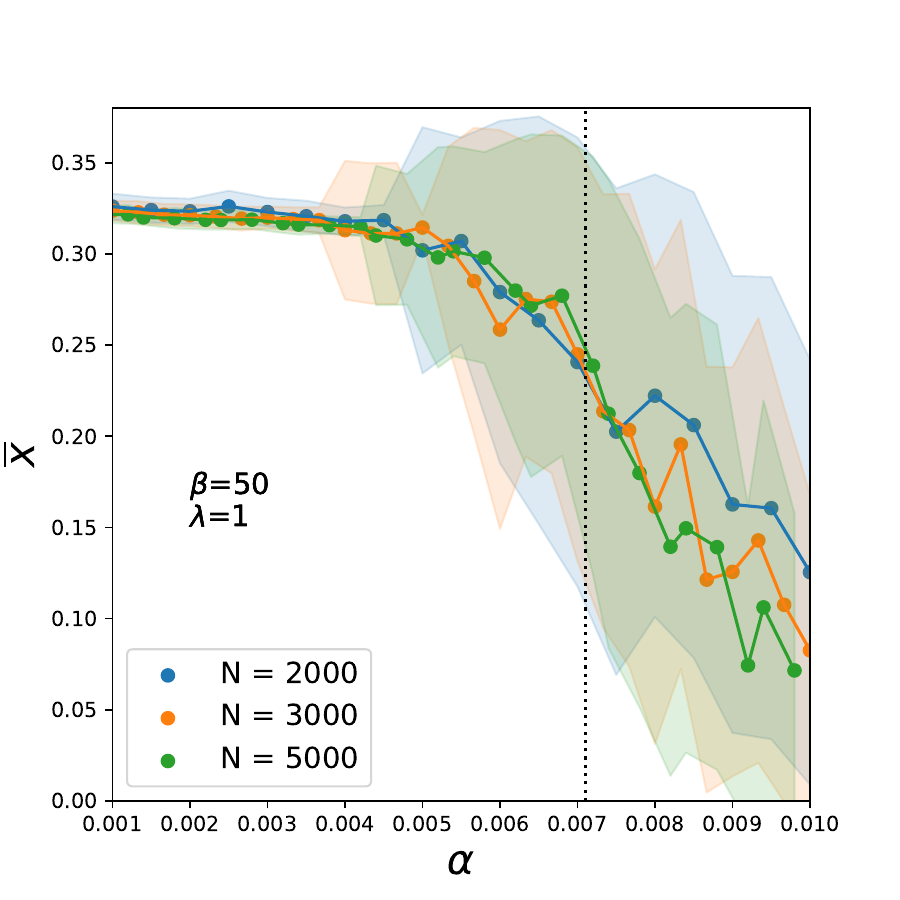}
        \subcaption{}
        \label{fig:MC_b}
    \end{minipage} 
    \begin{minipage}[t]{.32\textwidth}
        \centering
        \includegraphics[width=\textwidth]{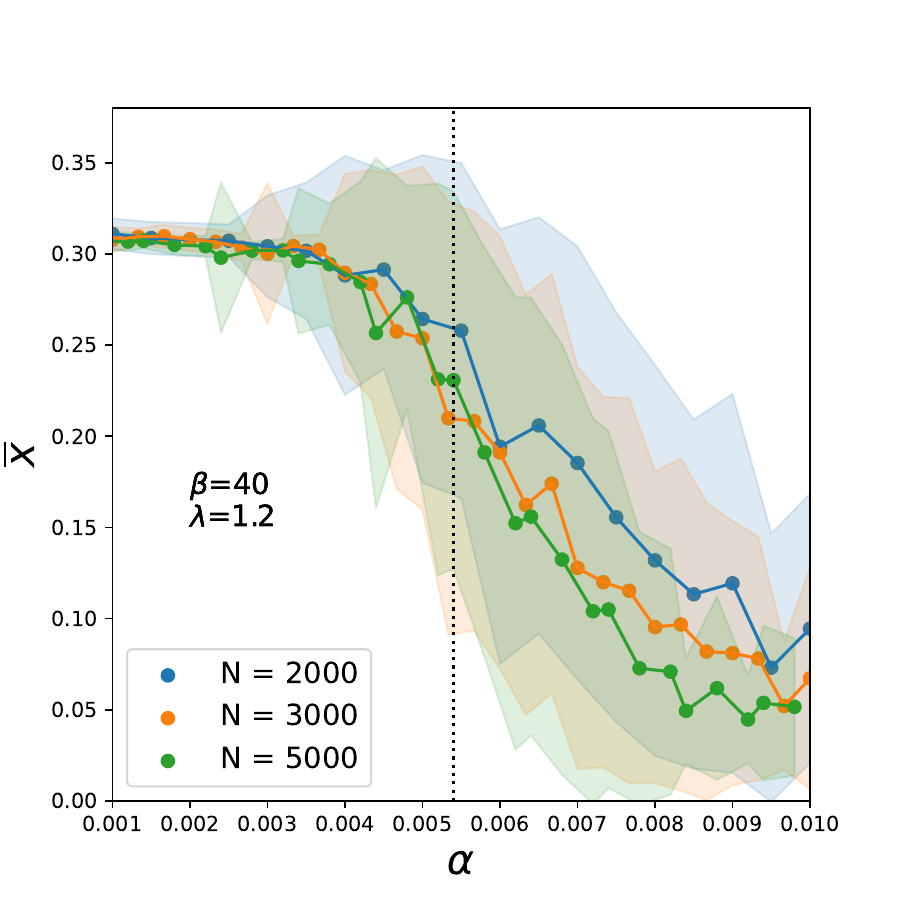}
        \subcaption{}
        \label{fig:MC_c}
    \end{minipage}
    \caption{Finite Size Scaling simulations of neural dynamics (Monte Carlo runs), for three different choices of the control parameters: $\beta = 20$ and $\lambda=1$ \emph{(a)}, $\beta = 50$ and $\lambda=1$ \emph{(b)}, $\beta = 40$ and $\lambda=1.2$ \emph{(c)}. Each scenario has been investigated by enlarging the network size from $N=2000$ to $N=5000$: the error bands (indicated by the shadowed regions) set to $\pm 1$ standard deviation around the experimental points. Each MC run is the average over $50$ samples. The dotted line indicates the theoretical transition line, as shown in the phase diagram provided Fig. \ref{fig:phasediagram} and they sit sharply at the inflection points of $\overline{x}(\alpha)$.}
    \label{fig:MC}
\end{figure}
\begin{corollary}
In the noiseless limit $\beta\to\infty$ limit, the self-consistency equations \eqref{SelfHS3}-\eqref{SelfHS4} in the unitary circle (that is, for a one-dimensional manifold embedded in $d=2$ dimensions), read as
\begin{align}
    &\overline x = \frac{1}{2\pi} \int_0^\pi d\theta\:\cos\theta\:\erf\lr{\frac{g(\theta)}{\sqrt 2}},\label{eq:selfzerot1}\\
    &\overline q_2 = \frac{1}{2} + \frac{1}{2\pi} \int_0^\pi d\theta\:\erf\lr{\frac{g(\theta)}{\sqrt 2}},\label{eq:selfzerot2}\\
    &C=\frac{1-C}{\sqrt{2\pi^3 \alpha \overline q_2 d}}\int_0^\pi d\theta\:\exp\lr{-\frac{g(\theta)^2}{2}},\label{eq:selfzerot3}
\end{align}
where we posed 
\begin{align}
    &g(\theta)=\sqrt{\frac{d}{\alpha \overline q_2}}\slr{\frac{\alpha}{2} + (1-C)\lr{(1-\lambda) \overline q_2 + \overline x \cos\theta}}.\label{eq:gt}
\end{align}
\end{corollary}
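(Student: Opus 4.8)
The plan is to specialise the self-consistency equations \eqref{SelfHS3}--\eqref{SelfHS4} to the two-dimensional embedding ($d=2$, so $\Omega_2=1/\pi$ and $(1-t^2)^{(d-3)/2}=(1-t^2)^{-1/2}$) and then to carry out the noiseless limit $\beta\to\infty$. First I would trade the variable $t\in[-1,1]$ for the angle $\theta\in[0,\pi]$ via $t=\cos\theta$: since $dt\,(1-t^2)^{-1/2}=-d\theta$, every inner integral $\int_{-1}^1 dt\,(1-t^2)^{-1/2}f(t)$ becomes $\int_0^\pi d\theta\,f(\cos\theta)$, so that for finite $\beta$ the three equations read
\begin{align*}
&\overline x=\frac1\pi\int Dz\int_0^\pi d\theta\,\cos\theta\;\sigma\lr{\beta h(z,\cos\theta)},\\
&\overline q_2=\frac1\pi\int Dz\int_0^\pi d\theta\;\sigma^2\lr{\beta h(z,\cos\theta)},\\
&C=\frac1\pi\,\frac{1-C}{\sqrt{2\alpha\overline q_2}}\int Dz\,z\int_0^\pi d\theta\;\sigma\lr{\beta h(z,\cos\theta)},
\end{align*}
with $h(z,t)$ as in \eqref{eq:hfield}.

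Next I would send $\beta\to\infty$. In $h(z,t)$ the term $\tfrac d\beta C$ drops out, leaving the limiting field $h_\infty(z,\cos\theta)=(1-\lambda)\overline q_2+\overline x\cos\theta+\lr{\tfrac\alpha2+\sqrt{\alpha\overline q_2/d}\,z}/(1-C)$, which (for $1-C>0$, $\overline q_2>0$) is strictly increasing in $z$ and vanishes exactly at $z=-g(\theta)$, with $g$ as in \eqref{eq:gt}. Since $0\le\sigma,\sigma^2\le1$, dominated convergence lets me pass the limit inside: $\sigma(\beta h)\to\Theta(h_\infty)$ and $\sigma^2(\beta h)\to\Theta(h_\infty)$ for a.e.\ $(z,\theta)$, $\Theta$ being the Heaviside step. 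In the first two equations the Gaussian integral is then $\int Dz\,\Theta(h_\infty(z,\cos\theta))=\int_{-g(\theta)}^\infty Dz=\tfrac12\lr{1+\erf(g(\theta)/\sqrt2)}$; the constant $\tfrac12$ integrates to zero against $\cos\theta$, producing \eqref{eq:selfzerot1}, while it survives in the $\overline q_2$ equation and yields the leading $\tfrac12$ in \eqref{eq:selfzerot2}.

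The $C$-equation needs more care, since after the naive limit the integrand $z\,\Theta(h_\infty)$ retains no information about the width of the transition layer. Here I would apply Gaussian integration by parts (Stein's lemma) \emph{before} taking the limit: $\int Dz\,z\,\sigma(\beta h)=\int Dz\,\partial_z\slr{\sigma(\beta h)}=\partial_z h\cdot\int Dz\,\beta\sigma'(\beta h)$, using that $\partial_z h=\sqrt{\alpha\overline q_2/d}/(1-C)$ is independent of $z$. As $\beta\to\infty$, $\beta\sigma'(\beta h)=\tfrac{\beta}{4\cosh^2(\beta h/2)}$ is an approximate identity concentrated at the zero $z_0=-g(\theta)$ of $h_\infty(\cdot,\cos\theta)$, so $\int Dz\,\phi(z)\,\beta\sigma'(\beta h)\to\phi(z_0)/|\partial_z h|$; taking $\phi(z)=e^{-z^2/2}/\sqrt{2\pi}$ gives $\tfrac{1-C}{\sqrt{\alpha\overline q_2/d}}\,e^{-g(\theta)^2/2}/\sqrt{2\pi}$. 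The Jacobian $1/|\partial_z h|$ cancels the prefactor $\partial_z h$, leaving $\int Dz\,z\,\sigma(\beta h)\to e^{-g(\theta)^2/2}/\sqrt{2\pi}$; collecting the constants $\tfrac1\pi\cdot\tfrac1{\sqrt{2\alpha\overline q_2}}\cdot\tfrac1{\sqrt{2\pi}}=1/\sqrt{4\pi^3\alpha\overline q_2}=1/\sqrt{2\pi^3\alpha\overline q_2 d}$ at $d=2$ then gives \eqref{eq:selfzerot3}.

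The main obstacle is the rigorous interchange of the $\beta\to\infty$ limit with the integrals in the $C$-equation: one must check that $h_\infty(z,\cos\theta)$ has, for every $\theta\in[0,\pi]$, a single simple zero — which follows from strict monotonicity in $z$, valid once $1-C>0$ and $\overline q_2>0$, both of which hold in the ferromagnetic phase where these zero-noise equations are relevant — and that the concentration of $\beta\sigma'(\beta\,\cdot\,)$ at that zero is uniform enough in $\theta$ to also exchange the limit with $\int_0^\pi d\theta$. The first two equations, by contrast, only need the elementary bounded-convergence argument together with $\int_0^\pi\cos\theta\,d\theta=0$.
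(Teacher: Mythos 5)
Your proposal is correct and, for the first two equations, follows exactly the paper's route: replace $\sigma(\beta h)\to\Theta(h)$, restrict the Gaussian integral to $z\in[-g(\theta),\infty)$, and use $\int_0^\pi\cos\theta\,d\theta=0$ together with $\tfrac{Cd}{\beta}\to 0$; all your constants check out, including $\tfrac{1}{\pi}\cdot\tfrac{1}{\sqrt{2\alpha\overline q_2}}\cdot\tfrac{1}{\sqrt{2\pi}}=(2\pi^3\alpha\overline q_2 d)^{-1/2}$ at $d=2$.

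The one place you diverge from the paper is the $C$-equation, where you invoke Stein's lemma and an approximate-identity argument for $\beta\sigma'(\beta h)$ concentrating at the zero of the field. This is correct, but the concern motivating it is unfounded: the naive limit does \emph{not} lose the relevant information, because the factor $e^{-g(\theta)^2/2}$ comes from the Gaussian weight at the integration boundary rather than from the width of the transition layer. Indeed, since $|z\,\sigma(\beta h)|\le|z|$ is integrable, dominated convergence gives directly
\begin{equation*}
\int Dz\,z\,\sigma(\beta h)\;\longrightarrow\;\int_{-g(\theta)}^{\infty}\frac{z\,e^{-z^2/2}}{\sqrt{2\pi}}\,dz=\frac{e^{-g(\theta)^2/2}}{\sqrt{2\pi}},
\end{equation*}
which is exactly what the paper means by restricting the integral and rescaling, and it coincides with the output of your integration-by-parts detour. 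So your extra machinery buys robustness (it would survive if $\partial_z h$ depended on $z$) but is not needed here; the elementary boundary evaluation suffices for all three equations.
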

\begin{proof}
These equations in the noiseless limit $\beta \to \infty$ are obtained by considering that, in this limit, the sigmoid function reduces to the Heaviside function $\Theta$, \emph{i.e.} $\sigma(\beta h) \overset{\beta \to \infty} \longrightarrow\Theta(h)$. The Gaussian integral appearing in the self-consistency equations \eqref{SelfHS3}-\eqref{SelfHS4} can then be restricted to the domain where the internal field $h(z)$ is positive, that is the interval $h(z)\in [-g(t),\infty)$, where $g(t)$ is defined in eq. \eqref{eq:gt}: after a trivial rescaling of the integrals, we obtain the expressions \eqref{eq:selfzerot1}-\eqref{eq:selfzerot3} in terms of the error function $\erf(x)=\frac{2}{\sqrt \pi} \int_0^x dz\: e^{-z^2}$.\\
Note that, as standard \cite{Amit}, we used $\frac{Cd}{\beta} + \overline q_2 \overset{\beta\to\infty}{\longrightarrow} \overline q_2$. 
\end{proof}
These equations can be solved numerically and, in particular, we find that there exists a critical value for $\alpha$, referred to as $\alpha_c$, such that above that threshold no positive solution for $\overline x$ exists. The numerical estimate of this critical storage $\alpha_c(\lambda)$ as a function of $\lambda$ is reported in Fig. \ref{fig:phasediagram} (right panel); the maximal critical load is found to be $\alpha_c \sim 0.0078$ at $\lambda = 1.06$, suggesting that the network  works at its best  for mild values of global inhibition. 
Indeed, at this value of $\lambda$, the model benefits of a weak asymmetry in the excitation-inhibition trade-off in favor of the inhibitory contribution (while at $\lambda=1$ the excitation contribution in the Hamiltonian exactly matches the inhibitory contribution). 

\begin{figure}[!h]
    \centering
    \includegraphics[width=12cm]{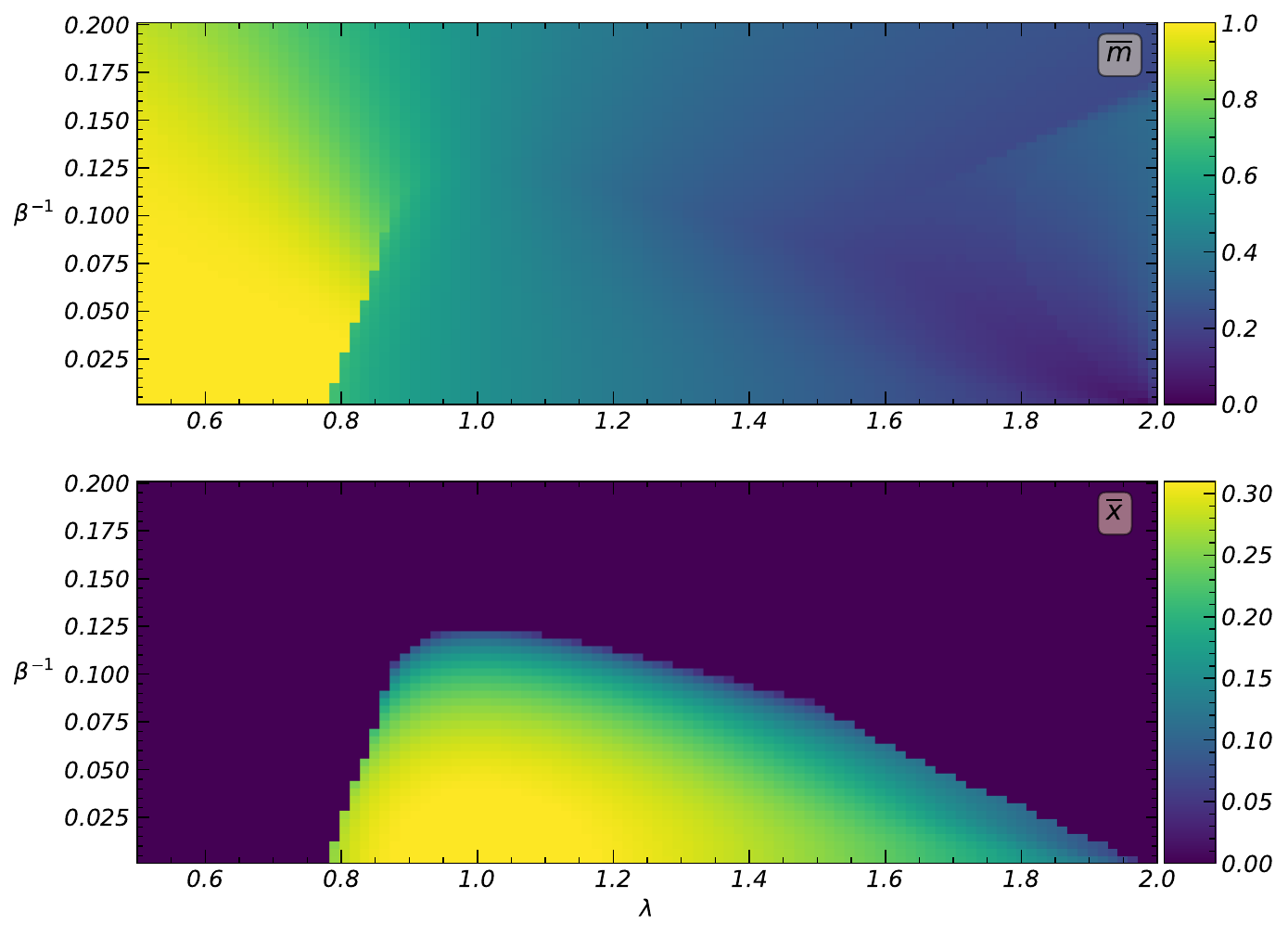}
    \caption{
    Phase diagram of the model in $d=2$ dimensions in terms of the inhibition strength $\lambda$ and noise $\beta$, in the low-storage regime $\alpha = 0$. We shown the overall magnetization of the network $\overline m$ (\emph{upper panel}) and the module of the neural activity $\overline x$ (\emph{lower panel}), as functions of the control parameters $\beta, \lambda$.}
    \label{fig:PD1}  
\end{figure}

\bigskip

In the low-storage regime (i.e., $\alpha = 0$) the phase diagram only depends on two control parameters: the neural noise $\beta$ and the inhibition strength $\lambda$. 
In order to work out a phase diagram for the network in this regime it is enough to simplify the self-consistency equations \eqref{eq:tre}-\eqref{eq:uno} by setting $\alpha=0$ therein, thus obtaining
\begin{align}\label{SC-Low2}
    &\overline x = \frac{1}{\pi}\int_{-1}^1 dt \:\frac{t}{\sqrt{1-t^2}} \: \sigma\lr{ \beta (1-\lambda) \overline m + \beta  t \overline x},\\ \label{SC-Low1}
    &\overline m = \frac{1}{\pi}\int_{-1}^1 \frac{dt}{\sqrt{1-t^2}} \: \sigma\lr{ \beta (1-\lambda) \overline m + \beta  t \overline x}
\end{align}
These equations are solved numerically to trace the evolution of $\overline m$ and $\overline x$ in the $\beta, \lambda$ plane and results are presented in Fig.~\ref{fig:PD1}: as expected, and as anticipated in the previous section, the limits $\lambda \ll 1$ and $\lambda \gg 1$ both lead to ``paramagnetic'' phases where $\overline x=0$ and the computational capabilities of the network are lost. In constrast, in the region close to $\lambda=1$ a retrieval phase with $\overline x>0$ and $\overline m\sim 0.5$ naturally appears.
 

\mycomment{
\begin{align}\label{SelfHS1}
    &\overline m = \frac{1}{\pi}\int Dz \int_{-1}^1\frac{dt}{\sqrt{1-t^2}} \: \sigma\lr{ \beta \lr{(1-\lambda) \overline m + t \overline x  + \frac{ \frac{\alpha}{2} + \sqrt{\frac{\alpha q_2}{d}} \:z}{1-C} }},\\ \label{SelfHS2}
    &q_2 = \frac{1}{\pi}\int Dz \int_{-1}^1\frac{dt}{\sqrt{1-t^2}} \: \sigma^2\lr{ \beta \lr{(1-\lambda) \overline m + t \overline x  + \frac{ \frac{\alpha}{2} + \sqrt{\frac{\alpha q_2}{d}} \:z}{1-C} }},\\ \label{SelfHS3}
    &C = \frac{1-C}{\sqrt{\alpha q_2 d}}\frac{1}{\pi}\int Dz \:z\: \int_{-1}^1\frac{dt}{\sqrt{1-t^2}} \: \sigma\lr{ \beta \lr{(1-\lambda) \overline m + t \overline x  + \frac{ \frac{\alpha}{2} + \sqrt{\frac{\alpha q_2}{d}} \:z}{1-C} }},\\ \label{SelfHS4}
    &\overline x = \frac{1}{\pi}\int Dz \int_{-1}^1 dt \:\frac{t}{\sqrt{1-t^2}} \: \sigma\lr{ \beta \lr{(1-\lambda) \overline m + t \overline x  + \frac{ \frac{\alpha}{2} + \sqrt{\frac{\alpha q_2}{d}} \:z}{1-C} }}.
\end{align}
}

\subsection{The emergence of map-selective cells}\label{Citua}
We note that an equivalent mathematical model can be given an alternative implementation in terms of a network of  highly selective  hidden variables (firing if and only if the rat transits within a given place field). These cells are not the standard neurons $\boldsymbol{s}$ we introduced, as those collectively account for (several) maps and none of them is specifically firing for a given place field. Indeed the same formal model admits a dual representation of its Hamiltonian where these highly-selective neurons naturally appear as chart cells\footnote{One may question that the hidden neurons are no longer Boolean variables,  but Gaussian ones. While this is certainly true, models with hidden layers equipped with binary neurons have been studied too and qualitatively the {\em duality of representation} picture remains robust \cite{DaniPRE2018}}:
\newline
in this respect, it is enough to point out that the  partition function of the model can be written as
\begin{eqnarray}
    Z_N(\beta,\lambda,\boldsymbol{\eta}) &=& \sum_s \exp\left(\frac{\beta}{2N} \sum_{i,j}^{N,N}\sum_{\mu=1}^K \eta_i^{\mu}\eta_j^{\mu} s_i s_j -\frac{\beta(\lambda-1)}{2N} \sum_{i,j}^{N,N}s_i s_j \right)  \\ \label{duality}
    &=& \sum_s  \int d\mu(z_{\mu})\exp\left( \frac{\beta}{\sqrt{N}}\sum_i^N \sum_{\mu}^K \eta_i^{\mu}s_i z_{\mu}-\frac{\beta(\lambda-1)}{2N} \sum_{i,j}^{N,N}s_i s_j \right),
\end{eqnarray}
where, in the last line, we introduced $K$ hidden variables that could play the formal role of chart cells $z_{\mu},\ \mu \in (1,...,K)$, one per chart, such that the exponent in eq.~\eqref{duality} can be thought of as an effective Hamiltonian as stated by the next
\begin{proposition}
Once introduced $K$ real-valued  $z_{\mu}, \ \mu \in (1,...,K)$, hidden neurons equipped with a standard Gaussian prior (i.e. the measure $d\mu(z_{\mu})$), the Battaglia-Treves model admits a dual representation in terms of a bipartite neural network whose cost function reads as    
\begin{equation}\label{placecell}
H_{N,K}(s,z|\boldsymbol{\eta})= -\frac{1}{\sqrt{N}}\sum_i^N \sum_{\mu}^K \eta_i^{\mu}s_i z_{\mu}+\frac{\lambda-1}{2N} \sum_{i,j}^{N,N}s_i s_j= -\sqrt{N}\sum_{\mu}^K x_{\mu}z_{\mu} + (\lambda-1) N m^2.
\end{equation} 
\begin{figure}[!h]
    \centering
    \includegraphics[width=8cm]{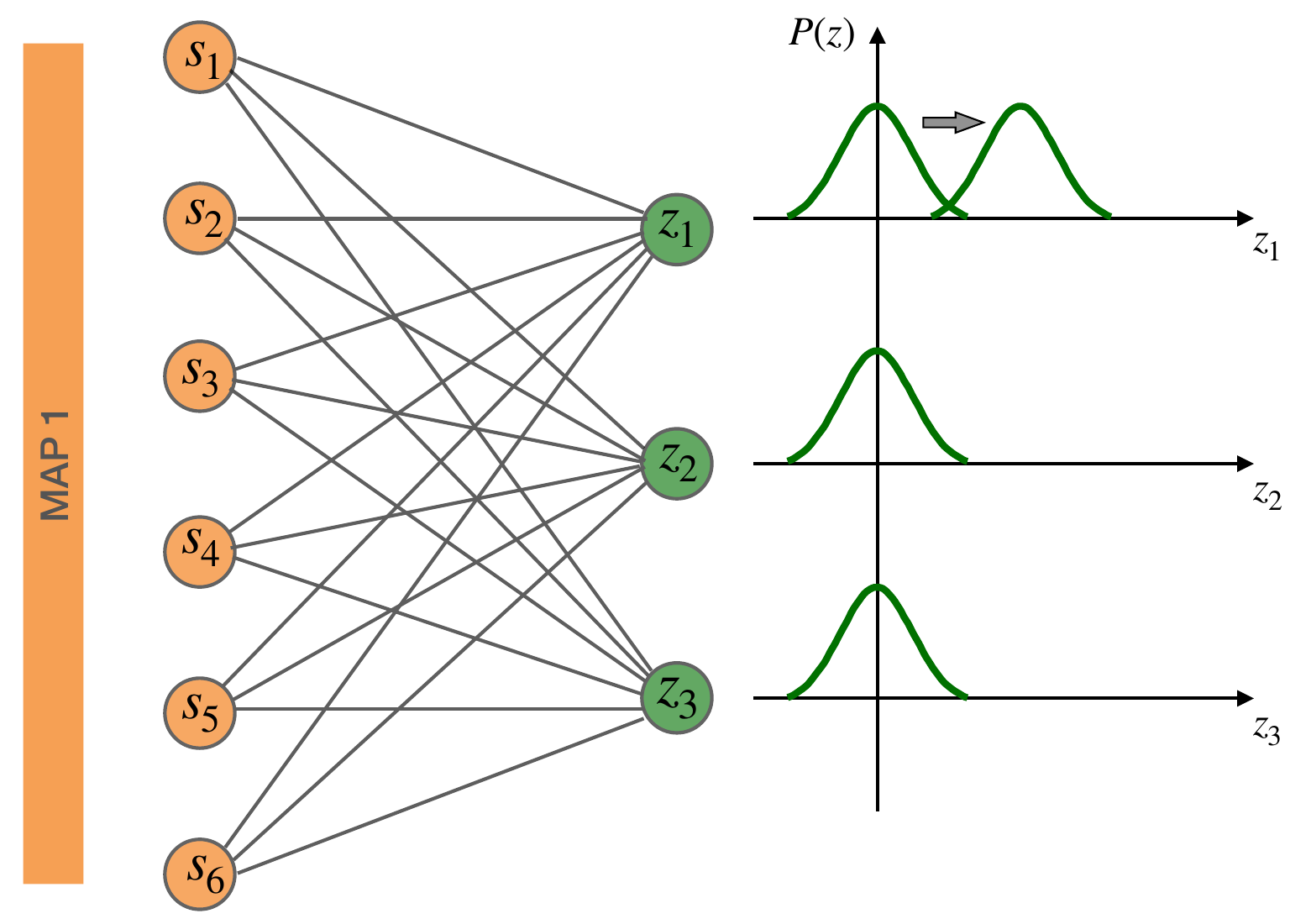}
    \caption{
Dual representation of the Battaglia-Treves model in terms of a bipartite network equipped with grandmother place cells as hidden neurons: these are one per stored chart, such that the retrieval of the -say- first map $\xi^{1}$ by the $\boldsymbol{s}$ triggers the corresponding $z_1$ cell to fire too.}
    \label{fig:double}  
\end{figure}
Note that  these chart-selective neurons  $\boldsymbol{z}$ indicate individual charts: their mean values   are zero unless the conjugated order parameters $x_{\mu} \neq  0$ for some $z_\mu$: in this case, as the overall field experienced by the generic $z_{\mu}$ is proportional $\sqrt{N} x_{\mu}$,   $x_{\mu}>0$ this triggers a response in $z_{\mu}$, as shown in Figure~\ref{fig:double}.
\end{proposition}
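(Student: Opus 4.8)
The plan is to turn the heuristic of eq.~\eqref{duality} into an exact identity via a Hubbard--Stratonovich (Gaussian linearisation) step and then to read off the two assertions. First I would start from the partition function written with the Hamiltonian \eqref{eq:H_Hop} and notice that, once $\sum_{i<j}$ is turned into $\tfrac12\sum_{i,j}$, the excitatory term is a sum of perfect squares, $-\beta H_N(\bb s|\bb\eta)=\frac{\beta}{2N}\sum_{\mu=1}^K\big|\sum_{i=1}^N\eta_i^\mu s_i\big|^2-\frac{\beta(\lambda-1)}{2N}\big(\sum_{i=1}^N s_i\big)^2$, up to a sub-linear diagonal correction. I would then apply, independently for every chart $\mu=1,\dots,K$ and componentwise in $\mathbb R^d$, the identity $\int d\mu(z)\,\exp(v\cdot z)=\exp(\tfrac12|v|^2)$ with the choice of the argument $v$ that reproduces exactly the exponent of \eqref{duality}, introducing one $d$-dimensional standard Gaussian vector $z_\mu$ per chart. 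This rewrites $Z_N(\beta,\lambda,\bb\eta)$ exactly as $\sum_{\{\bb s\}}\int\prod_{\mu=1}^K d\mu(z_\mu)\,\exp\big(-\beta H_{N,K}(\bb s,\bb z|\bb\eta)\big)$ with $H_{N,K}$ the bipartite cost function \eqref{placecell}; substituting $\sum_i\eta_i^\mu s_i=N x_\mu$ and $\sum_i s_i=N m$ gives the compact form $-\sqrt N\sum_\mu x_\mu\cdot z_\mu+(\lambda-1)N m^2$ and makes the bipartite structure manifest — the hidden--hidden and the excitatory visible--visible couplings have been traded for the single visible--hidden interaction, the only residual self-interaction being the homogeneous inhibition on the visible layer.

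For the statement on the mean values of the chart cells, I would condition the Boltzmann--Gibbs measure on the visible configuration $\bb s$. At fixed $\bb s$ the $z_\mu$ decouple and each is Gaussian with density $\propto\exp\!\big(-\tfrac12|z_\mu|^2+\sqrt{\beta/N}\,\big(\sum_i\eta_i^\mu s_i\big)\cdot z_\mu\big)$; completing the square gives $\omega(z_\mu\,|\,\bb s)\propto\sum_i\eta_i^\mu s_i=N x_\mu$, i.e. the field felt by $z_\mu$ is proportional to $\sqrt N\,x_\mu$, as claimed. Averaging over $\bb s$ and over the quenched charts then yields $\langle z_\mu\rangle\propto\sqrt N\,\langle x_\mu\rangle$, which vanishes exactly when $\langle x_\mu\rangle=\overline x_\mu=0$ and becomes macroscopic as soon as the network reinstates a coherent state with $\overline x_\mu>0$, in agreement with Fig.~\ref{fig:double}.

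There is no genuine conceptual obstacle — the core is a one-line Gaussian integral — so the care goes entirely into the bookkeeping. The main point is the passage from $\sum_{i<j}$ to $\tfrac12\sum_{i,j}$ in \eqref{eq:H_Hop}: the discarded diagonal contribution $\tfrac1{2N}\sum_{\mu}\sum_i|\eta_i^\mu|^2 s_i=\tfrac{K}{2N}\sum_i s_i$ is $O(\alpha)$ per neuron and is either dropped in the limit $N\to\infty$ (as the footnote after \eqref{eq:H_Hop} prescribes) or absorbed into an $O(\alpha)$ redefinition of $\lambda$; in neither case does it affect the form of \eqref{placecell}. A secondary point is that $\eta_i^\mu$ and $z_\mu$ are vectors of $\mathbb R^d$, so the Gaussian identity has to be applied to each Cartesian component of $z_\mu$ separately, and $x_\mu z_\mu$ in \eqref{placecell} must be read as the Euclidean scalar product $x_\mu\cdot z_\mu$ on $\mathbb R^d$. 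With these precautions the identification of the two representations is exact already at finite $N$, and the proposition follows.
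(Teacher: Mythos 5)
Your proposal is correct and follows essentially the same route as the paper: a Hubbard--Stratonovich linearisation of the per-chart squared terms with one Gaussian hidden unit per chart (cf.\ eq.~\eqref{duality} and~\eqref{AgainDue}), followed by reading off the conditional Gaussian law of $z_\mu$ given $\bb s$, whose mean is proportional to $\sqrt{N}\,x_\mu$ exactly as in eq.~\eqref{eq:rbm_1}. The only cosmetic difference is where the factor $\beta$ is placed (you keep a standard $\mathcal N(0,1)$ prior and a $\sqrt{\beta/N}$ coupling, while the paper's $\beta/\sqrt{N}$ coupling implicitly pairs with a variance-$\beta^{-1}$ Gaussian, consistent with its update rule), which does not affect the duality.
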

\begin{remark}
The stochastic dynamics of such a model can be easily simulated by computing the distributions $P(\mathbf z| \mathbf s)$ and $P(\mathbf s | \mathbf z)$, which can be explicitly evaluated  given the factorized structure of the cost function in the chart cells and read as
\begin{align}
\label{eq:rbm_1}
    &P(\mathbf z| \mathbf s) = \prod_\mu^K \mathcal N(M_\mu + \sigma_\mu^2 \beta \sqrt N x_\mu(\mathbf s), \mathbb I_d \:\sigma_\mu^2),\\
    \label{eq:rbm_2}
    &P(\mathbf s = 1 | \mathbf z) = \prod_i^N \sigma \lr{\frac{\beta}{\sqrt N} \sum_\mu \eta^\mu_i \cdot z_\mu + \beta h_i},
\end{align}
where $M_{\mu}$ accounts for an eventual external stimulus pointing toward a specific chart.
\end{remark}


\section{Guerra's techniques for place cells}\label{sec:trois}
In this section we develop the underlying mathematical methodology and we provide two statistical mechanical techniques, both based on Guerra's interpolation \cite{guerra_broken}, to solve for the free energy of the model under study: the former, meant to address the low-load regime, is the Hamilton-Jacobi approach and it is based on the so-called {\em mechanical analogy} \cite{HJ-Barra2010Aldo,HJ-Barra2013Gino}. This analogy lies in the observation that the free energy in Statistical Mechanics plays as an action in Analytical Mechanics, namely it obeys a Hamilton-Jacobi PDE in the space of the couplings: as a result, the explicit expression for the free energy can be obtained as a solution  of the Hamilton-Jacobi equation, that is by leveraging tools typical of analytical mechanics rather than statistical mechanics. The latter, meant to cover the high-load regime, is the standard one-body interpolation based on stochastic stability  \cite{GuerraNN,Fachechi1}: here we interpolate between the free energy of the original model and the free energy of a suitably-designed one-body  model whose solution is straightforward as one-body models enjoy trivial factorized probability distributions. To connect these two extrema of the interpolation we use the fundamental theorem of calculus: we calculate the derivative of the interpolating free energy and we integrate it over the interval of the interpolating parameter such that the bounds correspond to the original and one-body free energies and, crucially, the assumption of replica symmetry makes the integral analytical.
\newline
These techniques are exploited in Secs. \ref{LowStorage} and \ref{Guerra}, respectively.

\subsection{Phase diagram in the map low-storage regime}\label{LowStorage}
We start our analytical investigation by addressing the low-storage regime -where no spin-glass know-how is required- and we exploit the {\em mechanical analogy}. We first introduce three interpolating parameters: a variable $t \in \mathbb{R}^+$ to mimic {\em time} and three {\em spatial coordinates} $(y,\vec{z}) \in \mathbb{R} \times \mathbb{R}^2$ that we use to define an interpolating free energy $A(t,y,\vec{z})$\footnote{Note that $\vec{z}=(z_1, z_2)$ is a bi-dimensional vector in the present two-dimensional embedding of a one-dimensional manifold but, should we work in $K$ dimensions, $z$ would be a $K$ dimensional vector.}. Then, once checked that $A(t=\beta,y=0,\vec{z}=0)$ coincides with $A(\beta,\lambda)$, we show that $A(t,y,\vec{z})$ plays as an action in this spacetime, namely that it obeys a Hamilton-Jacobi PDE. Finally, we find the PDE solution by integrating the Lagrangian coupled to this action over time which, in turn, provides an explicit expression also for the original free energy of the Battaglia-Treves model confined to the low-storage.

As the low-storage regime is defined by $\alpha=0$, we consider the case where the model only stores and retrieves one map, say $\eta^1$ with no loss of generality \cite{Coolen,Longo}: in this setting, the Hamiltonian reduces to that provided in the next
\begin{definition}(One-map Cost function)
The Battaglia-Treves cost function \eqref{eq:H_Hop} for map's recognition, equipped with  $N$ McCulloch-Pitts neurons $\boldsymbol s = (s_1,...,s_N) \in \{0,1\}^N$ and a free parameter $\lambda \in \mathcal{R}^+$ to tune the global inhibition within the network, at work with solely one map $\boldsymbol \eta^1$, reads as
\begin{align}\label{H1patt}
   H_N(\boldsymbol s | \boldsymbol \eta) = -\frac{1}{2N}\sum_{i,j=1}^{N,N} (\eta^1_i\cdot \eta^1_j) s_i s_j + \frac{\lambda-1}{2N}\sum_{i,j=1}^{N,N} s_i s_j = -\frac{N}{2} (x_1)^2 + (\lambda-1)\frac{N}{2} m^2,
\end{align}
where, in the right-most side, we expressed the Hamiltonian in terms of the order parameters $x_1$ and $m$ (as defined in eqs.~\eqref{eq:one_order}-\eqref{eq:two_order}).
\end{definition}
\begin{definition}(Guerra Action)
Following the mechanical analogy, we introduce a fictitious time $t \in \mathbb{R}^+$ and a ($1+2$) fictitious space $(y,\vec{z})\in \mathbb{R}\times\mathbb{R}^2$ that we use to define the interpolating free energy (or {\em Guerra action} \cite{Fachechi1}) $A(t, y, \vec z)$ as
\begin{align}\label{AinterpolataHJ}
    A(t,y,\vec{z}) = \frac{1}{N} \mathbb E_\eta \slr{\ln \sum_{s} \exp\lr{t \lr{\frac{1}{2} \sum_{ij}^{N,N} \eta_i^1\: \eta_j^1 s_i s_j -\frac{\lambda-1}{2} \sum_{ij}^{N,N} s_i s_j} + y \sqrt{\lambda-1}\sum_i^N s_i + z\cdot \sum_i^N \eta_i s_i } }.
\end{align}
\end{definition}
\begin{remark}
Note that, by setting $t=\beta$ and $(y, \vec z)=(0,\vec 0)$ in the above Guerra action, the latter coincides with the original free-energy \eqref{eq:Adefinition} of the Battaglia-Treves model in the low-storage limit.
\newline
Further, we observe that the {\em time} is coupled to the two-body contributions (the {\em energy} in the mechanical analogy), while {\em space} is coupled to the one-body contributions (the {\em momentum} in the mechanical analogy) as naively expected from general principles of Analytical Mechanics. 
\end{remark}
\begin{proposition}(Hamilton-Jacobi PDE)\label{PrepuzzioUno}
The interpolating free energy \eqref{AinterpolataHJ}, related to the model described by the Hamiltonian \eqref{H1patt}, obeys by construction the following Hamilton-Jacobi equation in the $3+1$ space-time $(y,\vec{z},t)$
\begin{subequations}
\begin{align}[left = {\empheqlbrace}]
   &\pder[A(t,y,\vec{z})]{t}  + \frac{1}{2} [\nabla A(t,y,\vec{z})]^2 + \mathcal V(t;y,z) = 0 \label{eq:HJ}\\
   &\mathcal V(t,y,\vec{z}) = \frac{1}{2} \slr{ \avg{x^2_1} -  \avg{x_1}^2} + \frac{1}{2} \slr{\avg{m^2} -\avg{m}^2}\label{eq:HJ_V}
\end{align}
\end{subequations}
where the gradient reads as $\nabla A(t,y,\vec{z}) \equiv \lr{\pder[A(t,y,\vec{z})]{y},\pder[A(t,y,\vec{z})]{\vec z}}$.
\end{proposition}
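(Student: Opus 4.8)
The plan is to verify the Hamilton--Jacobi equation \eqref{eq:HJ} directly, by computing the relevant partial derivatives of the Guerra action \eqref{AinterpolataHJ} and showing that they combine into \eqref{eq:HJ}--\eqref{eq:HJ_V}. The key observation is that the exponent of the interpolating partition function is, up to the $N$-normalization, a linear combination of the extensive two-body quantities $\tfrac12\sum_{ij}\eta^1_i\eta^1_j s_is_j - \tfrac{\lambda-1}{2}\sum_{ij}s_is_j = \tfrac{N}{2}\big((x_1)^2-(\lambda-1)m^2\big)$ (coupled to $t$) and the one-body quantities $\sqrt{\lambda-1}\sum_i s_i = N\sqrt{\lambda-1}\,m$ and $\sum_i \eta_i s_i = N x_1$ (coupled to $y$ and $\vec z$ respectively). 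Throughout I will write $\omega_{t,y,\vec z}$ for the Boltzmann-like average induced by this tilted measure, and $\langle\cdot\rangle$ for its further average over $\bb\eta$.

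First I would compute the space derivatives. Differentiating $A$ in $y$ brings down $\sqrt{\lambda-1}\sum_i s_i /N = \sqrt{\lambda-1}\,m$, so $\partial_y A = \sqrt{\lambda-1}\,\langle m\rangle$; similarly $\partial_{\vec z} A = \langle x_1\rangle$. Hence $\tfrac12[\nabla A]^2 = \tfrac12(\lambda-1)\langle m\rangle^2 + \tfrac12\langle x_1\rangle^2$. Next, the time derivative: differentiating in $t$ brings down $\tfrac{N}{2}\big((x_1)^2-(\lambda-1)m^2\big)$, giving $\partial_t A = \tfrac12\langle (x_1)^2\rangle - \tfrac{\lambda-1}{2}\langle m^2\rangle$. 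Note there are no extra ``$\mathbb E_\eta$ of a derivative of the exponent'' terms since $\bb\eta$ is quenched and enters only through the explicit $\eta$'s already displayed. Adding these two contributions,
\begin{align*}
\partial_t A + \tfrac12[\nabla A]^2 &= \tfrac12\langle (x_1)^2\rangle - \tfrac{\lambda-1}{2}\langle m^2\rangle + \tfrac12(\lambda-1)\langle m\rangle^2 + \tfrac12\langle x_1\rangle^2 \\
&= -\tfrac12\big(\langle (x_1)^2\rangle - \langle x_1\rangle^2\big) - \tfrac{\lambda-1}{2}\big(\langle m^2\rangle - \langle m\rangle^2\big) + \langle (x_1)^2\rangle - \tfrac{\lambda}{2}\langle m^2\rangle + \tfrac{\lambda-1}{2}\langle m\rangle^2.
\end{align*}
At this point I would have to reconcile signs with the stated potential $\mathcal V = \tfrac12(\langle x_1^2\rangle - \langle x_1\rangle^2) + \tfrac12(\langle m^2\rangle - \langle m\rangle^2)$: the cleanest route is to reorganize the two-body term in the exponent as $-H_N$ with $H_N = -\tfrac{N}{2}(x_1)^2 + (\lambda-1)\tfrac{N}{2}m^2$, to re-examine the sign convention by which the $m^2$ contribution appears (the $(\lambda-1)$ piece should enter $\mathcal V$ with the opposite sign to the $x_1^2$ piece only if the one-body $y$-field carries $\sqrt{\lambda-1}$ rather than $i\sqrt{\lambda-1}$ or $\sqrt{1-\lambda}$), and to confirm that the squared-gradient term exactly cancels the ``square of the mean'' parts, leaving only the connected fluctuation (variance) terms that constitute $\mathcal V$. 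This bookkeeping of signs and of the $(\lambda-1)$ versus $(1-\lambda)$ convention is the one genuinely delicate point.

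The main obstacle, then, is not conceptual but precisely this sign/normalization reconciliation: one must check that the linearization field multiplying $\sum_i s_i$ is chosen so that its square, $(\partial_y A)^2=(\lambda-1)\langle m\rangle^2$, has the sign needed to combine with $-\tfrac{\lambda-1}{2}\langle m^2\rangle$ into $-\tfrac{\lambda-1}{2}(\langle m^2\rangle-\langle m\rangle^2)$, and then that the overall identity collapses to \eqref{eq:HJ_V}. Once the conventions are pinned down, the statement ``obeys by construction'' is literally true: \eqref{eq:HJ} is just the reassembled identity $\partial_t A + \tfrac12[\nabla A]^2 + \mathcal V = 0$, valid at every finite $N$ (the propositions that follow will then take $N\to\infty$ and invoke replica symmetry to make $\mathcal V\to 0$). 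I would close by remarking that the ``free particle with potential'' structure is what makes the subsequent integration of the Lagrangian along characteristics tractable.
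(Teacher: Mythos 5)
Your proposal follows exactly the paper's route: the paper's proof of Proposition~\ref{PrepuzzioUno} consists precisely of the two derivative computations $\pder[A]{t}=\tfrac12\avg{x_1^2}+\tfrac{1-\lambda}{2}\avg{m^2}$ and $\nabla A=\lr{\sqrt{1-\lambda}\,\avg{m},\avg{x_1}}$, from which \eqref{eq:HJ}--\eqref{eq:HJ_V} are assembled by completing the squares into variances. The sign/convention tension you rightly flag is a genuine inconsistency in the paper's own notation rather than a gap in your argument (the action \eqref{AinterpolataHJ} carries $\sqrt{\lambda-1}$ while the gradient \eqref{eq:pr2} and the Cauchy datum carry $\sqrt{1-\lambda}$, and the text oscillates between $t=\beta$ and $t=-\beta$); reading the one-body coupling as $\sqrt{1-\lambda}$ makes your computation reproduce the paper's, and the residual $(1-\lambda)$ weight on the $m$-variance relative to the stated $\mathcal V$ is immaterial since $\mathcal V$ vanishes as $N\to\infty$ under self-averaging.
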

\begin{proof}
The proof works by direct evaluation of the derivatives (w.r.t. $y, \vec z$ and $t$) of the Guerra's action \eqref{AinterpolataHJ}.    
\newline   
The $t-$derivative of Guerra action $A(t;y,z)$, can be obtained straightforwardly as
\begin{align} \label{eq:pr1}
    \pder[A(t;y,z)]{t} = \frac{1}{2} \avg{x_1^2} + \frac{1-\lambda}{2} \avg{m^2},
\end{align}
while its gradient is
\begin{align}\label{eq:pr2}
    \nabla A(t;y,z) \equiv \lr{\pder[A(t;y,\vec z)]{y},\pder[A(t;y,\vec z)]{z}} = \lr{\avg{x_1}, \sqrt{1-\lambda} \:\avg{m}}.
\end{align}
\end{proof}
\begin{remark}
Note that the Hamilton-Jacobi equation can also be written as 
$$
\frac{\partial A(t,y,\vec{z})}{\partial t} + \mathcal{H}(t,y,\vec{z}) = 0,
$$
where the effective Hamiltonian $\mathcal{H}(t,y,\vec{z}) \equiv \mathcal T(t,y,\vec{z}) + \mathcal V(t,y,\vec{z})$ is comprised of a kinetic term $\mathcal T(t,y,\vec{z}) = \frac{1}{2} [\nabla A(t,y,\vec{z})]^2$ and of a potential term $\mathcal V(t,y,\vec{z}) = \frac{1}{2} \slr{ \avg{x^2_1} -  \avg{x_1}^2} + \frac{1}{2} \slr{\avg{m^2} -\avg{m}^2}$.
\end{remark}
\begin{remark}
Note that, as the potential $\mathcal V(t, y, \vec z)$ is the sum of two variances -- namely the two variances of two order parameters that are self-averaging as $N\to \infty$, see Definition \ref{selfaverage} -- we have that $\lim_{N \to \infty}\mathcal V(t,y,\vec{z}) =0$ hence, in this asymptotic regime of interest in Statistical Mechanics, the above PDE describes a free motion whose trajectories are Galilean.
\newline
For the sake of consistency with the brackets, clearly $\overline x \equiv \langle x \rangle$ and $\overline m \equiv \langle m \rangle$.
\end{remark}
To evaluate the explicit form of the Guerra action we need the following
\begin{theorem}\label{HJ-theorem}
The solution of the Hamilton-Jacobi PDE can be written as the Cauchy condition $A(t=0,y,\vec{z})$ plus the integral of the Lagrangian $\mathcal L (t,y,\vec z) = \mathcal T(t,y,\vec z) - \mathcal V(t,y,\vec z)$ over time, namely
\begin{align} \label{eq:AL}
    A(t;y,\vec z) = A(t=0;y_0,\vec z_0) + \int_0^t dt' \: \mathcal L (t,y,\vec z).
\end{align}
Explicitly and in the thermodynamic limit the solution reads as
\begin{align}\label{HJsolution}
    A(t,y,\vec{z}) = \mathbb E_\eta \ln \lr{1+\exp\lr{(y(t)-v_y t) \sqrt{1-\lambda} + (z(t)-v_z t) \eta}} + \frac{1-\lambda}{2} \avg{m^2} t + \frac{1}{2} \avg{y_1^2} t.
\end{align}
\end{theorem}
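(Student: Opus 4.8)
The plan is to integrate the Hamilton--Jacobi PDE of Proposition~\ref{PrepuzzioUno} by the method of characteristics, exploiting the decisive simplification that, in the thermodynamic limit, the potential $\mathcal V(t,y,\vec z)$ --- being a sum of variances of order parameters that self-average by Definition~\ref{selfaverage} --- vanishes together with its gradient, so that the associated mechanical system is a \emph{free} particle and its characteristics are straight (Galilean) lines. The first, abstract half of the statement, formula~\eqref{eq:AL}, is then just the classical identity between the Hamilton--Jacobi equation and the least-action principle: the value of $A$ at $(t,y,\vec z)$ equals the Cauchy datum at the foot of the characteristic through that point, plus the action $\int_0^t \mathcal L\,dt'$ accumulated along it. I would either quote this from analytical mechanics or verify it directly, differentiating the right-hand side of~\eqref{eq:AL} in $t$ and in $(y,\vec z)$ and checking that it solves~\eqref{eq:HJ} with the correct initial condition, uniqueness of the first-order PDE (for smooth data, before characteristics cross) doing the rest.

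For the explicit expression~\eqref{HJsolution}, I would write the characteristic system for $\partial_t A + \mathcal H = 0$ with $\mathcal H = \tfrac12|\nabla A|^2 + \mathcal V$. By~\eqref{eq:pr2} the conjugate momenta are $\partial_y A = \sqrt{1-\lambda}\,\langle m\rangle$ and $\partial_{\vec z} A = \langle x_1\rangle$; since $\dot p = -\nabla_{(y,\vec z)}\mathcal H = -\nabla\mathcal V \to 0$, they are conserved along each characteristic and equal the self-averaging values $v_y = \sqrt{1-\lambda}\,\overline m$ and $v_z = \overline x$, so the characteristic reaching $(t,y,\vec z)$ emanates from $(y_0,\vec z_0) = (y - v_y t,\ \vec z - v_z t)$. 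Next I would compute the Cauchy datum: putting $t=0$ in~\eqref{AinterpolataHJ} switches off the two-body term, the sum over $\boldsymbol s$ factorizes over sites, the single-site sum over $s_i\in\{0,1\}$ yields a factor $1 + e^{\,y_0\sqrt{1-\lambda}+\vec z_0\cdot\eta_i}$, and since $\mathbb E_\eta$ factorizes over sites and maps the $N$ identical logarithms cancel the prefactor $1/N$, giving $A(0,y_0,\vec z_0) = \mathbb E_\eta\ln\!\lr{1 + e^{\,y_0\sqrt{1-\lambda}+\vec z_0\cdot\eta}}$ with $\eta$ a generic place-field unit vector. Finally, along the free trajectory $\mathcal V\equiv 0$ and $\mathcal L = \mathcal T = \tfrac12|\nabla A|^2 = \tfrac{1-\lambda}{2}\overline m^2 + \tfrac12\overline x^2$ is constant, hence $\int_0^t\mathcal L\,dt' = t\lr{\tfrac{1-\lambda}{2}\overline m^2 + \tfrac12\overline x^2}$; substituting $y_0,\vec z_0$ in terms of $(t,y,\vec z)$ produces~\eqref{HJsolution}, and specialising to $t=\beta$, $(y,\vec z)=(0,\vec 0)$ returns the low-storage free energy~\eqref{eq:Adefinition}, whose stationarity in $\overline m,\overline x$ reproduces~\eqref{SC-Low2}--\eqref{SC-Low1}.

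The main obstacle is the rigorous control of the vanishing of $\mathcal V$ and $\nabla\mathcal V$, uniformly enough along the whole characteristic flow and throughout the relevant region of $(t,y,\vec z)$: this is precisely where the replica-symmetry / concentration-of-measure hypothesis of Definition~\ref{selfaverage} is used, and it is needed both to declare the motion free (so that the trajectories are lines) and to freeze the momenta at $\overline m$ and $\overline x$. A related subtlety is that $\overline m$ and $\overline x$ are not given data: $\overline m = \langle m\rangle$ and $\overline x = \langle x_1\rangle$ must be evaluated on the very Boltzmann measure that~\eqref{HJsolution} defines, so the construction is genuinely self-referential and is closed only by the self-consistency equations emerging from extremising~\eqref{HJsolution}.
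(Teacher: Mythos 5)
Your proposal is correct and follows essentially the same route as the paper: evaluate the factorized Cauchy datum at $t=0$, observe that the potential (a sum of variances of self-averaging order parameters) vanishes in the thermodynamic limit so the characteristics are Galilean with velocities $v_y=\sqrt{1-\lambda}\,\overline m$ and $\vec v_z=\overline x$, and add the (constant) kinetic Lagrangian integrated over time. Your extra remarks on the conservation of the momenta along characteristics and on the self-referential determination of $\overline m,\overline x$ via the self-consistency equations only make explicit what the paper leaves implicit.
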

\begin{corollary}
The solution of the statistical mechanical problem, namely the expression for the free energy of the Battaglia-Treves model in the low storage regime, can be obtained simply by setting $t=-\beta$ and $y=\vec z=0$ in the expression for $A(t;y,\vec z)$ provided in eq. \eqref{HJsolution}, namely
\begin{align}
  A(\alpha=0,\beta,\lambda) \equiv  A(t=\beta;y=0,\vec z = \vec 0) = \mathbb E_\eta \ln \lr{1+\exp\lr{\beta\lr{1-\lambda} \avg{m} + \beta \eta \avg{x_1}}} - \beta \frac{1-\lambda}{2} \avg{m^2} - \frac{\beta}{2} \avg{x_1^2}.
\end{align}
It is a straightforward exercise to prove that, in order to extremize the above free energy w.r.t. the order parameters, their mean values $\langle x_1 \rangle$ and $\langle m \rangle$ have to obey the self-consistency equations \eqref{SC-Low2}-\eqref{SC-Low1}.
\end{corollary}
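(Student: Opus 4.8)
The plan is to read off the explicit free energy from the closed-form solution of Theorem~\ref{HJ-theorem} and then to impose stationarity. For the first half, recall from the Remark following the definition of the Guerra action that, by construction, evaluating \eqref{AinterpolataHJ} at $t=\beta$ and $(y,\vec z)=(0,\vec 0)$ returns precisely the low-storage free energy \eqref{eq:Adefinition}; hence $A(\alpha=0,\beta,\lambda)=A(t=\beta;0,\vec 0)$ with no computation. The explicit value is then obtained by substituting $t=\beta$, $y=\vec z=\vec 0$ into \eqref{HJsolution}. Here one uses that the Cauchy datum at $t=0$ factorizes over the $N$ sites — the two-body terms being absent, each Boolean neuron contributes a factor $1+\exp(y\sqrt{\lambda-1}+\vec z\cdot\eta_i)$ and, the $\eta_i$ being i.i.d., the site sum collapses to a single expectation of the form $\mathbb E_\eta\ln(1+\exp(\cdot))$ — and that along the free (Galilean) trajectory the velocities conjugate to $y$ and $\vec z$ are constant and equal, by \eqref{eq:pr2}, to the equilibrium order parameters $\sqrt{1-\lambda}\,\avg m$ and $\avg{x_1}$; the displacement term $\int_0^\beta\mathcal L\,dt'$ reduces to $\tfrac12|\nabla A|^2\beta$ because $\mathcal L=\mathcal T$ along the trajectory (the potential $\mathcal V$ vanishes as $N\to\infty$, being a sum of variances of self-averaging order parameters) and $\mathcal T$ is conserved. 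Replacing $\avg{m^2}$ by $\overline m^2$ and $\avg{x_1^2}$ by $\overline x^2$ under replica symmetry (Definition~\ref{selfaverage}) gives the stated expression, modulo the usual sign bookkeeping between the fictitious time $t$ and the physical inverse temperature (and between $\lambda-1$ and $1-\lambda$).

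For the second half, one simply imposes $\nabla_{\overline x,\overline m}A(\alpha=0,\beta,\lambda)=0$ on
\[
A=\mathbb E_\eta\ln\!\lr{1+\exp\!\lr{\beta(1-\lambda)\overline m+\beta\,\overline x\cdot\eta}}-\frac{\beta(1-\lambda)}{2}\overline m^2-\frac{\beta}{2}\overline x^2 .
\]
Differentiating in $\overline m$ yields $\beta(1-\lambda)\big(\mathbb E_\eta[\sigma(\beta h)]-\overline m\big)=0$, and differentiating in the two-vector $\overline x$ yields $\beta\big(\mathbb E_\eta[\eta\,\sigma(\beta h)]-\overline x\big)=0$, where $\sigma$ is the sigmoid, $h=(1-\lambda)\overline m+\overline x\cdot\eta$, and the log-sum-exp has derivative $\sigma$. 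Using rotational invariance to align $\overline x$ with its versor so that $\overline x\cdot\eta=|\overline x|\cos\theta$, and evaluating the map expectations with the identities \eqref{eq:exp1}-\eqref{eq:exp2} (change of variable $t=\cos\theta$), these two stationarity conditions become exactly \eqref{SC-Low1} for $\overline m$ and \eqref{SC-Low2} for $\overline x\equiv|\overline x|$.

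Granting Theorem~\ref{HJ-theorem}, there is essentially no obstacle — the corollary is a substitution followed by a gradient computation, which is why it is billed as a straightforward exercise. The only point that genuinely requires care is the self-consistency of the mechanical picture: one must check that the constant velocities entering \eqref{HJsolution} are indeed the equilibrium order parameters $\avg{x_1}$ and $\sqrt{1-\lambda}\,\avg m$ and that they do not drift along the characteristic, which is precisely where the vanishing of the variance-potential $\mathcal V$ in the thermodynamic limit (Proposition~\ref{PrepuzzioUno}) is used; and, on the bookkeeping side, the signs relating $t$ to $\beta$ must be tracked consistently from \eqref{AinterpolataHJ} through \eqref{HJsolution} so that the quadratic terms carry the correct (negative) sign in $A(\alpha=0,\beta,\lambda)$.
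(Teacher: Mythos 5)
Your proposal is correct and follows essentially the same route as the paper: the Cauchy datum is evaluated by factorization over non-interacting sites, the Lagrangian integral reduces to the kinetic term times $t$ because the variance-potential vanishes in the thermodynamic limit, the Galilean characteristics carry velocities $\sqrt{1-\lambda}\,\avg{m}$ and $\avg{x_1}$, and the substitution $t=\beta$, $y=\vec z=0$ yields the stated free energy. You additionally carry out the stationarity computation that the paper leaves as an exercise, and you correctly flag the paper's own loose sign bookkeeping between $t=\pm\beta$ and the quadratic terms.
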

\begin{proof}
To solve for the problem posed in Proposition \ref{PrepuzzioUno}, we are left with two calculations to perform: the evaluation of the Cauchy condition $A(t=0;y_0,\vec z_0)$ and the integral of the Lagrangian over time.
\newline
The evaluation of the Cauchy condition is straightforward since, at $t=0$, neurons do not interact (see eq. \eqref{AinterpolataHJ}) and we get
\begin{align} \label{eq:A0}
    A(t=0;y_0,\vec z_0) = \frac{1}{N} \mathbb E_\eta \ln \prod_i \sum_{s_i=0,1} \exp\lr{y_0 \sqrt{1-\lambda}\: s_i + \vec z_0 \eta_i s_i} = \mathbb E_\eta \ln \lr{1+\exp\lr{y_0 \sqrt{1-\lambda} + \vec z_0 \eta}}.
\end{align}
Evaluating the integral of the Lagrangian  over time is trivial too (as it is just the multiplication of the kinetic energy times the time as the potential is null in the thermodynamic limit) and returns
\begin{align} \label{eq:L}
    \int_0^t dt'\:\mathcal L = \frac{1-\lambda}{2} \avg{m^2} t + \frac{1}{2} \avg{x_1^2} t.
\end{align}
Hence, plugging \eqref{eq:A0} and \eqref{eq:L} into \eqref{eq:AL}, we get
\begin{align}
    A(t;y,\vec z) = \mathbb E_\eta \ln \lr{1+\exp\lr{(y(t)-v_y t) \sqrt{1-\lambda} + (\vec z(t)-\vec v_z t) \eta}} + \frac{1-\lambda}{2} \avg{m^2} t + \frac{1}{2} \avg{x_1^2} t,
\end{align}
where we used $y_0 = y(t)-v_y t$ and $\vec z_0 = \vec z(t) - \vec v_z t$ as the trajectories are Galilean. Also, the velocities are
\begin{align}
    &v_y = \der[A(t; y,z)]{y} = \sqrt{1-\lambda} \: \avg{m},\\
    &\vec v_z = \der[A(t;y,z)]{\vec z} = \avg{x_1}.
\end{align}
Finally, the original free energy is recovered by setting $t=\beta$ and $y=\vec{z}=0$, namely
\begin{align}
  A(\alpha=0,\beta,\lambda) \equiv  A(-\beta;0,0) = \mathbb E_\eta \ln \lr{1+\exp\lr{\beta\lr{1-\lambda} \avg{m} + \beta \eta \avg{x_1}}} - \beta \frac{1-\lambda}{2} \avg{m^2} - \frac{\beta}{2} \avg{x_1^2}.
\end{align}
As expected, this expression coincides with eq. \eqref{eq:ARS} in the $\alpha\to 0$ limit and, its extremization w.r.t. the order parameters returns the two self-consistency equations \eqref{SC-Low2} and \eqref{SC-Low1}.
\end{proof}

\subsection{Phase diagram in the map high-storage regime}
\label{Guerra}

In the high-load regime ($\alpha >0$), we can not rely on the mechanical analogy as the concentration of measure argument used to kill the potential in the Hamilton-Jacobi PDE no longer works\footnote{Indeed, while each single variance could still be negligible (vanishing at a rate $1/N$ in the thermodynamic limit), now we are summing over an extensive number of them (as $K$ grows linearly with $N$ in the high-storage setting, that is for $\alpha >0$).}. Thus, here, we exploit the classical one-parameter interpolation based on stochastic stability \cite{guerra_broken}, adapted to deal with neural networks, see e.g. \cite{GuerraNN,Fachechi1}.

As standard in the high-storage investigation, we assume that a finite number of charts (actually just one) is retrieved and this map, say $\mu=1$, plays as the signal, while the remaining ones (i.e., those with labels $\nu \neq 1$) play as quenched noise against the formation of the coherent state for the retrieval of $\eta^{1}$.

The idea to solve for the free energy in this setting is to introduce an interpolating parameter $t\in[0,1]$ and an interpolating free energy $\mathcal A(t)$ such that, when $t=1$, the interpolating free energy recovers the free energy of the original model, i.e., $\mathcal A(t=1) = A(\alpha, \beta, \lambda)$, while, when $t=0$, the interpolating free energy recovers the free energy of an ``easy'' one-body system (where neurons interact with a suitably-constructed external field but their activity is no longer affected by the state of the other neurons). The main theorem we use in this section is the fundamental theorem of calculus  that plays as the natural bridge between these two extrema, as  
\begin{eqnarray}\label{InterMilan}
    A(\alpha, \beta, \lambda)&=&\mathcal A(t=1) = \mathcal A(t=0) + \int_0^1 ds \left. \left[\frac{d}{dt} \mathcal A(t)\right] \right \vert_{t=s}\\ \label{Mortaccitua}
    \mathcal A(t) &=& \lim_{N\to\infty} \frac{1}{N} \mathbb E_\eta  \ln \mathcal Z(t),
 \end{eqnarray}
where $\mathcal Z(t)$ is the interpolating partition function defined hereafter
\begin{definition}(Interpolating partition function)
The interpolating partition function $\mathcal Z(t)$ for the binary Battaglia-Treves model in the high-storage can be expressed as
\begin{equation}\label{eq:int_Z}
    \mathcal Z(t) = \sum_{s}^{2^N} \int d^d\mu(z_{\mu})\:\exp\lr{-\beta \mathcal H(t)},
\end{equation}  
where $\mathcal H(t)$ is the interpolation Hamiltonian and the Gaussian measure $d^d\mu(z_{\mu}) = \lr{\prod_{\mu>1} \frac{d^d z_\mu}{\sqrt{2\pi}} \exp\lr{-\frac{z_\mu^2}{2}}}$ has been introduced after applying the Hubbard-Stratonovich transformation to the Boltzmann-Gibbs measure of the model in order to linearize the quadratic contributions of the quenched noise (\emph{i.e.}, all the terms  $\mu>1$ in the Hamiltonian \eqref{eq:H_Hop}), resulting in the introduction of the $d-$dimensional \emph{i.i.d.} Gaussian distributed hidden neurons $z_\mu$ as follows:
\begin{align}\label{AgainDue}
    \exp\lr{\frac{1}{2} \frac{\beta}{N} \lr{\sum_{i,\mu>1} \eta^\mu_i s_i}^2} = \int d^d\mu(x) \:\exp\lr{\sqrt{\frac{\beta}{N}} \sum_{i,\mu>1} \eta^\mu_i \cdot z_\mu s_i}.
\end{align}
\end{definition}
The interpolating free energy is in turn based on the interpolating Hamiltonian $\mathcal H(t)$: the latter is a functional that combines two Hamiltonians, the \emph{true} Hamiltonian \eqref{eq:H_Hop} (obtained by setting $t=1$) and a \emph{new} Hamiltonian, $\mathcal H_0$ (obtained by setting $t=0$). The new Hamiltonian has to be constructed {\em ad hoc} with two requisites: it should allow an analytical treatment of its free energy (typically it has to be a sum of one-body models) and the effective field that it produces on the neuron $s_i$ has to mimic the true post synaptic potential  (and this is achieved by using linear combinations of independent random variables).   
\begin{definition}(Interpolating Hamiltonian)\label{InterpoAcca}  
Given an interpolating parameter $t \in [0,1]$ and the real-valued function $\phi(t)$, the interpolation Hamiltonian $\mathcal H(t)$ reads as
\begin{eqnarray}\label{eq:int_H}
-\beta \mathcal H(t) &=& \frac{t}{2} N\beta x_1^2 + \frac{t}{2} N \beta \lr{1-\lambda} m^2 + \sqrt{t}\sqrt{ \frac{\beta}{N}} \sum_{i,\mu>1} \eta^\mu_i s_i z_\mu + N \phi(t),\\ \label{One-Body-Term}
N\phi(t) &=& \phi_1(t)\sum_{\mu>1} x_\mu^2 + \phi_2(t) \sum_\mu \rho_\mu x_\mu + \phi_3(t) \sum_i h_i s_i + \phi_4(t) \sum_i \eta^1_i s_i + \phi_5(t) \sum_i s_i^2 + \phi_6(t) \sum_i s_i.
\end{eqnarray} 
The auxiliary functions $\phi_i(t)$ must obey $\phi_i(t=1) = 0$ (such that   $\phi(t=1) = 0$) and, when $t=0$, $\mathcal H(t=0)$ must be built of by effective one-body contributions only.
The specific functional form of $\phi(t)$, as well as its derivation, are provided in Appendix~\ref{GuerraAPP}. 
\end{definition}
\begin{remark}
Note that, in the integral representation of the partition function achieved by the Hubbard-Stratonovich transformation (see \eqref{AgainDue}), $K$ hidden variables  $z_{\mu}$  (selectively firing one per chart) naturally arise within the theory and, as a consequence, their related overlaps must be introduced as auxiliary order parameters, namely
\begin{eqnarray}
p_{12} &=& \frac{1}{K}\sum_{\mu=1}^K z_{\mu}^1  z_{\mu}^2,\\   
p_{11} &=& \frac{1}{K}\sum_{\mu=1}^K z_{\mu}^1  z_{\mu}^1.
\end{eqnarray}
However, these overlaps do not deserve a dedicated definition as they are dummy variables that will disappear in the final expression for the free energy.
\end{remark}
\begin{definition}(Generalized average) The generalized average $\langle \cdot\rangle_t$ is defined as
$$
\langle \cdot \rangle_t = \mathbb E_\eta [\omega_t(\cdot )],
$$    
where $\omega_t(\cdot )$ is the Boltzmann average stemming from the interpolating Boltzmann factor $\exp\lr{-\beta \mathcal H(t)}$ defined in eq. \eqref{eq:int_H} together with the interpolating partition function $\mathcal Z(t)$ defined in eq. \eqref{eq:int_Z}, while the operator $\mathbb E_\eta$ still averages over the quenched maps.
\end{definition}
In the following, if not otherwise specified, we refer to the generalized averages simply as $\langle \cdot \rangle$ in order to lighten the notation.\\
By a glance at \eqref{InterMilan} we see that we have to evaluate the Cauchy condition  $\mathcal A(t=0)$ -- that is straightforward as in $t=0$ the Gibbs measure is factorized over the neural activities -- and integrate its derivative  $\frac{d\mathcal A(t)}{dt}$ -- that is more cumbersome. 
\newline
Starting from the evaluation of the $t$-derivative of the interpolating free energy  $\mathcal A(t)$, we state the next 
\begin{proposition}\label{StreamingHSR}
Under the RS assumption and in the thermodynamic limit, the $t-$derivative of $\mathcal A$ can be written as
\begin{align}
    \frac{d\mathcal A_{RS}}{dt} = - \frac{\beta}{2} (1-\lambda)\overline m^2 - \frac{\beta}{2} \overline x^2 - \frac{\alpha \beta}{2d} \lr{\overline p_1 \overline q_1 - \overline p_2 \overline q_2}. 
\end{align}    
\end{proposition}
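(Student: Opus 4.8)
The plan is to run the standard stochastic-stability scheme: differentiate the interpolating free energy in $t$, reduce the resulting generalized average to a polynomial in the order parameters by Gaussian integration by parts, and finally collapse that polynomial onto its replica-symmetric value via Definition~\ref{selfaverage}. Since $\mathcal Z(t)$ in \eqref{eq:int_Z} depends on $t$ only through the Boltzmann factor $\exp(-\beta\mathcal H(t))$, differentiating under the thermodynamic-limit expectation yields
\[
  \der[\mathcal A]{t} = \frac1N\,\mathbb E_\eta\,\omega_t\!\lr{ \der{t}\lr{-\beta\mathcal H(t)} }
  = \frac1N\,\mathbb E_\eta\,\omega_t\!\lr{ \frac{N\beta}{2}x_1^2 + \frac{N\beta}{2}(1-\lambda)m^2 + \frac{1}{2\sqrt t}\sqrt{\tfrac{\beta}{N}}\sum_{i,\mu>1}\eta_i^\mu s_i z_\mu + N\phi'(t) },
\]
where \eqref{eq:int_H} was used; the first two terms are the signal contributions, the third (from the stochastic-stability coupling) is the noise contribution, and $N\phi'(t)$ collects the six counterterms $\phi_i'(t)$ of \eqref{One-Body-Term}.

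First I would dispose of the noise term. Each hidden unit $z_\mu$ carries the standard Gaussian prior and enters the exponent linearly, so Wick/Stein integration by parts applies, $\omega_t(z_\mu^{(a)}f) = \omega_t(\partial_{z_\mu^{(a)}}f) + \omega_t\!\lr{f\,\partial_{z_\mu^{(a)}}(-\beta\mathcal H)} - \omega_t(f)\,\omega_t\!\lr{\partial_{z_\mu^{(a)}}(-\beta\mathcal H)}$. Applied with $f = \eta_i^{\mu,(a)}s_i$, this converts the noise term into a replica-diagonal piece proportional to $\sum_{\mu>1}\avg{|x_\mu|^2}$ and a two-replica piece proportional to $\sum_{\mu>1}\avg{x_\mu^{1}\cdot x_\mu^{2}}$; carrying out the analogous integration by parts on the i.i.d.\ auxiliary fields $h_i$ and $\rho_\mu$ inside $\phi$ brings in the conjugate overlaps $p_{11}$ and $p_{12}$ of the $z$'s. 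The $1/d$ prefactor enters at this stage through the $S^{d-1}$ angular average $\avg{(\eta\cdot a)^2}_\eta = |a|^2/d$ — the $g(u)=u^2$ instance of \eqref{eq:exp1_d} — which sets the variance of the effective field $\tfrac{\sqrt\beta}{\sqrt N}\sum_\mu \eta_i^\mu\cdot z_\mu$ acting on each $s_i$.

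Next I would impose replica symmetry. By Definition~\ref{selfaverage} every order parameter concentrates, so $\sum_{\mu>1}\avg{|x_\mu|^2}\to\alpha\,\overline q_1$, $\sum_{\mu>1}\avg{x_\mu^1\cdot x_\mu^2}\to\alpha\,\overline q_2$, $\avg{p_{11}}\to\overline p_1$, $\avg{p_{12}}\to\overline p_2$, while the genuinely quadratic signal terms become $\avg{x_1^2}\to\overline x^2$ and $\avg{m^2}\to\overline m^2$ (with $\overline q_1 = \overline m$). Inserting the explicit $\phi_i(t)$ derived in Appendix~\ref{GuerraAPP} — chosen so that $\phi_i(1)=0$, so that $\mathcal H(0)$ is purely one-body in both $\bb s$ and $\bb z$, and so that the fields conjugate to $x_1$, $m$ and to the noise overlaps cancel term by term the linear and $t$-dependent remainders produced above (the counterterm conjugate to $x_1$ reversing the sign of the $\tfrac\beta2\overline x^2$ contribution, and likewise for $\tfrac\beta2(1-\lambda)\overline m^2$) — everything cancels except the three $t$-independent constants, which reassemble into
\[
  \der[\mathcal A_{RS}]{t} = -\frac{\beta}{2}(1-\lambda)\overline m^2 - \frac{\beta}{2}\overline x^2 - \frac{\alpha\beta}{2d}\lr{\overline p_1\overline q_1 - \overline p_2\overline q_2}.
\]

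The differentiation and the replica-symmetric collapse are routine; the hard part is the bookkeeping of the cancellations. One has to carry the Gaussian integration by parts over all auxiliary fields ($z_\mu$, $h_i$, $\rho_\mu$) jointly with the angular integrals over $S^{d-1}$, keep track of the replica structure generated at each step, and then verify that the particular sextuple $\phi_i(t)$ annihilates every term except the three displayed ones — in particular that the sign and the $1/d$ in the overlap contribution $-\tfrac{\alpha\beta}{2d}(\overline p_1\overline q_1 - \overline p_2\overline q_2)$ emerge correctly once the self-consistent relations between the $p$'s and the $q$'s are invoked.
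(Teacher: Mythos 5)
Your overall skeleton (differentiate $\mathcal A(t)$, integrate by parts, let the order parameters concentrate, and let the $\phi_i$ absorb the cross terms) is the same as the paper's, but the central step is carried out on the wrong variables, and this breaks the argument. The paper evaluates the noise term $\sum_{i,\mu>1}\avg{\eta_i^\mu s_i z_\mu}$ by integrating by parts \emph{over the quenched maps} $\eta_i^{\mu}$: it first replaces the spherical expectation $\mathbb E_\eta$ by a Gaussian $\mathcal N(0,\mathbf 1_d/d)$ (a Stein-lemma/universality step, which is where the $1/d$ really enters), and then Wick--Isserlis on $\eta_i^{\mu,(q)}$ produces directly $\frac{1}{d}\sqrt{\beta/N}\,NK\lr{\avg{q_{11}p_{11}}-\avg{q_{12}p_{12}}}$, i.e.\ the product of the visible overlap $q$ with the hidden overlap $p$. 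You instead integrate by parts over the hidden units $z_\mu$. That operation, with $f=\eta_i^{\mu}s_i$, yields only visible-layer quantities — replica overlaps of the non-condensed Mattis magnetizations, $\sum_{\mu>1}\avg{|x_\mu|^2}$ and $\sum_{\mu>1}\avg{x_\mu^1\cdot x_\mu^2}$ — and the hidden overlaps $p_{11},p_{12}$ never appear from that term. Your subsequent identifications $\sum_{\mu>1}\avg{|x_\mu|^2}\to\alpha\,\overline q_1$ and $\sum_{\mu>1}\avg{x_\mu^1\cdot x_\mu^2}\to\alpha\,\overline q_2$ are false: for a non-condensed map, $N\avg{|x_\mu|^2}=\frac1N\sum_{ij}\avg{\eta_i^\mu\eta_j^\mu s_is_j}$ equals $\overline q_1/d$ only at zeroth order in the correlations induced by the Boltzmann weight, and it is precisely those corrections that generate the conjugate overlaps $\overline p_1,\overline p_2$ (cf.\ the replica computation in Appendix~A, where they arise as the multipliers of $q_{ab}$). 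With your shortcut the term $-\frac{\alpha\beta}{2d}(\overline p_1\overline q_1-\overline p_2\overline q_2)$ cannot be obtained.

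Two smaller points. First, the signal contributions $-\frac\beta2\overline x^2$ and $-\frac\beta2(1-\lambda)\overline m^2$ do not come from a counterterm ``reversing the sign'': they come from centering, $\frac\beta2\avg{x_1^2}+\phi_4'\avg{x_1}=\frac\beta2\avg{\Delta^2x}-\frac\beta2\overline x^2$ with $\phi_4'=-\beta\overline x$, after which the variance $\avg{\Delta^2 x}$ is killed by replica symmetry; the paper's proof makes this explicit and it is worth stating precisely. Second, the proposition holds \emph{before} the saddle-point relations \eqref{GDC-O1}--\eqref{GDC-O2} between the $p$'s and the $q$'s are imposed — those are only used afterwards, in the corollary to Theorem~\ref{TzeTze} — so they should not be invoked to establish the displayed formula for $d\mathcal A_{RS}/dt$.
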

\begin{proof}
The proof works by direct calculation and by requiring that $\lim_{N \to \infty}\mathcal{P}(q_{11})=\delta \left(q_{11}-\bar{q}_1\right)$ and the same for $q_{12}, \ p_{11},\ p_{12}$.  See Appendix~\ref{GuerraAPP} for details.   
\end{proof}
For the Cauchy condition we can state the next
\begin{proposition}\label{CauchyHS}
The Cauchy condition $\mathcal A(t=0)$, related to the expression of the interpolating free energy $\mathcal A(t)$ provided in eq. \eqref{Mortaccitua}, in the thermodynamic limit reads as
\begin{align}
    \mathcal A(t=0) = \frac{\alpha d}{2}\ln \frac{2\pi}{1-2\phi_1(0)} + \frac{\alpha}{2} \mathbb E_\rho\lr{ \frac{\phi^2_2(0) \rho^2}{1-2\phi_1(0)} }+ \mathbb E_{\eta^1} \int Dz \ln \lr{1+ \exp\lr{\phi_3(0) z + \phi_4(0) \eta_i^1 + \phi_5(0) + \phi_6(0)}},
\end{align}
where we have rewritten $h_i$ as the Gaussian variable $z \in \mathcal N(0,1)$.
\end{proposition}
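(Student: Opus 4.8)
The plan is to exploit the single feature that makes the Cauchy point tractable: at $t=0$ the interpolating Hamiltonian of Definition~\ref{InterpoAcca} degenerates into a sum of one-body terms, so the associated Gibbs measure factorizes completely over both sites and charts. First I would set $t=0$ in \eqref{eq:int_H}: this kills the two-body pieces $\tfrac{t}{2}N\beta x_1^2$ and $\tfrac{t}{2}N\beta(1-\lambda)m^2$ as well as the Gaussian-coupling term $\sqrt{t}\sqrt{\beta/N}\sum_{i,\mu>1}\eta^\mu_i s_i z_\mu$, leaving $-\beta\mathcal H(0)=N\phi(0)$. Reading $\phi(0)$ off \eqref{One-Body-Term}, it contains (by the construction requirement of Definition~\ref{InterpoAcca}) only one-body contributions: for the hidden vectors $z_\mu$ ($\mu>1$) a quadratic self-term with weight $\phi_1(0)$ and a linear coupling to the auxiliary variables $\rho_\mu$ with weight $\phi_2(0)$; for the neurons $s_i$ the linear terms with weights $\phi_3(0)$ (coupled to the auxiliary Gaussian field $h_i$), $\phi_4(0)$ (coupled to the signal $\eta^1_i$), $\phi_5(0)$ (the $s_i^2$ term) and $\phi_6(0)$. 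Substituting into \eqref{eq:int_Z}, $\mathcal Z(0)$ becomes a product over the charts $\mu>1$ of $d$-dimensional Gaussian integrals times a product over the sites $i$ of two-state sums.

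Next I would carry out the per-chart Gaussian integral. Since the measure $d^d\mu(z_\mu)$ already carries the weight $e^{-z_\mu^2/2}$, each chart contributes
\begin{align*}
\int d^d\mu(z_\mu)\,\exp\!\lr{\phi_1(0)\,z_\mu^2 + \phi_2(0)\,\rho_\mu\cdot z_\mu} \propto \lr{1-2\phi_1(0)}^{-d/2}\exp\!\lr{\frac{\phi_2^2(0)\,\rho_\mu^2}{2\lr{1-2\phi_1(0)}}},
\end{align*}
obtained by completing the square (convergence requires $\phi_1(0)<1/2$; the $d$-dependent proportionality constant is fixed by the normalization chosen for $d^d\mu$). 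Applying $\tfrac1N\mathbb E\ln$ to the product over the $K-1\simeq K$ charts, using $K/N\to\alpha$ and averaging over the i.i.d.\ $\rho_\mu$, reproduces — up to the constant carried by $d^d\mu$ — the first two terms of the claim, $\tfrac{\alpha d}{2}\ln\tfrac{2\pi}{1-2\phi_1(0)}+\tfrac{\alpha}{2}\mathbb E_\rho\!\lr{\tfrac{\phi_2^2(0)\rho^2}{1-2\phi_1(0)}}$.

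Then I would perform the single-site neuronal sum. For Boolean $s_i\in\{0,1\}$ one has $s_i^2=s_i$, so the $\phi_5(0)$ and $\phi_6(0)$ terms merge and
\begin{align*}
\sum_{s_i\in\{0,1\}}\exp\!\lr{\lr{\phi_3(0)h_i + \phi_4(0)\eta^1_i + \phi_5(0) + \phi_6(0)}s_i} = 1 + \exp\!\lr{\phi_3(0)h_i + \phi_4(0)\eta^1_i + \phi_5(0) + \phi_6(0)}.
\end{align*}
Taking $\tfrac1N\mathbb E_\eta\ln$ of the product over $i$ and invoking the i.i.d.\ structure of $(\eta^1_i,h_i)$ makes the site index dummy; renaming the auxiliary field $h_i\mapsto z\sim\mathcal N(0,1)$ and keeping the average over $\eta^1$ gives $\mathbb E_{\eta^1}\int Dz\,\ln\!\lr{1+\exp\!\lr{\phi_3(0)z + \phi_4(0)\eta^1_i + \phi_5(0) + \phi_6(0)}}$. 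Adding the contributions of the hidden-variable and neuronal parts yields exactly the stated expression for $\mathcal A(t=0)$.

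The computation is essentially bookkeeping — the text itself flags the Cauchy condition as ``straightforward'' because the $t=0$ measure is factorized — so the only delicate points are: (i) being faithful to the explicit form of $\phi(t)$ derived in Appendix~\ref{GuerraAPP} when deciding which auxiliary fields ($z_\mu$, $\rho_\mu$, $h_i$) survive at $t=0$ and in which term, and (ii) tracking the $d$-dimensional Gaussian normalization carefully enough that the coefficient $\tfrac{\alpha d}{2}$ and the factor $1-2\phi_1(0)$ in the denominators come out with the right powers.
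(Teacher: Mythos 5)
Your proposal is correct and follows essentially the same route as the paper: set $t=0$ so that only the one-body functional $N\phi(0)$ survives, factorize $\mathcal Z(0)$ into per-chart Gaussian integrals (evaluated by completing the square, yielding the $(1-2\phi_1(0))^{-d/2}$ prefactor and the $\rho_\mu^2$ exponent) and per-site Boolean sums, then apply $\tfrac1N\mathbb E\ln$ with $K/N\to\alpha$ and rename $h_i$ as $z\sim\mathcal N(0,1)$. Your remarks on the convergence condition $\phi_1(0)<1/2$ and on tracking the $d$-dimensional normalization of the Gaussian measure are sensible refinements of the paper's terser presentation, but do not change the argument.
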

\begin{proof}
The proof is a direct trivial calculation (as at $t=0$ there are no interactions among neural activities) with the usage of the self-averaging assumption of the order parameters. See Appendix~\ref{GuerraAPP} for details.
\end{proof}
These results merge together thanks to the fundamental theorem of calculus (see \eqref{InterMilan}) that we use to state the next  
\begin{theorem}\label{TzeTze}
In the thermodynamic limit, the replica-symmetric quenched free-energy of the Battaglia-Treves
model, equipped with McCulloch-Pitts neurons as defined in eq. \eqref{def:cost_function}, for the $S^{d-1}$ embedding space, can be expressed in terms of the (mean values of the) order parameters $\overline{m}, \overline{x}, \overline{q}_1, \overline{q}_2, \overline{p}_1, \overline{p}_2$ and of the control parameters $\alpha, \beta, \lambda$, as follows
\begin{align}
A_{RS} =& - \frac{\beta}{2} (1-\lambda)\overline m^2 - \frac{\beta}{2} \overline x^2 - \frac{\alpha \beta}{2d} \lr{\overline p_1 \overline q_1 - \overline p_2 \overline q_2} - \frac{\alpha d}{2} \ln \lr{1-\frac{\beta}{d}(\overline q_1-\overline q_2)} + \frac{\alpha \beta}{2}\frac{\overline q_2}{1-\frac{\beta}{d}(\overline q_1-\overline q_2)} +\nonumber\\
&+ \mathbb E_{\eta} \int Dz \ln \lr{1+ \exp\lr{ \beta (1-\lambda) \overline m + \beta \overline x \cdot \eta + \frac{\alpha\beta}{2d}(\overline p_1-\overline p_2) + \sqrt{\frac{\alpha \beta}{d} \overline p_2} \:z }}. \label{TeoExt}
\end{align}
\newline
Chart cell's overlaps $p_1$ and $p_2$ can be substituted by their saddle-point values, obtained by differentiating $A_{RS}$ w.r.t. $q_1$ and $q_2$, \emph{i.e.} $\pder[A^{RS}]{q_1} = 0, \pder[A^{RS}]{q_2} = 0$, 
\begin{align}\label{GDC-O1}
    &\overline p_2 = \frac{\beta \overline q_2}{\lr{1-\frac{\beta}{d}(\overline q_1-\overline q_2)}^2},\\ \label{GDC-O2}
    &\overline p_1-\overline p_2 = \frac{d}{1-\frac{\beta}{d}(\overline q_1-\overline q_2)}.
\end{align}
\end{theorem}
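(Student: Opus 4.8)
The plan is to assemble the three ingredients already on the table — the $t$-derivative of the interpolating free energy (Proposition \ref{StreamingHSR}), the Cauchy endpoint $\mathcal A(t=0)$ (Proposition \ref{CauchyHS}), and the sum rule \eqref{InterMilan} coming from the fundamental theorem of calculus — and then fix the still-undetermined auxiliary functions $\phi_i(0)$ by the requirement that the interpolation be ``telescopic'', i.e. that the effective one-body field at $t=0$ reproduce the true post-synaptic potential at $t=1$. Concretely, first I would substitute the expression from Proposition \ref{StreamingHSR} into \eqref{InterMilan}; since under the replica-symmetry assumption that derivative is $t$-independent (all order parameters have been replaced by their self-averaging values $\overline m,\overline x,\overline q_1,\overline q_2,\overline p_1,\overline p_2$), the time integral $\int_0^1 ds\,[d\mathcal A/dt]|_{t=s}$ is trivial and just contributes the bracket $-\tfrac{\beta}{2}(1-\lambda)\overline m^2-\tfrac{\beta}{2}\overline x^2-\tfrac{\alpha\beta}{2d}(\overline p_1\overline q_1-\overline p_2\overline q_2)$ verbatim. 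Adding $\mathcal A(t=0)$ from Proposition \ref{CauchyHS} then gives $A_{RS}$ as a function of the $\phi_i(0)$.

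Next I would pin down the $\phi_i(0)$. The Gaussian integral over the hidden variables $z_\mu$ in the Cauchy term produces, after completing the square, a $\tfrac{\alpha d}{2}\ln\tfrac{2\pi}{1-2\phi_1(0)}$ contribution and a $\rho$-dependent Gaussian remainder; matching this against the $-\tfrac{\alpha d}{2}\ln(1-\tfrac{\beta}{d}(\overline q_1-\overline q_2))$ and $\tfrac{\alpha\beta}{2}\tfrac{\overline q_2}{1-\frac{\beta}{d}(\overline q_1-\overline q_2)}$ structure demanded in \eqref{TeoExt} forces $1-2\phi_1(0)=1-\tfrac{\beta}{d}(\overline q_1-\overline q_2)$, and likewise fixes $\phi_2(0)$ through the variance of the noise field carried onto $s_i$. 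For the visible part, the argument of the logarithm $\ln(1+\exp(\phi_3(0)z+\phi_4(0)\eta_i^1+\phi_5(0)+\phi_6(0)))$ must coincide with $\beta(1-\lambda)\overline m+\beta\overline x\cdot\eta+\tfrac{\alpha\beta}{2d}(\overline p_1-\overline p_2)+\sqrt{\tfrac{\alpha\beta}{d}\overline p_2}\,z$; this identifies $\phi_3(0)=\sqrt{\tfrac{\alpha\beta}{d}\overline p_2}$ (the standard deviation of the retarded self-interaction / Onsager term), $\phi_4(0)$ together with $\phi_6(0)$ reconstructing $\beta(1-\lambda)\overline m$ and $\tfrac{\alpha\beta}{2d}(\overline p_1-\overline p_2)$, $\phi_5(0)=0$ since $s_i^2=s_i$ for Boolean neurons, and the $\beta\overline x\cdot\eta$ piece coming from the $\vec z$-coupled term in $\mathcal H(t)$ at $t$ evaluated through the retrieved map. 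These matchings are precisely what makes the interpolation ``sum to the right thing'', and they are carried out in detail in Appendix \ref{GuerraAPP}; here I would only quote them.

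Finally, the saddle-point relations \eqref{GDC-O1}--\eqref{GDC-O2} for $\overline p_1,\overline p_2$ follow by differentiating the assembled $A_{RS}$ with respect to $\overline q_1$ and $\overline q_2$ and setting the derivatives to zero: the only $\overline q$-dependence sits in the $-\tfrac{\alpha\beta}{2d}(\overline p_1\overline q_1-\overline p_2\overline q_2)$ term, the $\ln$ term, and the $\tfrac{\alpha\beta}{2}\overline q_2/(1-\tfrac{\beta}{d}(\overline q_1-\overline q_2))$ term, so $\partial_{\overline q_1}A_{RS}=0$ and $\partial_{\overline q_2}A_{RS}=0$ give two linear equations in $\overline p_1-\overline p_2$ and $\overline p_2$ whose solution is exactly \eqref{GDC-O1}--\eqref{GDC-O2}; substituting these back collapses $-\tfrac{\alpha\beta}{2d}(\overline p_1\overline q_1-\overline p_2\overline q_2)+\tfrac{\alpha\beta}{2}\overline q_2/(1-\tfrac{\beta}{d}(\overline q_1-\overline q_2))$ into the single compact fraction $-\tfrac{\alpha\beta}{2}\tfrac{\overline q_1-\frac{\beta}{d}(\overline q_1-\overline q_2)^2}{(1-\frac{\beta}{d}(\overline q_1-\overline q_2))^2}+\tfrac{\alpha\beta}{2}\tfrac{\overline q_2}{1-\frac{\beta}{d}(\overline q_1-\overline q_2)}$ appearing in \eqref{eq:ARS}, reconciling Theorem \ref{TzeTze} with the earlier statement. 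The main obstacle is not any single step but the bookkeeping in Proposition \ref{StreamingHSR}: showing that all the ``dangerous'' quadratic fluctuation terms generated by $d/dt$ of the $z_\mu$-coupling either cancel against the $\phi_i'(t)$ one-body terms or vanish under replica symmetry as $N\to\infty$ — in particular that the cross-overlap contributions organize into exactly $\overline p_1\overline q_1-\overline p_2\overline q_2$ — requires careful handling of the $d$-dimensional Hubbard–Stratonovich fields and is where the RS ansatz does the real work; I would relegate that computation to the Appendix and keep the main-text proof at the level of ``combine \eqref{InterMilan}, Propositions \ref{StreamingHSR} and \ref{CauchyHS}, fix $\phi_i(0)$ by field-matching, then extremize over $\overline q_1,\overline q_2$''.
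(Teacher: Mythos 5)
Your plan is correct and is essentially the paper's own proof: Theorem~\ref{TzeTze} is obtained by inserting Propositions~\ref{StreamingHSR} and~\ref{CauchyHS} into the sum rule \eqref{InterMilan} (the $t$-integral being trivial because the RS derivative is $t$-independent) and the relations \eqref{GDC-O1}--\eqref{GDC-O2} follow from $\partial_{\overline q_1}A_{RS}=\partial_{\overline q_2}A_{RS}=0$ exactly as you compute. The only caveat is that the paper pins down the $\phi_i(0)$ not by matching the Cauchy term against the target expression (which, as a derivation, would be circular) but by solving backwards from $\phi_i(1)=0$ the ODE system that cancels the order-parameter expectations in $d\mathcal A/dt$, giving $\phi_1(0)=\tfrac{\beta}{2d}(\overline q_1-\overline q_2)$, $\phi_2(0)=\sqrt{\beta\overline q_2/d}$, $\phi_3(0)=\sqrt{\alpha\beta\overline p_2/d}$, $\phi_4(0)=\beta\overline x$ (it is this term, coupled to $\eta_i^1 s_i$, that carries the $\beta\overline x\cdot\eta$ signal, not the Hubbard--Stratonovich piece), $\phi_5(0)=\tfrac{\alpha\beta}{2d}(\overline p_1-\overline p_2)$ and $\phi_6(0)=\beta(1-\lambda)\overline m$, so the values you quote are thereby derived rather than assumed.
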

\begin{proof}
See Appendix~\ref{GuerraAPP}.     
\end{proof}
\begin{corollary}
By inserting the expressions for the chart cell overlaps reported at the r.h.s. of eq.s \eqref{GDC-O1}, \eqref{GDC-O2} within the expression of the  replica-symmetric quenched free-energy \eqref{TeoExt} we obtain the r.h.s. of eq. \eqref{eq:ARS} and thus Theorem $1$ is recovered. 
\end{corollary}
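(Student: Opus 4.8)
The statement is a direct algebraic consequence of Theorem~\ref{TzeTze}: it suffices to substitute the saddle-point values \eqref{GDC-O1}--\eqref{GDC-O2} of the chart-cell overlaps into \eqref{TeoExt} and simplify. The plan is therefore to track each $\overline p_1,\overline p_2$-dependent term of \eqref{TeoExt} separately. Throughout I abbreviate $\Delta := 1-\tfrac{\beta}{d}(\overline q_1-\overline q_2)$, so that \eqref{GDC-O1}--\eqref{GDC-O2} read $\overline p_2 = \beta\overline q_2/\Delta^2$ and $\overline p_1-\overline p_2 = d/\Delta$. Note first that the three $\overline p$-free terms of \eqref{TeoExt} --- namely $-\tfrac{\beta}{2}(1-\lambda)\overline m^2-\tfrac{\beta}{2}\overline x^2$, the entropic term $-\tfrac{\alpha d}{2}\ln\Delta$, and $\tfrac{\alpha\beta}{2}\,\overline q_2/\Delta$ --- already coincide verbatim with the corresponding terms of \eqref{eq:ARS}, so there is nothing to do there.

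For the argument of the logarithm inside $\mathbb E_\eta\int Dz$, I would insert the two saddle-point identities into $\tfrac{\alpha\beta}{2d}(\overline p_1-\overline p_2)+\sqrt{\tfrac{\alpha\beta}{d}\overline p_2}\,z$: the deterministic shift becomes $\tfrac{\alpha\beta}{2d}\cdot\tfrac{d}{\Delta}=\tfrac{\alpha\beta}{2\Delta}$ and the noise coefficient becomes $\sqrt{\tfrac{\alpha\beta}{d}\cdot\tfrac{\beta\overline q_2}{\Delta^2}}=\tfrac{\beta}{\Delta}\sqrt{\tfrac{\alpha\overline q_2}{d}}$, where one uses $\Delta>0$ (which holds in the replica-symmetric region of interest, being also the condition under which the $\ln\Delta$ term is well defined). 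Collecting these, the exponent becomes $\beta(1-\lambda)\overline m+\beta\overline x\cdot\eta+\beta\big(\tfrac{\alpha}{2}+\sqrt{\tfrac{\alpha\overline q_2}{d}}\,z\big)/\Delta$, which is exactly the internal field $h(z)$ appearing in \eqref{eq:ARS}.

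The only term requiring a genuine (if short) manipulation is the quadratic ``energetic'' contribution $-\tfrac{\alpha\beta}{2d}(\overline p_1\overline q_1-\overline p_2\overline q_2)$. Here I would write $\overline p_1\overline q_1-\overline p_2\overline q_2 = (\overline p_1-\overline p_2)\overline q_1+\overline p_2(\overline q_1-\overline q_2)$, substitute \eqref{GDC-O1}--\eqref{GDC-O2}, bring everything over the common denominator $\Delta^2$ using $\overline q_1\Delta = \overline q_1-\tfrac{\beta}{d}\overline q_1(\overline q_1-\overline q_2)$, and observe that the numerator telescopes to $\overline q_1-\tfrac{\beta}{d}(\overline q_1-\overline q_2)^2$; thus $-\tfrac{\alpha\beta}{2d}(\overline p_1\overline q_1-\overline p_2\overline q_2) = -\tfrac{\alpha\beta}{2}\,\big(\overline q_1-\tfrac{\beta}{d}(\overline q_1-\overline q_2)^2\big)/\Delta^2$, precisely the remaining term of \eqref{eq:ARS}. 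Assembling the three blocks yields \eqref{eq:ARS}, proving the Corollary (and hence Theorem~1). The ``hard part'' is really only bookkeeping --- keeping the $\beta/d$ prefactors and the $\Delta$ versus $\Delta^2$ denominators straight while combining fractions --- there being no analytic difficulty. As a consistency check one may verify that extremizing the reduced functional \eqref{eq:ARS} over $\overline m,\overline x,\overline q_1,\overline q_2$ reproduces the self-consistency equations \eqref{eq:tre}--\eqref{eq:due}; this is compatible with the fact that \eqref{GDC-O1}--\eqref{GDC-O2} are themselves the stationarity conditions $\partial_{\overline q_1}A_{RS}=\partial_{\overline q_2}A_{RS}=0$, so substituting them back is legitimate on the extremal manifold.
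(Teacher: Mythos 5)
Your computation is correct and is exactly the route the paper takes (the same substitution is carried out explicitly at the end of Appendix~\ref{Replicas}): the only nontrivial step is the identity $\overline p_1\overline q_1-\overline p_2\overline q_2=(\overline p_1-\overline p_2)\overline q_1+\overline p_2(\overline q_1-\overline q_2)=d\big(\overline q_1-\tfrac{\beta}{d}(\overline q_1-\overline q_2)^2\big)/\Delta^2$, which you verify correctly, and the remaining terms match term by term. Your remark that $\Delta>0$ is needed to extract $\beta/\Delta$ from the square root is a sensible (if tacit in the paper) observation; nothing is missing.
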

\begin{remark}\label{Appendici}
For the sake of completeness we also provide the same analysis, performed via the standard replica-trick rather than with Guerra's interpolation, in Appendix \ref{Replicas}. Furthermore, by looking for the extremal points of such a free energy, it is a simple exercise to obtain  the same self-consistencies for the order parameters reported in eq.s \eqref{eq:tre}-\eqref{eq:due}.
\end{remark}

\section{From rats on a circular track to bats in the tunnel}\label{sec:Tunnel}
The model so far investigated was originally developed to mimic real neurons in rats exploring (or rather foraging in) small boxes, or short tracks, each of which could be idealized as being represented by an unrelated chart, in which each neuron has a place field of standard size. We can now turn it into a model inspired by recent experiments with bats \cite{Batman1}, of the neural representation of a single extended environment, expressed by place fields of widely different sizes, from very long to very short. For simplicity, we stick again to 1D environments.

Indeed, recent experiments recording CA1 place cells either in rats running on long tracks \cite{fenton2008unmasking,rich2014large} or bats flying in long tunnels \cite{Batman1} have evidenced that individual place cells can have multiple place fields, of remarkable variability in size and peak firing rate. In the following, extending the formalism developed above, we consider a mathematical model which assigns this variability to distinct portions of the environment, representing them with a sequence of gross, or fine-grained, charts. Alongside "traditional" place cells, which participate in every chart, and hence have multiple place fields of variable width, the model envisages "chart" cells, which are active only when the current position is within their chart. Note that the very broad distribution of place field widths observed in particular in bats (where some place fields are smaller than a meter, while others are larger than 30 meters, resulting in a log-normal distribution), is not reproduced at the individual cell level, in the model, but only as a population-coherent scaling of the local metric.  
\newline
This can be seen as a variant of the Battaglia-Treves model, motivated by mathematics more than by neuroscience. 
\newline
Instead of covering the whole space with a single map, or chart, we combine a number $K=\alpha N$ of maps $\eta^\mu$ to split the embedding space $I_L=[0,L]$ (the one-dimensional interval that represents the environment) in patches, by distributing the place field's centers at different locations $\overline r^\mu \in I_L$ and assigning them different sizes, denoted by the parameters $\sigma^\mu$ ({\em vide infra}).
\newline
The characteristic width of the coherent states that are reconstructed by the model (previously indicated by $\delta$), beyond being field-dependent now,  has also to be kept much smaller than the length of the tunnel, \emph{i.e.} $\delta\ll L$ in order to reconstruct the position of the animal with good accuracy. 
\newline
Secondly, driven by the duality (between the original place cells, now with multiple fields along $L$, and the chart cells, acting as hidden units), we show that by suitably coupling these hidden variables we can turn the original model for chart reconstruction into a behavioral model, namely into a model for spatial navigation too, both externally (see Sec. \ref{external-driven}) or internally  (see Sec. \ref{internal-driven}) driven. 
\newline
Indeed,  the environment sensed by the animal is now dynamical (i.e. it is no longer the same constant metric perceived when confined in a small cage) hence, in Sec. \ref{external-driven},  we  introduce a time-dependent external field (i.e. the input perceived by the bat), that is moving at its same speed  and guides the animal trough the tunnel: see Figure \ref{fig:tcycle}. 
\newline   
In Sec. \ref{internal-driven}, instead, we assume that the bat has learnt how place fields are dislocated within the environment and we show how, by introducing a simple coupling among their corresponding chart cells, such a minimal generalization of the Battaglia-Treves model can account for autonomous motion within the environment.

\begin{figure}[!h]
    \centering
    \includegraphics[width=18cm]{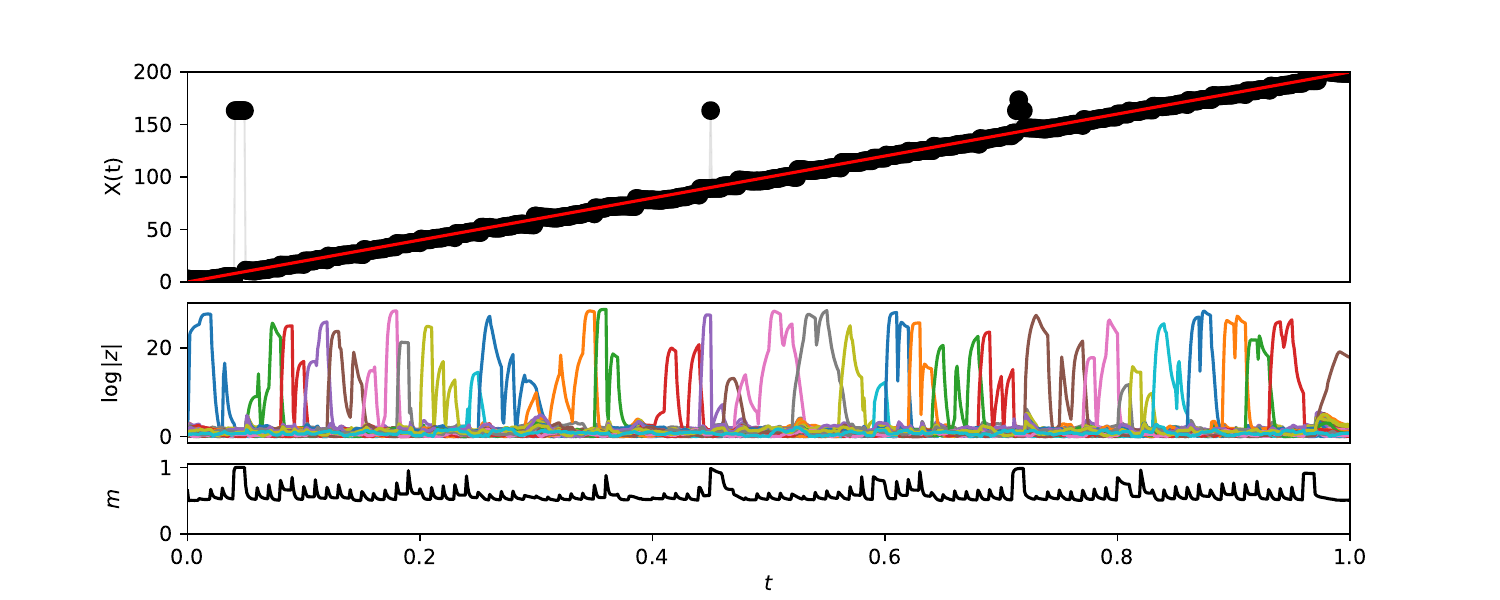}
    \caption{Simulation of the  bat traveling in the tunnel driven by the external field $h(t)$. \emph{Top} The motion of the bat follows the field, as expected, almost everywhere apart from a small number of points where the model fails to reconstruct the position (i.e. the upper black points away from the red straight line). \emph{Middle} Chart cell firing patterns during the motion of the animal in the tunnel are highly correlated with the position of the animal. \emph{Bottom} The overall activity of the model is also shown as a function of time and  the points where chart reconstruction fails correspond to higher level of neural activity. \\
    The simulation features $N=8000$ visible neurons $\mathbf s$ and $K=40$ chart cells $\mathbf z$, at a temperature of $\beta^{-1}=10^{-3}$.}
    \label{fig:tcycle}
\end{figure}

\begin{figure}
    \centering
    \includegraphics[width=10cm]{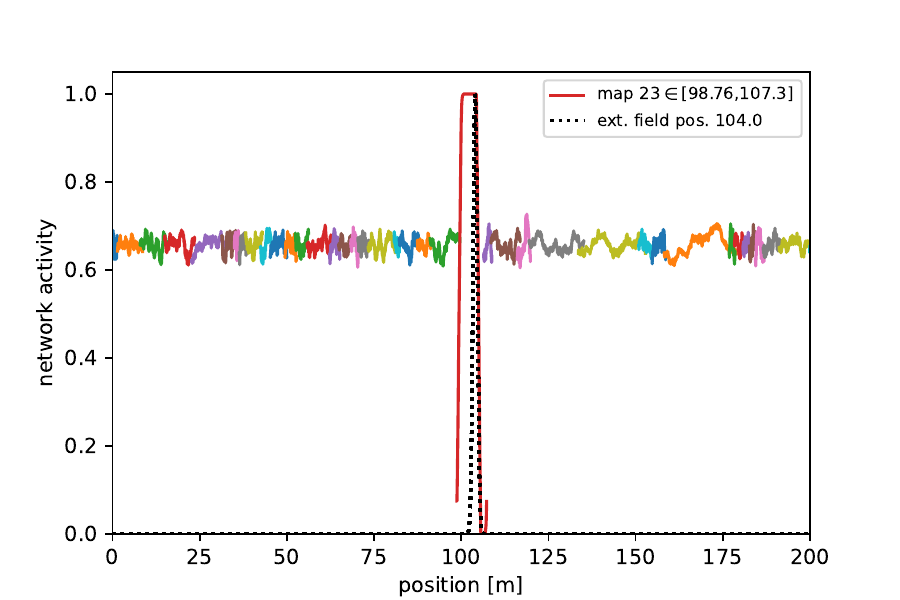}
    \caption{A simulation of the reconstruction process that allows the network to recognize the position of the external field $\mathbf{h}$ (in this case $104.0$m). The model activity is shown for each chart (for a total of $K=40$ charts) distributed along the tunnel at different positions: the chart with the highest activity is the number $23^{th}$ (with an overlap of $\overline x_{23}=0.28$ in this case) which covers the tunnel's space in the interval $[98.76,107.3]$m (as shown in the legend): in that space  the external Gaussian bump is actually found.}
    \label{fig:rec}
\end{figure}

\subsection{Numerical experiments part one: external-driven motion}\label{external-driven}
We assume that the space along the tunnel is uniformly covered with maps, but each maps has a different width $\sigma^\mu$, with distribution $\rho\lr{\sigma^\mu}$. According with experiments \cite{Batman1}, we assume that $\rho\lr{\sigma^\mu}$ is \emph{log-normal}, while the density of neurons in each chart, $\rho\lr{r^\mu_i|\overline r^\mu;\sigma^\mu}$, is uniform. These assumptions are here summarized:
\begin{align}
    &\rho\lr{\sigma^\mu|\overline r^\mu, \sigma} = \log \mathcal N(\overline r^\mu,\sigma),\label{eq:rhos0}\\   \label{eq:rhos1}
    &\rho\lr{r^\mu_i|\overline r^\mu;\sigma^\mu}=\mathcal U_{I^\mu},
\end{align}
where $I^\mu = \left[\overline r^\mu-\frac{\sigma^\mu}{2},\overline r^\mu+\frac{\sigma^\mu}{2}\right]$ is the interval centered around $\overline r^\mu$ and width $\sigma^\mu$; the parameter  $\sigma$  appearing in eq. \eqref{eq:rhos0} is a free parameter  of the model.
Note that each map is periodic in its domain, $I^\mu$, given the definition of the map coordinates
\begin{align}
    \eta^\mu(r|\overline r^\mu,\sigma^{\mu}) = \lr{\cos\lr{\frac{2\pi}{\sigma^\mu} (r-\overline r^\mu)},\sin\lr{\frac{2\pi}{\sigma^\mu} (r-\overline r^\mu)}},
\end{align}
where the chart centers $\overline r^\mu$ are distributed according to a prior $\mathcal{P}(\overline r^\mu)$. 
Herafter,  we call $t$ the time  such that the sensory input of the animal in the tunnel is represented by a time dependent external field $h(t)$, which is added to the bare Hamiltonian $H$ of the model as $H(t)=H + \sum_i h_i(t) s_i$. This new term in the energy, i.e. $\sum_i h_i(t) s_i$ account for the motion of the bat and it produces a bias in the update equations for the neural dynamics (see the stochastic process coded in eq. \eqref{dinamical}), which can be explicitly written as
\begin{align}\label{eq:dyn}
    &s^{t+1}_i = \Theta\lr{\sigma\lr{\beta \mathcal V_i(\mathbf s^t)} - u_i},\:\:u_i\in \mathcal U_{[0,1]},\\
    &\mathcal V_i(\mathbf s^t) = \frac{1}{2}\sum_j J_{ij} s^t_j + h_i(t),
\end{align}
where $u_i\in \mathcal U_{[0,1]}$ is a random number uniformly distributed in the interval $[0,1]$, $\Theta(x)$ is the Heaviside step function and $\mathcal V_i(\mathbf s^t)$ is the overall post-synaptic potential (which depends on the synaptic matrix \eqref{eq:kernel1} and is comprehensive of the external field $h_i(t)$ too) acting on the $i$-th neuron $s_i$.\\
The explicitly time-dependent field, $h_i(t)$  enables the movement of the bump of neural activity representing the position of the bat along the tunnel if the external field follows the bat's motion. For the sake of simplicity, we assume its motion to be Galilean with constant velocity $v$, in such a way that $h(t)$ can be modeled as a Gaussian bump traveling with constant velocity $v$ and represented in the place cells coordinate system as
\begin{align*}
    h_i(t) = \exp\lr{-\frac{(r_i^\nu-vt)^2}{2\delta^2_0}},
\end{align*}
with $\delta^2_0$ the size of the bump; $r_i^\nu$ is the coordinate  relative to the  $\nu-$th grandmother cell, $\eta^\nu$; the index $\nu$ is chosen with the following criterion (see Figure \ref{fig:rec}): a collection of external fields $h^\mu(t)$ is computed for each map $\mu$, then the map with the highest average activity (that is, the map with the highest overlap with the external field) is chosen and the relative index is indicated as $\nu$\footnote{Note that the quantity $\frac{1}{N}\sum_i h_i(t)$ can be approximated, in the thermodynamic limit $N\to\infty$, by the integral 
$$\int_{\overline r^\nu-\frac{\sigma^\nu}{2}}^{\overline r^\nu+\frac{\sigma^\nu}{2}} dr\: h(r,t)\sim \sqrt{2\pi}\frac{\delta_0}{\sigma^\nu}$$ 
(as it is the integral on a finite domain of a Gaussian distribution whose center is $vt$ within the interval $[\overline r^\nu-\frac{\sigma^\nu}{2},\overline r^\nu+\frac{\sigma^\nu}{2}]$ and whose  standard deviation reads $\delta_0\ll \sigma^\nu$) that is an intensive quantity (as it only depends on the ratio of two intensive parameters $\delta_0,\sigma^\nu$), such that, overall, the external field contribution to the bare Hamiltonian $H$ is \emph{extensive} in $N$, as it should.}.

The velocity $v$ of the traveling bump introduces a new time scale in the dynamics of the system; this time scale, written as $\tau = L/(M_0 v)$, is the time that the bump has to spend when travelling in the tunnel $L$, divided by the number $M_0$ of discrete-time realizations of the field itself (hence, in the time window $M_0 \tau$, the Gaussian bump moves in a given number of steps $M_0$ from the origin of the tunnel  to its end). In order to allow the network to stabilize within a given fixed point of the dynamics \eqref{eq:dyn} at each new position at discrete time $t$, \emph{i.e.} $r_0 + vt$, we require the number of steps of the dynamics to be $M\gg M_0$. In our simulations we inspect in detail the case where $M_0 = 100$ and $M=1000$, with a neural noise level fixed at $\beta=100$: see Figures \ref{fig:rec} and \ref{fig:tcycle}. 

Under these assumptions, the dynamical update rules simplify to
\begin{align}
    &P(\mathbf z| \mathbf s) = \prod_\mu \mathcal N(\sqrt N x_\mu(\mathbf s), \mathbb I_d \:\beta^{-1}),\\
    &P(\mathbf s = 1 | \mathbf z) = \prod_i \sigma \lr{\frac{\beta}{\sqrt N} \sum_\mu \eta^\mu_i \cdot  z_\mu + \beta h_i}.
\end{align}
We run 1000 simulations of a bat flying in the tunnel (each one with a different realization of the chart's representation) and results are shown in Figure \ref{fig:tcycle}:  the network is able to successfully follow the external field $\mathbf{h}(t)$ and, as the bat flyies along the tunnel, all the various place cells sequentially fire. Furthermore, as expected, the  empirical histogram of the width of the fields related to the active chart cells is approximately log-Normal, as shown in fig. \ref{fig:tunnelhist}.
\begin{figure}[!h]
    \centering
    \includegraphics[width=10cm]{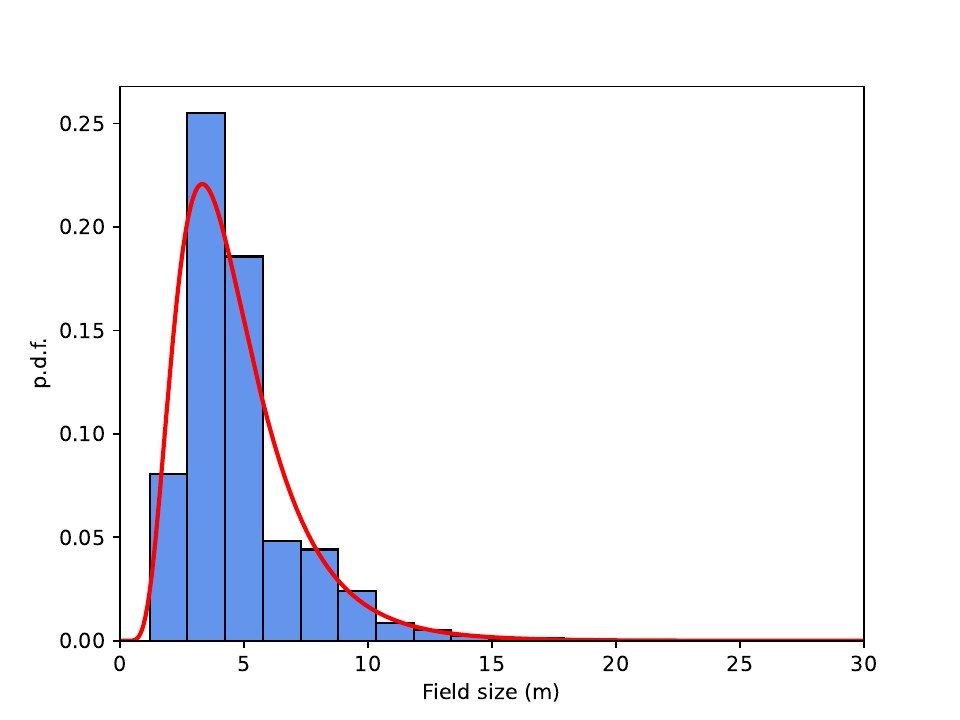}
    \caption{The histogram shows the result of $100$ simulated trajectories along the $200$m tunnel of the network, in terms of the width of the activation of the chart cells, which approximates a Log-Normal distribution with mean $e^{\mu_d}=4.2$m and $\sigma_d=0.484$m.\\
    Notice that this result is expected, since the chart cells were defined to map the tunnel with a similar Log-Normal distribution with $e^{\mu_0}=4.8$m and $\sigma_0=0.6$m.}
    \label{fig:tunnelhist}
\end{figure}

\subsection{Numerical experiments part one: self-driven motion}\label{internal-driven}
In the present setting, place cells reconstruct the environment they navigated and activate the corresponding chart cells as the animal goes through their several place fields. While simple place cells thus remap the familiar environment, they can not easily drive the movement of the animal within the environment \cite{PC-review1}  (simply because of the multiplicity of their place fields). However, a coupling between consecutive chart  cells -say $\mu$ and $\mu+1$- naturally accounts for a minimal model of bat's dynamics: this coupling could be easily implemented by adding it to the Hamiltonian of the original Battaglia-Treves model.    
\newline
As we distributed the charts sequentially in the tunnel, in the simulations, such that the bat encounters them one after the other (i.e., $\mu=1 \to \mu=2 \to ,... \to \mu=K$), we can extend the Battaglia-Treves Hamiltonian by introducing a coupling between chart cells (i.e., $J_z \sum_{\mu}^K z_{\mu}z_{\mu+1}$) as:
\begin{equation}\label{BTG}
H_{N,K}(s,z|\boldsymbol{\eta})= -\frac{1}{\sqrt{N}}\sum_i^N \sum_{\mu}^K \eta_i^{\mu} s_i z_{\mu} + \frac{(\lambda-1)}{2N}\sum_{i,j}^{N,N}s_i s_j  + J_z \sum_{\mu}^K z_{\mu}z_{\mu+1},    
\end{equation} 
in the standard hetero-associative way introduced by Amit \cite{AmitPnas} and Kosko \cite{Kosko}\footnote{See also \cite{BAM-1,BAM-1} for a recent re-visitation of the emergent computational capabilities of those networks analyzed with the techniques exploited in this paper.}.
\newline
We end up with a neural network that {\em drives} the bat correctly along the tunnel as the charts activated by the neurons $\boldsymbol{z}$ guides the bat, see Figure \ref{fig:tcycle3}, while the finer resolution place fields of neurons $\boldsymbol{s}$ represents on-line detailed locations crossed during the flight, see Figure \ref{fig:tcycle2}.
\begin{figure}[!h]
    \centering
    \includegraphics[width=15cm]{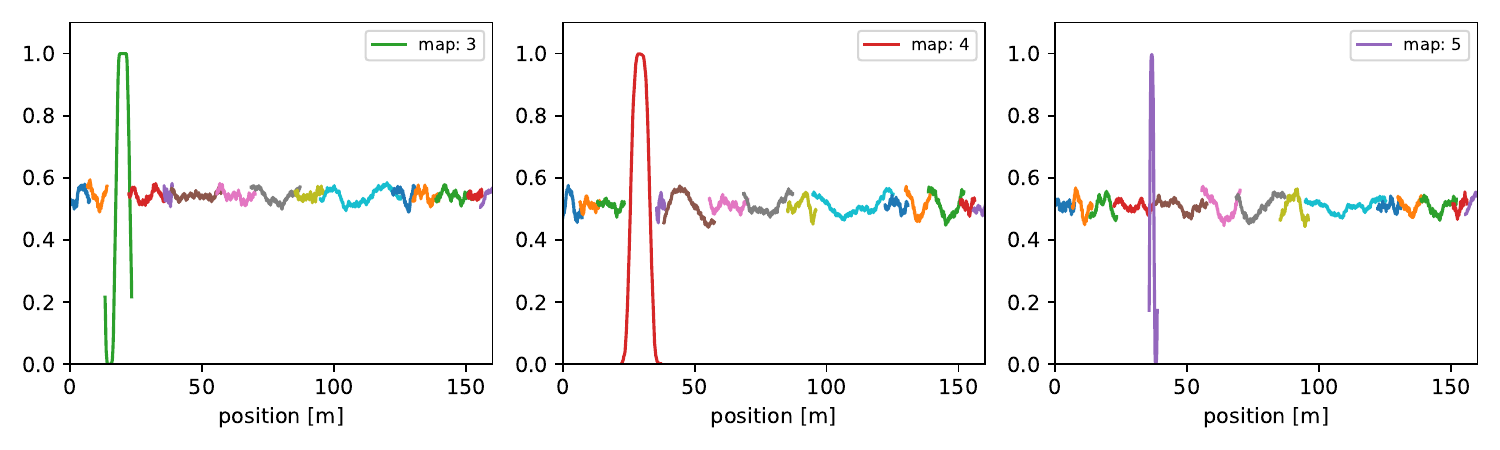}     
    \caption{Three consecutive snapshots of the activity of chart cells simulated during the dynamics  generated according to eq. \ref{BTG}. Note that, in the first panel, the bat retrieves map $3$ and thus the corresponding chart cell fires. This, in turn, drives the firing of the next chart cell, related to retrieval of map $4$ and so on.}
    \label{fig:tcycle3}
\end{figure}
\begin{figure}[!h]
    \centering
    \includegraphics[width=18cm]{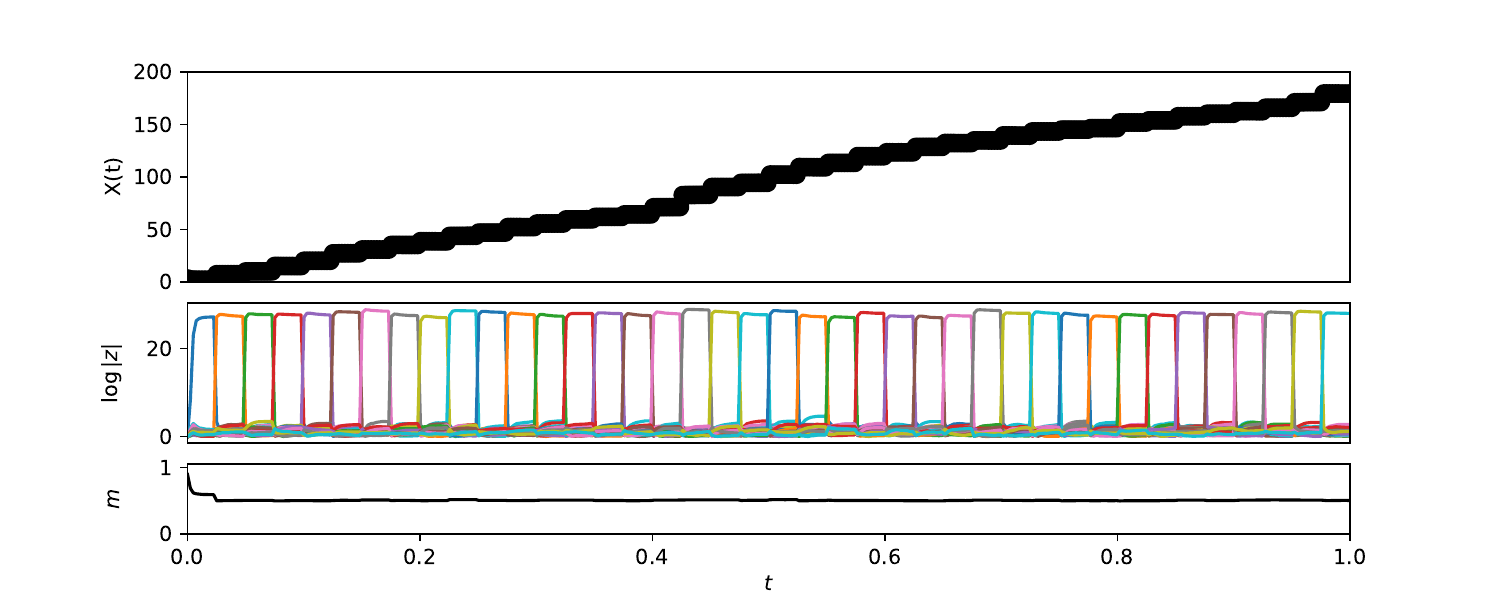}
    \caption{Simulation of the bat flying in the tunnel self-driven (i.e. driven by the coupling between chart cells $z_{\mu}z_{\mu+1}$ following the extension of the Battaglia-Treves model provided in eq. \ref{BTG}). \emph{Top} The position of the animal can now be thought of as moved along by its network and no external field $h(t)$ is required. \emph{Middle} Chart cells drive the motion of the animal by activating each other in a sequence; but -- note -- at a constant rate per cell, not at a constant bat speed. \emph{Bottom} The overall activity of the model remains around $m\sim 0.5$ as expected.\\
    The simulation features $N=8000$ visible neurons $\mathbf s$ and $K=40$ chart cells $\mathbf z$, at a temperature of $\beta^{-1}=10^{-3}$ and with $J_z \sim O(1)$.}
    \label{fig:tcycle2}
\end{figure} 
The update rule of the chart cell simply reads as
\begin{align}\label{eq:ztranslation}
    P(\mathbf z^{t+1}|\mathbf z^t, \mathbf s^t) = \prod_\mu \mathcal N(J_z z^t_{\mu+1} + \sqrt N \:\overline x^\mu(\mathbf s^t),\mathbb I_d \beta^{-1}).
\end{align}
Note that, in order for the rule \eqref{eq:ztranslation} to correctly reproduce the behavior of the bat, the timescale related to the $z$ variables must be slower w.r.t. the timescale related to the $s$ variables (such that the former may integrate information from the latter) in line with the common observation of timescales across different computational units (see e.g. \cite{Bernacchia1,Bernacchia2,Bernacchia3,Sandrone1}): in the simulations the ratio among these timescales has been set to $\sim O(10^{-1})$.

\bigskip
\section{Conclusions}\label{Conclusions}
Research on place cells, grid cells and, in general, the way hippocampus stores spatial representations of the environment is obviously a central theme in Neuroscience (see e.g. \cite{ThompsonRat,Solstad2006,Caswell2006,Batman1}).  Since the AGS milestone in the middle eighties \cite{AGS,Amit,amit1985storing,Coolen}, the {\em statistical mechanics of spin glasses} played a central role in the mathematical modeling of the collective, emergent features shown by large assemblies of neurons. hence it is not by surprise that computational and analytical investigations on place cells along statistical mechanics lines have been extensive (see e.g. \cite{Ale0,MonassonTreves2,MonassonPlaceCellsPRL}).
However, previous studies in this field based their findings on methodological tools (e.g. the so-called replica trick \cite{Amit,Coolen,Huang}) that, from a mathematical perspective, are somehow heuristic \cite{talagrand2003spin} thus raising the quest for confirmation by an independent, alternative, approach: in these regards, since the pioneering Guerra's works on spin glasses \cite{guerra_broken,HJ-Barra2010Aldo},  Guerra's interpolation techniques quickly became a  mathematical alternative to the replica trick also in neural networks (see e.g. \cite{GuerraNN,glassy,Fachechi1,Lenka1,Lenka2,Jean1,Jean2}) and these have been the underlying methodological leitmotif of the present paper too\footnote{For the sake of completeness, we point out that rigorous methods in the statistical mechanical formalization of neural networks are obviously not confined to Guerra interpolation and we just mention the books \cite{BovierReti,TalaReti} as classical milestones in this field, should the reader be interested.}. 
\newline
In particular, we studied analytically a variant of the Battaglia-Treves model \cite{battaglia1998attractor} for chart storage and reinstatement; a model that describes the collective emerging capabilities of place cells in rats exploring small boxes (where maps are encoded by uniform coverage of similarly sized place fields). Specifically, our variant of the network is equipped with McCulloch-Pitts neurons \cite{Coolen} and we study it both in the low-storage (where the number of stored charts scales sub-linearly w.r.t. the network size) and high-storage (where the number of stored charts scales linearly w.r.t. the network size) regimes \cite{Amit}. The former has been inspected by adapting the Hamilton-Jacobi (two-parameters) interpolation \cite{HJ-Barra2010Aldo,HJ-Barra2013Gino}, while the latter has been tackled by adapting the stochastic stability (one-parameter) interpolation \cite{GuerraNN,Fachechi1}. Further, we have also derived the same results independently via the replica trick in the high-storage limit (these calculations are reported in the Appendix  for the high storage regime but turn out to be coherent also with the low storage picture reported in the main text, simply by setting $\alpha=0$): confined to the replica symmetric level of description (fairly standard in neural networks \cite{Lenka}), as expected, we have obtained full agreement among the results stemming from  these nethodologies. Further, extensive Monte Carlo simulations for finite-size-scaling inspected the network behavior away from the thermodynamic limit and provided heuristic confirmation of the picture obtained under replica symmetry assumption. 
\newline
Interestingly, the integral representation of the partition function of the model that we use, within the Guerra approach, naturally highlights a dual representation of the Battaglia-Treves model: from a pure mathematical modelling perspective, this network has been shown to be equivalent to a bipartite network, equipped with a visible and a hidden layer of neurons. Neurons in the visible layer are those of the original model while neurons in the hidden layer have been shown to play as chart cells, selectively firing one per stored map or chart. 
\newline
Further, since simulations allow us to inspect models too cumbersome for a rigorous analytical treatment, we have numerically simulated an extension of the Battaglia-Treves model in order to model hippocampal cells in bats flying in large tunnels: the idea is in how charts are coded: now place fields are no longer uniformly distributed nor their width constant. Yet, by replacing the original uniform distributions for place field and chart width with the Log-Normal observed in recent experiments \cite{Batman1} (that prescribe an exponential distribution of place cells along the tunnel and a log-normal distribution of their width), we have shown that the Battaglia-Treves model succeeds in reproducing the reconstruction of a map by a bat flying within the tunnel.

\section*{Acknowledgments}
A.B. and M.S.C. acknowledge PRIN 2022 grant {\em Statistical Mechanics of Learning Machines: from algorithmic and information theoretical limits to new biologically inspired paradigms} n. 20229T9EAT funded by European Union - Next Generation EU. 
E.A. acknowledges financial support from PNRR MUR project PE0000013-FAIR and from Sapienza University of Rome (RM12117A8590B3FA, RM12218169691087).
A.B., E.A. are members of the  GNFM-INdAM which is acknowledged too as well as the INFN (Sezione di Lecce). A.B. is a member of INFN (Sezione di Lecce) that is also acknowledged.

\appendix 

\section{Free energy calculation via the Replica Trick}\label{Replicas}
As a check, in this appendix we provide a derivation for the free energy of the model also with the old fashioned replica trick, to guarantee that results do coincide with those obtained via the Guerra routes in the main text.
\newline
As the Battaglia-Treves model is ultimately a spin glass model, it is rather natural to tackle the evaluation of its free energy by the formula
\begin{equation}
   A(\alpha,\beta,\lambda)= \lim_{N\to \infty} \lim_{n \to 0}   \frac{Z_N(\beta,\lambda)-1}{nN}, 
\end{equation}
thus with the computational reward of bartering the evaluation of the logarithm of the partition function with its momenta: the price to pay for this reward is the \emph{blind} analytical extension toward the limit of zero replicas $n \to 0$ that must be performed (under the assumptions that the limits commute, i.e. $[\lim_{N\to \infty},\lim_{n \to 0}]=0$)\footnote{While this procedure has been proved to be correct for the harmonic oscillator of spin glasses, that is the Sherrington-Kirkpatrick model \cite{MPV}, by constraining the Parisi expression for its free energy among the Guerra \cite{guerra_broken} and the Talagrand \cite{TalaParisi} upper and lower bounds (and it is also true that, for that model,  $[\lim_{N\to \infty},\lim_{n \to 0}]=0$ 
 \cite{Mingione}), at present, in neural networks nor we do have a general prescription for broken replica theories a' la Parisi neither we can be sure (in the mathematical sense) that the formula we obtain by the replica trick are ultimately correct, each time we use it, hence the need for alternative approaches.}. 
\newline
As a consequence we are left to evaluate the moments of the partition function, often called {\em replicated partition function}, $Z^n$ that reads 
\begin{align}
    Z^n = \int d\mu(s^a) \exp \slr{\frac{\beta}{2N} \sum_{\mu,a} \lr{\sum_{i}\eta^\mu_i s^a_i}^2 - \frac{\beta}{2N} (\lambda -1) \sum_a \lr{\sum_{i} s^a_i}^2};
\end{align}
As standard we assume the network lies in the basin of attraction of one map, say $\eta^1_i$, hence we split the signal term (provided by that map) from the background noise (resulting from all the other maps $\nu \neq 1$) by rewriting the replicated partition function as
\begin{align}
    Z^n = \int d\mu(s^a) \exp \slr{\frac{\beta}{2N} \sum_a \lr{\sum_{i}\eta^1_i s^a_i}^2 + \frac{\beta}{2N} \sum_{\mu>1,a} \lr{\sum_{i}\eta^\mu_i s^a_i}^2 - \frac{\beta}{2N} (\lambda -1) \sum_a \lr{\sum_{i} s^a_i}^2}.
\end{align}
After inserting the order parameters via the integral representations of the delta functions \cite{Coolen} we end up with
\begin{align}
    Z^n = &\int d\mu(s^a) dm^a dr^a d^2 x_{\mu a} d^2 t_{\mu a} d^2 \overline{x}_{a} d^2 \overline{t}_{a}\nonumber\\
    &\exp (
    iN\sum_a r_a m_a - i \sum_a r_a \sum_i s^a_i + iN\sum_a \overline t_a \overline x_a - i \sum_a \overline t_a \sum_i \eta^1_i s^a_i +\nonumber\\
    &+ i \sum_{a,\mu>1} t_{\mu a} x_{\mu a} - \frac{i}{\sqrt N} \sum_{a,\mu>1} t_{\mu a} \sum_i \eta^\mu_i s^a_i + \frac{\beta N}{2} \sum_a \overline x^2_a + \frac{\beta}{2} \sum_{a,\mu>1} x^2_{\mu a} - \frac{\beta}{2} (\lambda-1) N \sum_a m_a^2)
\end{align}
The expectation over the quenched noise $\eta_{\mu>1}$ is easily computed and it reads
\begin{align}
    \expect{\eta_{\mu>1}} \exp\slr{-\frac{i}{\sqrt N} \sum_{a,\mu>1} t_{\mu a} \sum_i \eta^\mu_i s_i^a} = \exp\slr{-\frac{1}{2Nd} \sum_{i,\mu>1} \lr{\sum_a t_{\mu a} s^a_i}^2 + \mathcal{O}(N^{-2})}.
\end{align}
We therefore introduce the replica overlap order parameter $q_{ab}$ as
\begin{align}
    q_{ab} = \frac{1}{N}\sum_i s^a_i s^b_i.
\end{align}
The quenched replicated partition function then reads
\begin{align}
    \avg{Z^n} = &\int dm_a dr_a d^2 \overline{x}_{a} d^2 \overline{t}_{a} dq_{ab} dp_{ab} \nonumber\\
    &\exp (
    iN\sum_a r_a m_a + iN\sum_a \overline t_a \overline x_a + iN\sum_{ab} p_{ab} q_{ab} + \frac{\beta N}{2} \sum_a \overline x^2_a -  N\frac{\beta}{2} (\lambda-1) \sum_a m_a^2\nonumber\\
    &+K\ln \int  d^2 x_{\mu a} d^2 t_{\mu a} \exp\lr{i \sum_{a} t_{a} x_{a} - \frac{1}{2d}\sum_{ab} q_{ab} t_a t_b + \frac{\beta}{2} \sum_{a} x^2_{a} }\nonumber\\
    &+N\avg{\ln\lr{ \int d\mu(s^a)\exp\lr{- i \sum_a r_a s^a - i \sum_a \overline t_a \eta^1_i s^a - i \sum_{ab} p_{ab} s^a s^b} }}
    )
\end{align}
The Gaussian integrals in the third line of the latter equation can be computed and give
\begin{align}
    K\ln \int  d^2 x_{\mu a} d^2 t_{\mu a} \exp\lr{i \sum_{a} t_{a} x_{a} - \frac{1}{2d}\sum_{ab} q_{ab} t_a t_b + \frac{\beta}{2} \sum_{a} x^2_{a} } = -\frac{Kd}{2} \ln \det\lr{\delta_{ab}-\frac{\beta}{d} q_{ab}}.
\end{align}
In the thermodynamic limit, within the replica trick, the free energy of the Battaglia-Treves in the high storage regime of charts is given by the following extremal condition
\begin{align}
    \lim_{n\to0}\lim_{N\to\infty}\frac{\ln\avg{Z^n}}{n} = \lim_{n\to0}extr_{m,r,q,p,\overline t,\overline x} \frac{\Phi(m_a,r_a,q_{ab},p_{ab},\overline t_a,\overline x_a)}{n},
\end{align}
where 
\begin{align}
    \Phi =& \:iN\sum_a r_a m_a + iN\sum_a \overline t_a \overline x_a + iN\sum_{ab} p_{ab} q_{ab} + \frac{\beta N}{2} \sum_a \overline x^2_a -  N\frac{\beta}{2} (\lambda-1) \sum_a m_a^2+\nonumber\\
    &-\frac{Pd}{2} \ln \det\lr{\delta_{ab}-\frac{\beta}{d} q_{ab}} + N\avg{\ln\lr{ \int d\mu(s^a)\exp\lr{- i \sum_a r_a s^a - i \sum_a \overline t_a \eta^1_i s^a - i \sum_{ab} p_{ab} s^a s^b} }}
\end{align}
We proceed with the \emph{RS} ansatz \cite{MPV}:
\begin{align}
    &q_{ab} = q_1 \delta_{ab} + q_2 (1-\delta_{ab})\\
    &p_{ab} = \frac{i}{2} p_1 \delta_{ab} + \frac{i}{2} p_2 (1-\delta_{ab})\\
    &m_a = \overline m\\
    &\overline t_a = i \overline t\\
    &\overline x_a = i \overline x\\
    &r_a = i\frac{\beta}{2} r.
\end{align}
The \emph{RS} $\Phi$ functional then reads
\begin{align}
    \lim_{n\to0}\frac{\Phi^{RS}}{n} = -\frac{\beta}{2} r \overline m - \overline t \overline x + \frac{1}{2} (p_1 q_1 - p_2 q_2) + \frac{\beta}{2} \overline x^2 - \frac{\beta}{2} (\lambda-1) \overline m^2 -
    \lim_{n\to0}\frac{1}{n}\lr{\frac{\alpha d}{2}\ln\det\lr{\mathbf 1 - \frac{\beta}{d}\mathbf q}_{RS} -\: G_{RS}},
\end{align}
where $G_{RS}$ is
\begin{align}
    G_{RS} = \avg{\ln \int d\mu(s^a)\exp\lr{- i \sum_a r_a s^a - i \sum_a \overline t_a \eta s^a - i \sum_{ab} p_{ab} s^a s^b} }_\eta.
\end{align}
The latter reads
\begin{align}
    G_{RS} = n \avg{\int Dz \ln\lr{ 1 + \exp\lr{\frac{\beta}{2} r +\overline t \eta + \sqrt{p_2} z + \frac{1}{2}(p_1-p_2)} }}_\eta,
\end{align}
where we used $\int d\mu(s) = \sum_{s=0,1}$, as we restricted our investigation to Boolean spins. The determinant is
\begin{align}
    \ln\det\lr{\mathbf 1 - \frac{\beta}{d}\mathbf q}_{RS} = n \lr{\ln\lr{1-\frac{\beta}{d}(q_1-q_2)}-\frac{\frac{\beta}{d} q_2}{1-\frac{\beta}{d}(q_1-q_2)}} + \mathcal O(n^2).
\end{align}
Summing all the contributions we end up with
\begin{eqnarray}\nonumber
    A_{RS} &=& - \frac{\beta}{2} r \overline m - \overline t \overline x - \frac{1}{2} (p_1 q_1 - p_2 q_2) + \frac{\beta}{2} \overline x^2 - \frac{\beta}{2} (\lambda-1)\overline m^2 -
    \frac{\alpha d}{2} \ln\lr{1-\frac{\beta}{d}(q_1-q_2)}-\frac{\alpha \beta}{2} \frac{q_2}{1-\frac{\beta}{d}(q_1-q_2)}   \\ \label{Maruzzella}
    &+& \avg{\int Dz \ln\lr{ 1 + \exp\lr{\frac{\beta}{2} r +\overline t \eta + \sqrt{p_2} z + \frac{1}{2}(p_1-p_2)} }}_\eta.
\end{eqnarray}
To get rid off the auxiliary integration variables we start to extremize the free energy, namely we perform
\begin{align}
    &\pder[A^{RS}]{\overline m} = 0,\nonumber\\
    &\pder[A^{RS}]{\overline x} = 0,\nonumber
\end{align}
from which we obtain
\begin{align}
    &\frac{1}{2} r = (1-\lambda)\overline m\\
    &\overline t = \beta \overline x.
\end{align}
By inserting the above expressions in the free energy expression at the r.h.s. of eq. \eqref{Maruzzella}, the latter reads
\begin{align}
    \label{eq:ARS_ReplicaTrick}
    A_{RS} =& - \frac{\beta}{2} (1-\lambda)\overline m^2 - \frac{\beta}{2} \overline x^2 - \frac{1}{2} \lr{p_1 q_1 - p_2 q_2} - \frac{\alpha d}{2} \ln \lr{1-\frac{\beta}{d}(q_1-q_2)} + \frac{\alpha\beta}{2} \frac{q_2 }{1-\frac{\beta}{d}(q_1-q_2)} +\nonumber\\
    &+ \mathbb E_{\eta} \int Dz \ln \lr{1+ \exp\lr{ \beta (1-\lambda) \overline m + \beta \overline x \eta + \frac{1}{2}(p_1-p_2) + \sqrt{p_2} \:z }}.
\end{align}
Finally we eliminate also  the  auxiliary parameters $p_1$ and $p_2$ by differentiating $A_{RS}$ w.r.t. $q_1$ and $q_2$, \emph{i.e.} $\pder[A^{RS}]{q_1} = 0, \pder[A^{RS}]{q_2} = 0$, 
\begin{align}
    &\frac{1}{2}p_2 = \frac{\alpha \beta^2}{2d} \frac{q_2}{\lr{1-\frac{\beta}{d}(q_1-q_2)}^2},\\
    &\frac{1}{2}(p_1-p_2) = \frac{\frac{\alpha \beta}{2}}{1-\frac{\beta}{d}(q_1-q_2)}.
\end{align}
By inserting these expression in the r.h.s. of eq. \eqref{eq:ARS_ReplicaTrick} we end up with
\begin{align}
    A^{RS}(\alpha,\beta,\lambda) &= - \frac{\beta}{2} (1-\lambda)\overline m^2 - \frac{\beta}{2} \overline x^2 - \frac{\alpha \beta}{2} \frac{q_1-\frac{\beta}{d}(q_1-q_2)^2}{\lr{1-\frac{\beta}{d}(q_1-q_2)}^2} - \frac{\alpha d}{2}\ln\lr{1-\frac{\beta}{d}(q_1-q_2)}+\frac{\alpha\beta}{2} \frac{q_2}{1-\frac{\beta}{d}(q_1-q_2)} +\nonumber\\
    &+\avg{\int Dz \ln\lr{ 1 + \exp\lr{\beta(1-\lambda)\overline m +\beta \overline x \eta + \beta \frac{\frac{\alpha}{2} + \sqrt{\frac{\alpha q_2}{d}}\:z}{1-\frac{\beta}{d}(q_1-q_2)}}}}_\eta,
\end{align}
that is the same expression we obtained by the Guerra's route based on stochastic stability in the high storage limit (and that collapses to the low-storage expression obtained by the Guerras' route based on the mechanical analogy simply by forcing $\alpha$ to be zero): see Theorem $1$ in the main text.

\section{Free energy calculation via the Interpolation Method}\label{GuerraAPP}

Hereafter we prove the results reported in Propositions \ref{StreamingHSR} and \ref{CauchyHS} and in Theorem \ref{TzeTze}.
\newline
\newline
\begin{proof}(of Proposition  \ref{StreamingHSR}) 
Let us start with  a slightly more general definition of the interpolating Hamiltonian $\mathcal{H}(t)$ provided in Definition \ref{InterpoAcca}:
\begin{equation}
-\beta \mathcal{H}(t) = \frac{a_1(t)}{2} N \beta x_1^2 + \frac{a_2(t)}{2} N \beta (1-\lambda) m^2 + a_3(t)\sqrt{\frac{\beta}{N}}\sum_{i,\mu > 1}^{N,K} \eta_i^{\mu} s_i z_{\mu} + N\phi(t),     
\end{equation}
where we kept the specific interpolation coefficient as general functions $a_i(t), \ i \in (1,2,3),$ and let us perform the evaluation of the  $t$-derivative of $\mathcal A(t)$, this is computed as 
\begin{align}
    \mathcal A'(t) = \lim_{N\to \infty} N^{-1}\lr{ \frac{a_1'}{2} N\beta \avg{x_1^2} + \frac{a_2'}{2} N\beta \lr{1-\lambda} \avg{m^2} + a_3'\sqrt{ \frac{\beta}{N}} \sum_{i,\mu>1} \avg{\eta^\mu_i s_i x_\mu} + N \avg{\phi'}},
\end{align}
where we used the superscript $'$ to indicate the derivative with respect to $t$ in order to lighten the notation.\\
In order to evaluate $\avg{\eta^\mu_i s_i x_\mu}$ we exchange the expectation over the maps $\mathbb E_\eta$ with a $d$-variate Gaussian measure with the same mean and variance: $\mathcal N(0, \mathbf 1_d/d)$. This is done by noticing that $\avg{\eta^\mu_i s_i x_\mu}$ can be written as
\begin{align*}
    \sum_{i,\mu>1} \avg{\eta^\mu_i s_i x_\mu} = \sum_{i,\mu>1} \mathbb E_\eta \left[\eta^\mu_i \omega \lr{s_i x_\mu}\right] \sim \sum_{i,\mu>1}\mathbb E_z\lr{z^\mu_i s_i x_\mu},
\end{align*}
with $z^\mu_i\sim \mathcal N(0, \mathbf 1_d/d)$ and noticing that the introduced error, computed as $|\mathbb E_{\eta}\lr{\eta^\mu_i s_i x_\mu}-\mathbb E_z\lr{z^\mu_i s_i x_\mu}|$, vanishes in the thermodynamic limit by virtue of the Stein lemma \cite{Barra-JSP2008}. The variance of order $1/d$ is required by the condition $\mathbb E_\eta (\eta^2) = 1$. \\
Hence we are able to compute the $\avg{\eta^\mu_i s_i x_\mu}$ term by using the Gaussian integration by parts (namely by exploiting the Wick-Isserlis theorem \cite{GuerraNN}):
\begin{align}
    \sum_{i,\mu>1} \avg{\eta^\mu_i s_i x_\mu} 
    = \frac{1}{d} \sum_{i,\mu>1} \mathbb E_\eta \: \sum_{q=1}^d \pder[\omega \lr{s_i x_\mu^{(q)}}]{\eta^{\mu,(q)}_i} = \frac{b}{d} \:\sqrt{ \frac{\beta}{N}} NK \lr{\avg {q_{11} p_{11}} - \avg{q_{12} p_{12}}},
\end{align}
where, in the first equality, we use the Wick-Isserlis theorem component-wise, with components $x_\mu=\lr{x_\mu^{(q)}}_{q=1,..,d}$ and $\eta^\mu_i=\lr{\eta_i^{\mu,(q)}}_{q=1,..,d}$ .
This quantity is computed by observing that the partial derivative of the Boltzmann average $\omega (y)$ w.r.t. $\eta$ produces the difference of two averages which, apart from global multiplying factors, corresponds to the average of the squared argument, $\omega(y^2)$, and the square of the Boltzmann average, $\omega^2(y)$. 
In our case we use this relation component-wise to get
\begin{align}
    \sum_{q=1}^d \pder[\omega \lr{s_i x_\mu^{(q)}}]{\eta^{\mu,(q)}_i} = b\sqrt{ \frac{\beta}{N}} \sum_{q=1}^d \left[ \omega \lr{(x^{(q)}_\mu s_i)^2} - \omega^2 \lr{x^{(q)}_\mu s_i} \right].
\end{align}
The first term reads
\begin{align}
    \sum_{q=1}^d \omega \lr{(x^{(q)}_\mu s_i)^2} = \omega \lr{\sum_{q=1}^d  \lr{x^{(q)}_\mu}^2 \lr{s_i}^2} = \omega \lr{\lr{x_\mu}^2 \:\lr{s_i}^2};
\end{align}
in the last equality we used $\lr{x_\mu}^2 = \sum_{q=1}^d \lr{x^{(q)}_\mu}^2$, that is the definition of the Euclidean norm of $x_\mu$. This term can be rewritten as the scalar product of the first field replica  $x_\mu^1$ with itself times the square of the first spin replica $s_i^1$:
\begin{align}
    \omega \lr{\lr{x_\mu}^2\: \lr{s_i}^2} = \omega \lr{x_\mu^1 \cdot x_\mu^1\: s_i^1 s_i^1}.
\end{align}
In order to deal with the second term, we introduce the distinct spin replicas $s_i^1,s_i^2$ and the field replicas $x_\mu^{(q),1}, x_\mu^{(q),2}$ component-wise, namely:
\begin{align}
    \sum_{q=1}^d \omega^2 \lr{x^{(q)}_\mu s_i} = \sum_{q=1}^d \omega\lr{s_i^1 s_i^2\: x_\mu^{(q),1} x_\mu^{(q),2}},
\end{align}
which thus reads
\begin{align}
    \sum_{q=1}^d \omega\lr{s_i^1 s_i^2\: x_\mu^{(q),1} x_\mu^{(q),2}} =  \omega\lr{s_i^1 s_i^2 \sum_{q=1}^d x_\mu^{(q),1} x_\mu^{(q),2}} = \omega\lr{s_i^1 s_i^2 \: x_\mu^1 \cdot x_\mu^2}.
\end{align}
Finally, collecting all terms together we are able to write
\begin{align}
    \sum_{i,\mu>1} \mathbb E_\eta \: \sum_{q=1}^d \pder[\omega \lr{s_i x_\mu^{(q)}}]{\eta^{\mu,(q)}_i} = 
    \sum_{i,\mu>1} \mathbb E_\eta \lr{\omega \lr{s^1_is^1_i\: x^1_\mu \cdot x^1_\mu} - \omega \lr{s^1_i s^2_i \: x^1_\mu \cdot x^2_\mu}}.
\end{align}
The latter allows us to introduce the replica overlaps order parameters $Nq_{11}=\sum_i s^1_i s^1_i$, $Nq_{12}=\sum_i s^1_i s^2_i$, $K p_{11}=\sum_\mu x^1_\mu \cdot x^1_\mu$, $Kp_{12}=\sum_\mu x^1_\mu \cdot x^2_\mu$, to get
\begin{align}
    &\sum_{i,\mu>1} \mathbb E_\eta \omega \lr{s^1_is^1_i \:x^1_\mu \cdot x^1_\mu} = NK\avg {q_{11} p_{11}},\\
    &\sum_{i,\mu>1} \mathbb E_\eta \:\omega \lr{s^1_i s^2_i \: x^1_\mu \cdot x^2_\mu} = NK\avg{q_{12} p_{12}}.
\end{align}
The derivative of $\mathcal A(t)$, in their terms, reads as
\begin{align}
    \mathcal A' = \frac{a_1'}{2}\beta \avg{x_1^2} + \frac{a_2'}{2} \beta \lr{1-\lambda} \avg{m^2} + \frac{\beta \alpha}{2d} \:a_3 a_3' \lr{\avg {q_{11} p_{11}} - \avg {q_{12} p_{12}}} + \avg{\phi'},
\end{align}
where the load of the model $\alpha=\lim_{N\to \infty} K/N$ has been introduced.The latter equation contains only two-points correlation functions, \emph{e.g.} $\avg{x_1^2},\avg{q_{12}},..$, due to the presence of second order interactions in the Hamiltonian: these terms are generally hard to compute and the general strategy of the Guerra's interpolation technique is to balance them out  with appropriate counter-terms that can be obtained by differentiating the $\avg{\phi(t)}$ functional. For example, the two-point function $\avg{x_1^2}$ can be produced by a term of the form $\sum_{\mu>1} x_\mu^2$, while a term like $\sum_i h_i s_i$ (with $h_i \in \mathcal N(0,1)$) produces a contribution of the form of $\avg{q_{11}} - \avg{q_{12}}$ (apart from global factors). This motivates our choice of the $\phi(t)$ functional, which is thus expressed as a sum of one-body and two-bodies terms as follows: 
\begin{align}
    N\phi(t) = \phi_1(t)\sum_{\mu>1} x_\mu^2 + \phi_2(t) \sum_\mu \rho_\mu x_\mu + \phi_3(t) \sum_i h_i s_i + \phi_4(t) \sum_i \eta^1_i s_i + \phi_5(t) \sum_i s_i^2 + \phi_6(t) \sum_i s_i,
\end{align}
where the $t-$dependence is delegated to the parameters $\phi_1(t),\phi_2(t),\phi_3(t),\phi_4(t),\phi_5(t),\phi_6(t)$, which have to respect the boundary conditions $\phi_1(1)=\phi_2(1)=\phi_3(1)=\phi_4(1)=\phi_5(1)=\phi_6(1)=0$. The derivative of $\phi(t)$ then reads
\begin{align}
    \phi' = \phi_1'\alpha\avg{p_{11}} + \phi_2 \phi_2' \alpha\lr{\avg{p_{11}} - \avg{p_{12}}} + \phi_3 \phi_3' \lr{\avg{q_{11}} - \avg{q_{12}}} + \phi_4' \avg{x_1} + \phi_5' \avg{q_{11}} + \phi_6' \avg{m}.
\end{align}
Collecting all the terms, the derivative of $\mathcal A(t)$ finally reads
\begin{align}\label{eq:Ader}
    \mathcal A' = \frac{a_1'}{2} \beta \avg{x_1^2} + &\frac{a_2'}{2} \beta \lr{1-\lambda}\avg{m^2} + \alpha\beta \frac{a_3 a_3'}{d} \lr{\avg {q_{11} p_{11}} - \avg {q_{12} p_{12}}} + \nonumber\\
    &\phi_1'\alpha\avg{p_{11}} + \phi_2 \phi_2' \alpha\lr{\avg{p_{11}} - \avg{p_{12}}} + \phi_3 \phi_3' \lr{\avg{q_{11}} - \avg{q_{12}}} + \phi_4' \avg{x_1} + \phi_5' \avg{q_{11}} + \phi_6' \avg{m}.
\end{align}
Next, we treat the order parameters expectations in terms of their fluctuations, under the RS hypothesis (remember Definition \ref{selfaverage}) these fluctuations are vanishing in the thermodynamic limit $N\to\infty$, namely
\begin{align}
    &\avg{\Delta^2 x} = \avg{(x_1-\overline x)^2}\to 0,\\
    &\avg{\Delta^2 m} = \avg{(m-\overline m)^2}\to 0,\\
    &\avg{\Delta p_1 \Delta q_1}= \avg{(q_{11}-\overline q_1)(p_{11}-\overline p_1)}\to 0,\\
    &\avg{\Delta p_2 \Delta \overline q_2} = \avg{(q_{12}-\overline q_2)(p_{12}-\overline p_2)}\to 0,
\end{align}
where $\overline x =\avg{x}$, $\overline m = \avg{m}$ and $\overline p_1 = \avg{p_{11}}$, $\overline p_2 = \avg{p_{12}}$, $\overline q_1 = \avg{q_{11}}$, $\overline q_2 = \avg{q_{12}}$.\\
Collecting the homogeneous terms in equation \eqref{eq:Ader} and systematically eliminating the terms containing the expectation values of the order parameters \eqref{eq:one_order}-\eqref{eq:three_order}, we obtain the following system of coupled differential equations to be solve in the interval $t\in(0,1)$:
\begin{align}
    &\phi_6' + \beta (1-\lambda) a_2' \overline m = 0,\\
    &\phi_4' + \beta a_1' \overline x = 0,\\
    &\phi_1' + \beta \frac{a_3 a_3'}{d} \overline q_1  + \phi_2 \phi_2' = 0,\\
    &\phi_5' + \alpha \beta \frac{a_3 a_3'}{d} \overline p_1 + \phi_3 \phi_3' = 0,\\
    &\phi_2 \phi_2' + \beta \frac{a_3 a_3'}{d} \overline q_2 = 0,\\
    &\phi_3 \phi_3' + \alpha \beta \frac{a_3 a_3'}{d} \overline p_2 = 0.
\end{align}
Without loosing generality we can also impose the conditions
\begin{align}
    &a_1' = 1,\\
    &a_2' = 1,\\
    &a_3 a_3' = 1/2.
\end{align}
The solutions reads 
\vspace{2mm}\\
\begin{minipage}{.4\textwidth}
    \begin{subequations}
    \begin{align*}[left = {\empheqlbrace}]
    &\phi_1(t) = \frac{\beta}{2d}(\overline q_1-\overline q_2)(1-t),\\
    &\phi_2(t) = \sqrt{\frac{\beta}{d} \overline q_2\:(1-t)},\\
    &\phi_3(t) = \sqrt{\frac{\alpha \beta}{d} \overline p_2\: (1-t)},\\
    &\phi_4(t) = \beta \overline x\: (1-t),\\
    &\phi_5(t) = \frac{\alpha\beta}{2d}(\overline p_1-\overline p_2)(1-t),\\
    &\phi_6(t) = \beta (1-\lambda) \overline m\: (1-t).
    \end{align*}
    \end{subequations}
\end{minipage}
\begin{minipage}{.4\textwidth}
    \begin{subequations}
    \begin{align*}[left = {\empheqlbrace}]
    &a_1(t) = t,\\
    &a_2(t) = t,\\
    &a_3(t) = \sqrt{t},
    \end{align*}
    \end{subequations}
\end{minipage}\\\vspace{2mm}\\
The boundary conditions are finaly derived from the interpolation conditions $\phi(1)=0$ and $a_1(0) = a_2(0) = a_3(0) = 0$. 
\newline
Once all the auxiliary functions have been made fully explicit, we can sum up all the contributions to the $t-$derivative of $\mathcal A$ that then reads
\begin{align}
    \mathcal A' = \frac{\beta}{2} (1-\lambda)\lr{\avg{\Delta^2 m} - \overline m^2} + \frac{\beta}{2} \lr{\avg{\Delta^2 x} - \overline x^2} + &\frac{\alpha \beta}{2d} \lr{\avg{\Delta p_1 \Delta q_1} - \overline p_1 \overline q_1} - \frac{\alpha \beta}{2d}\lr{\avg{\Delta p_2 \Delta q_2} - \overline p_2 \overline q_2}. 
\end{align}
Under the RS assumption, the fluctuations of the order parameters vanish and the $t-$derivative of $\mathcal A$ becomes the expression reported in Proposition \ref{StreamingHSR}.

\end{proof}
\begin{proof}(of Proposition \ref{CauchyHS})
We are left with the Cauchy condition to calculate: the interpolating free energy has to be evaluated at $t=0$ but, before doing that, we write explicitly the partition function at $t=0$ for the sake of clearness: 
\begin{align}\nonumber
    \mathcal Z(t=0) = \int d\mu(s) Dx \:\exp\biggl(\phi_1(0) \sum_{\mu>1} x_\mu^2 + \phi_2(0) \sum_{\mu>1}\rho_\mu x_\mu &+\phi_3(0) \sum_i h_i s_i + \phi_4(0) \sum_i \eta_i^1 s_i \\
    &+\phi_5(0) \sum_i s_i^2 + \phi_6(0) \sum_i s_i\biggl).
\end{align}
After performing the Gaussian integrals in $D^d x = \prod_{\mu>1} D^d x_\mu$, where $Dx_\mu = \frac{d^d x_\mu}{\sqrt{2\pi}} e^{-\frac{x_\mu^2}{2}}$ is the Gaussian measure, and using Boolean neurons $s_i = \{0,1\}$, we end up with:
\begin{align}
    \mathcal Z(t=0) = \slr{ \prod_{\mu>1} \lr{\frac{2\pi}{1-2\phi_1(0)}}^{d/2} \exp\lr{\frac{1}{2} \frac{\phi^2_2(0) \rho_\mu^2}{1-2\phi_1(0)}} }\times \prod_i\slr{ 1+ \exp\lr{\phi_3(0) h_i + \phi_4(0) \eta_i^1 + \phi_5(0) + \phi_6(0)}}.
\end{align}
We can finally rearrange all the terms as provided by Proposition \ref{CauchyHS}.\end{proof}



\printbibliography

\end{document}